\documentclass[12pt]{colt2019} 


\title[Nearly Optimal Dynamic $k$-Means Clustering for High-Dimensional Data]{Nearly Optimal Dynamic $k$-Means Clustering \\for High-Dimensional Data}
\usepackage{times}
\usepackage{amsmath}
\usepackage{amssymb}
\usepackage{algorithmic}
\usepackage{algorithm}
\usepackage{color}
\usepackage[english]{babel}
\usepackage{graphicx}
\usepackage{wrapfig}
\usepackage{epstopdf}
\usepackage{url}
\usepackage{graphicx}
\usepackage{color}
\usepackage{epstopdf}
\usepackage{scrextend}
\usepackage[T1]{fontenc}
\usepackage{bbm}
\usepackage{comment}
\usepackage{xcolor}
\newcommand{\codecmt}[1]{\textcolor{blue!10!black!80!green!70!}{{\small$\triangleright$~#1}}}



\newtheorem{claim}[theorem]{Claim}
\newtheorem{fact}[theorem]{Fact}

\newcommand{\wh}{\widehat}
\newcommand{\wt}{\widetilde}

\newcommand{\eps}{\epsilon}

\newcommand{\R}{\mathbb{R}}

\newcommand{\norm}[1]{\left\lVert#1\right\rVert}
\renewcommand{\varepsilon}{\epsilon}
\renewcommand{\tilde}{\wt}
\renewcommand{\hat}{\wh}

\DeclareMathOperator*{\E}{{\bf {E}}}
\DeclareMathOperator*{\Var}{{\bf {Var}}}

\DeclareMathOperator{\OPT}{\mathsf{OPT}}

\DeclareMathOperator{\poly}{poly}

\DeclareMathOperator{\dist}{dist}

\DeclareMathOperator{\cost}{cost}

\DeclareMathOperator{\diam}{diam}

\newenvironment{proofof}[1]
{
	\par\noindent{\bfseries\upshape Proof of #1\ }%
}%
{\jmlrQED}

{\jmlrQED}

{\jmlrQED}

\allowdisplaybreaks

\makeatletter
\newcommand*{\RN}[1]{\expandafter\@slowromancap\romannumeral #1@}
\makeatother
\coltauthor{%
 \Name{Wei Hu} \Email{huwei@cs.princeton.edu}\\
 \addr Princeton University
 \AND
 \Name{Zhao Song} \Email{magic.linuxkde@gmail.com}\\
 \addr %
 \AND
 \Name{Lin F. Yang} \Email{lin.yang@princeton.edu}\\
 \addr Princeton University%
 \AND
 \Name{Peilin Zhong} \Email{pz2225@columbia.edu}\\
 \addr Columbia University%
}

\begin{document}

\maketitle

\begin{abstract}%

We consider the $k$-means clustering problem in the dynamic streaming setting, where points from a discrete Euclidean space $\{1, 2, \ldots, \Delta\}^d$ can be dynamically inserted to or deleted from the dataset. For this problem, we provide a one-pass coreset construction algorithm using space $\tilde{O}(k\cdot \poly(d, \log\Delta))$, where $k$ is the target number of centers. To our knowledge, this is the first dynamic geometric data stream algorithm for $k$-means using space polynomial in dimension and nearly optimal (linear) in $k$.

\end{abstract}

\begin{keywords}%
  clustering, $k$-means, dynamic data streams, coreset
\end{keywords}


\section{Introduction}

Clustering is one of the central problems in unsupervised learning.
The idea is to partition data points into clusters in the hope that points in the same cluster are similar to each other and points in different clusters are dissimilar.
One of the most important approaches to clustering is \emph{$k$-means}, which has been extensively studied for more than 60 years and has a wide range of applications (see e.g. \citep{j10} for a survey).
Given a set of points $Q\subset \R^d$, the $k$-means problem asks for a set of $k$ \emph{centers} $Z \subset \R^d$ such that the sum of squares of distances between data points to their closest centers is minimized, i.e., it tries to solve $\min\limits_{Z \subset \R^d, |Z|=k} \cost(Q, Z)$, where $\cost(Q, Z)$ is a cost function defined as:
\begin{equation*}
\cost(Q, Z) := \sum_{q\in Q} \min_{z\in Z} \dist^2(q, z).
\end{equation*}
Here $\dist(\cdot, \cdot)$ stands for the Euclidean distance.


A major challenge in dealing with massive datasets is that the entire input data can be too large to be stored.
A standard model of study in such settings is the \emph{streaming} model, where data points arrive and are processed one at a time, and only a small amount of useful information (i.e., a \emph{sketch}) about the data is maintained.
See~\citep{muthukrishnan2005data} for an introduction to the streaming model.

In this paper we study the $k$-means problem over \emph{dynamic data streams}~\citep{i04}, where data points from a discrete space $\{1, 2, \ldots,\Delta\}^d$ can be either inserted to or deleted from the dataset.
A standard approach to solving $k$-clustering problems like $k$-means and $k$-median in the streaming setting is to maintain an \emph{$\epsilon$-coreset}, which is a small number of (weighted) points whose cost with respect to any $k$ centers is a $(1+\epsilon)$-approximation to the cost of the entire dataset on the same $k$ centers.
As a consequence, at the end of the stream, we only need to find an approximate $k$-means solution on the coreset, which is automatically an approximate solution on the entire dataset.
Hence our goal is to design an efficient method to maintain an $\epsilon$-coreset over a dynamic data stream using as small space as possible.

\subsection{Our Result}


\begin{theorem}[Main theorem, restatement of Theorem~\ref{thm:main}]\label{thm:main_restate}
	Let $\epsilon\in (0,1/2)$, $k, \Delta \in \mathbb{N}_+$, and $L=\log \Delta$.
	For dynamic data stream consisting of insertions and deletions of points in $[\Delta]^d$, there is an algorithm which uses a single pass over the stream and on termination outputs a weighted set $S$ with a positive weight for each point therein, such that with probability at least $0.9$, $S$ is an $\epsilon$-coreset for $k$-means of size
	$
	O(k\varepsilon^{-2}d^4L^2\log(kdL))
	$. 
	The algorithm uses
	$
	\wt{O}(k) \cdot\poly(d, L, \epsilon^{-1}) 
	$
	bits in the worst case. 
\end{theorem}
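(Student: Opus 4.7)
The plan is to adapt the Frahling--Sohler nested-grid coreset framework to dynamic streams in $[\Delta]^d$ by combining it with linear sketches that work implicitly over the huge universe of grid cells. The target coreset size $O(k\varepsilon^{-2}d^4L^2\log(kdL))$ suggests the algorithm identifies roughly this many ``heavy'' cells, so the proof naturally breaks into (a) bounding the number of heavy cells, (b) maintaining a sketch that recovers them in a dynamic stream, and (c) showing the resulting weighted point set is an $\varepsilon$-coreset.

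First I would impose on $[\Delta]^d$ a sequence of $L+1$ nested axis-aligned grids $G_0,G_1,\dots,G_L$ with a common random shift, where $G_i$ partitions $\R^d$ into $d$-cubes of side $2^i$. For any candidate center set $Z$ and point $q\in Q$, the squared distance $\dist^2(q,Z)$ can be approximated by $2^{2i(q)}$, where $i(q)$ is the smallest level at which the grid cell containing $q$ also contains a center of $Z$. This yields the standard HST-style decomposition $\cost(Q,Z)\approx \sum_i 2^{2i}\, n_i(Z)$, and the random shift makes the approximation tight in expectation, with a constant-factor slack that can be absorbed into $\varepsilon$.

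Next I would call a cell $C\in G_i$ \emph{heavy} if its point count exceeds a threshold $\tau_i$ chosen so that the aggregate weight $2^{2i}$ of points in light cells is at most an $\varepsilon$-fraction of $\OPT$. A packing/charging argument --- light points get absorbed into a heavy ancestor cell, heavy cells themselves charged to the $k$ optimal centers at their level --- gives a bound of $\tilde O(k)\cdot \poly(d,L,1/\varepsilon)$ heavy cells in total, independent of the cell universe $(\Delta/2^i)^d$. To find them in a dynamic stream, I would view the count vector at level $i$ as a vector in $\R^{M_i}$ with $M_i=(\Delta/2^i)^d$ and apply $\ell_2$-heavy-hitter sketches (CountSketch with error-correcting hashing) whose per-hitter cost is $\poly(d,L,1/\varepsilon)$, since each update touches one coordinate per level. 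An $\ell_0$-sampler per recovered cell extracts a uniformly random point to serve as the weighted representative. The $\OPT$-dependent threshold $\tau_i$ is handled by running $O(\log\Delta)$ parallel geometric guesses and committing to the right one at termination.

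The main obstacle is the coreset correctness proof: replacing each heavy cell by a single weighted representative, and discarding light-cell points entirely, must preserve $k$-means cost within a $(1+\varepsilon)$ factor \emph{simultaneously} for all $k$-center sets $Z\subset\R^d$. This requires a union bound over an $\varepsilon$-net of $k$-center configurations at an appropriate scale, and it is here that the $d^4$ factor in the coreset size most naturally arises (combined with $L^2$ from the two-fold use of $L$ levels in the charging). A secondary technical hurdle is that CountSketch-style heavy hitters come with only a tail-$\ell_2$ additive guarantee, whereas the $k$-means cost analysis demands a multiplicative guarantee per heavy cell; I expect to close this gap by sharpening either the heaviness threshold or the sampler by an additional $\poly(d,L,1/\varepsilon)$ factor, which should propagate without disturbing the $\tilde O(k)$ scaling in the final space bound.
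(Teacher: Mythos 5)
Your proposal reproduces the grid/random-shift/exponential-guess scaffolding of the paper, and your sketching machinery (heavy-hitter recovery plus $\ell_0$-samplers, geometric guesses of $\OPT$) is in the same spirit as the paper's use of the $k$-set distinct-elements structure. But the central strategy --- ``replace each heavy cell by a single weighted representative, discard light-cell points, and verify the coreset property by a union bound over an $\varepsilon$-net of $k$-center configurations'' --- is not the paper's strategy, and in fact the paper's Appendix B argues it cannot give $\poly(d)$ dependence. This is precisely the Frahling--Sohler coreset construction, and there the threshold $\tau_i$ for heaviness must be taken as small as $\varepsilon^{O(d)}\cdot\OPT/2^{2i}$ before the snapping error is a $(1+\varepsilon)$-factor: for $k$-means the cost change from moving $p$ by $\delta$ is $\Theta(\delta\,\dist(p,Z)+\delta^2)$, and $\dist(p,Z)$ is not uniformly bounded, so the per-point movement budget must shrink geometrically with $d$. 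With such thresholds the number of heavy cells becomes $k\cdot\varepsilon^{-O(d)}$, not $\wt O(k)\cdot\poly(d,L,1/\varepsilon)$, and the whole point of the theorem collapses. Your ``packing/charging'' step claims that polynomial thresholds suffice; that claim is exactly the gap.

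What the paper actually does is quite different. It never snaps a cell to one representative, and it never directly argues about an $\varepsilon$-net of center configurations. Instead it uses the Feldman--Langberg/Braverman et al.\ \emph{sensitivity sampling} framework: for every point $p$ it computes an upper bound $s'(p)=\Theta(d^3/T_{i}(o))$ on $p$'s sensitivity, where $i$ is the first level at which $p$'s cell ceases to be heavy (its ``crucial'' cell). It then samples points with probability proportional to $s'(p)$, weighting each sample by the reciprocal probability. The correctness for all $k$-center sets simultaneously comes not from an explicit net but from Theorem~\ref{thm:fl11} --- a VC-dimension style argument baked into the FL/BFL framework --- which says that $\wt O(t'/\varepsilon^2)$ i.i.d.\ sensitivity-proportional samples form an $\varepsilon$-coreset, where $t'=\sum_p s'(p)$. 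The two load-bearing lemmas are (i) $s'(p)$ is a valid sensitivity upper bound because $p$'s heavy parent cell contains $\Omega(T_{i-1}(o))$ points that represent $p$ well (Lemma~\ref{lem:sensitivity_ub}), and (ii) $\sum_p s'(p)=O(kLd^3)$, which follows from bounding center cells via the random shift (Lemma~\ref{lem:num_center_cell}) and charging non-center cells to $\OPT$ (Lemma~\ref{lem:total_sensitivity_ub}). This is where the $\poly(d)$ dependence is won: the threshold $T_i(o)$ stays polynomial in $d$, and points are \emph{sampled} rather than snapped, so there is no movement error to control. Your proposal is missing this entire reduction; without it the $(1+\varepsilon)$ guarantee cannot be obtained in $\poly(d)$ space by the route you describe.
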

To our knowledge, this is the first algorithm for $k$-means in dynamic data streams that uses space \emph{polynomial in data dimension $d$} and \emph{nearly optimal (linear)\footnote{It is easy to see that $k$ points are needed in a coreset -- when there are only $k$ points in the dataset, the optimal $k$-means cost is $0$, so a coreset has to contain all $k$ points.} in the number of clusters $k$}.
Previous algorithms for streaming $k$-means either require space exponential in $d$ or only work for insertion-only streams.\footnote{It is also possible to obtain an $\tilde{O}(k^2\cdot\poly(d))$ space algorithm for dynamic streams by combining the techniques from \citep{chen09} and \citep{bflsy17}.}
See Section~\ref{sec:related} and Appendix~\ref{sec:previous algorithm fail} for detailed discussions of previous results.


Note that for the $k$-means problem, \cite{cemmp15} showed that one can always do a random projection to reduce the dimension to $O(k/\varepsilon^2)$.
Thus, the most interesting setting would be when $d\le O(k/\eps^2)$ and $d \gg\log k$.

\subsection{Our Techniques}


At a high level our algorithm is based on a framework called \emph{sensitivity sampling}, which was proposed by \cite{fl11}.
For a set $Q \subseteq [\Delta]^d$, the \emph{sensitivity} of every point $q\in Q$ is defined as
\[
s(q) := \max_{Z\subset\R^d, |Z|=k} \frac{\dist^2(q, Z)}{\sum_{p\in Q}\dist^2(p, Z)}.
\]
Namely, $s(q)$ represents how ``sensitive'' the cost can be to the removal of point $q$.
A crucial result shown by \cite{fl11,bfl16} is that once we know a good \emph{upper bound} on each point's sensitivity, there is a sampling method to construct an $\epsilon$-coreset.
Specifically, if we know an upper bound $s'(q) \ge s(q)$ for each $q\in Q$, we can sample $q$ with probability $s'(q)/(\sum_{p\in Q}s'(p))$.
Let $R$ be a set of i.i.d. samples from this procedure with $|R|\geq \tilde{\Omega}\left( \sum_{q\in Q}s'(q)/\epsilon^2\right)$, and each sample $q$ is assigned a weight $\frac{\sum_{p\in Q}s'(p)}{|R|s'(q)}$.
Then with high probability $R$ is an $\epsilon$-coreset for $Q$.
Note that if $\sum_{q\in Q} s'(q) = \wt{O}(k \cdot \poly(d))$, then an $\wt O(k \cdot \poly(d))$-size $\epsilon$-coreset can be constructed in this way.
The formal description of this result is given in Theorem~\ref{thm:fl11}.

We give an efficient method to obtain sensitivity upper bounds $s'(\cdot)$ such that: (i) $\sum_{q\in Q} s'(q)$ is small, (ii) we can implement the sensitivity sampling procedure in the dynamic streaming setting. 
Then we are able to construct a coreset according to the previous paragraph.

The key intuition in our sensitivity estimation is the following. Imagine that there is a small region that is very dense, i.e., it contains a lot of points. Then the sensitivity of every point in that region must be low, because that point can be well represented by other points in the same small region.
Therefore, the problem of finding sensitivity upper bound for a point boils down to figuring out the ``right'' region this point belongs to that can be considered ``dense.''
Intuitively the sensitivity of this point depends on the size of this dense region -- the smaller the size, the smaller the sensitivity.

\begin{figure}[t]
	\centering
	\includegraphics[width=0.25\textwidth]{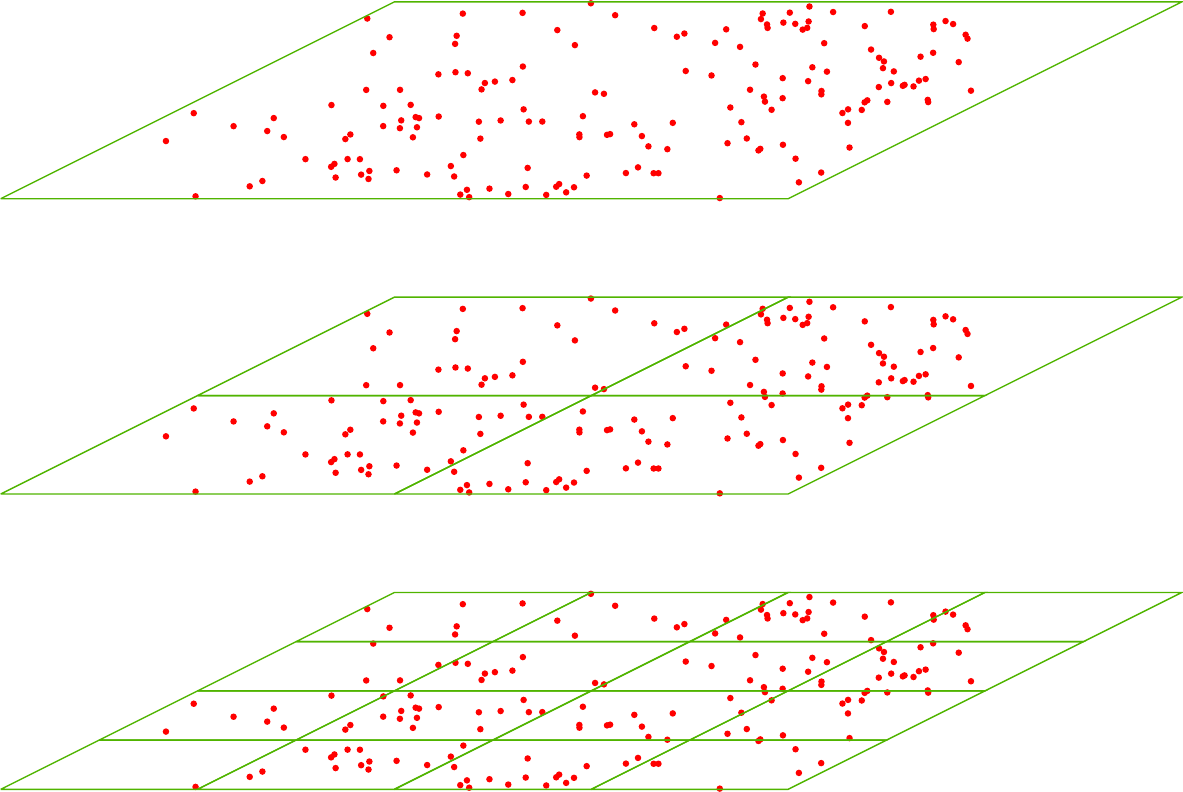}
	\caption{
		\small The grid structure over the point set.
		From top to bottom, three levels of grids are shown.
		Each cell splits into $2^d$ cells in the next level.}
	\vspace{-2em}
	\label{fig:grid}
\end{figure}

We make this intuition formal by using a hierarchical grid structure similar to~\citep{fs05,bflsy17}.
This structure is illustrated in Figure~\ref{fig:grid}.
The top-level (level $0$) grid consists of cells that are $d$-dimensional cubes of side-length $\Delta$, and each cell in level $i-1$ splits into $2^d$ cells in level $i$.
Each cell in level $i$ has side-length $\Delta/2^i$.
For a cell in level $i$, we say that it is \emph{heavy} if it contains at least $T_i = \Theta\left(\frac{d^2}{k}\cdot \frac{\OPT}{(\Delta/2^i)^2}\right)$ points in $Q$, where $\OPT$ is the optimal cost of the $k$-means problem.\footnote{We assume for now that we know $\OPT$. Our actual algorithm uses exponential search to guess the value of $\OPT$.}
Since $T_i > T_{i-1}$, we know that if a cell in level $i$ is heavy, then its parent cell in level $i-1$ is heavy as well. Therefore the set of all heavy cells in all levels form a tree.
Now for a point $p\in Q$, denote by $c_i(p)$ the cell in level $i$ that contains $p$, and then define $j$ to be the smallest level index such that $c_j(p)$ is not heavy; 
then we show an upper bound on the sensitivity $s(p)$ solely based on this index number $j$, namely $s(p) \le s'(p) = \Theta(d^3/T_j)$.
Furthermore, we prove that the sum of our sensitivity upper bounds is small: $\sum_{p\in Q}s'(p) = O\left( kd^3 \log\Delta \right)$, which satisfies our requirement.
To establish these bounds we need the total number of heavy cells to be small, for which we apply a random shift of grid at the beginning, as illustrated in Figure~\ref{fig:random shift}.


To implement the above sensitivity sampling method in the dynamic streaming setting when the dataset is updated by insertions and deletions of points,
the key difficulties are: 1) we do not know the value of $\OPT$, and it changes when the underlying dataset is updated;
2) we need to compute the sensitivity upper bounds and to sample points at the same time using limited space.

Let us first assume $\OPT$ is known and give an algorithm to implement our sensitivity sampling procedure in the dynamic streaming setting.
Our algorithm makes crucial use of the \emph{$k$-set} data structure in \citep{ganguly2005counting} for counting \emph{distinct elements} in a dynamic stream.
The $k$-set data structure ensures that if the number of distinct elements is at most some predetermined parameter, it will return all distinct elements and their frequencies; otherwise it will return \textbf{FAIL}.
We summarize its guarantee in Lemma~\ref{lem:k_set}.
Note that in order to implement sensitivity sampling, we need to know which cells are heavy.
Our algorithm dynamically tracks all heavy cells, using the $k$-set structure as a building block.
Then the sensitivity sampling method has two stages: first sample a level $i$ (with an appropriate probability for each level), and then uniformly sample a point from all points associated with level $i$, i.e., all points $p$ such that $c_i(p)$ is not heavy and  $c_{i-1}(p)$ is heavy. (Note that for all points associated with level $i$, they have the same sensitivity upper bounds, which means uniformly sampling a point from them is enough.)
In order to do uniform sampling, we also maintain for each level $i$ a uniformly random subset of points associated with $i$. Therefore it suffices to choose a point uniformly at random from this subset once $i$ is chosen.

For the issue of not knowing $\OPT$, we run in parallel multiple copies of our sampling algorithm for different guesses of $\OPT$: $1,2,4\ldots, \Delta^d \cdot d\Delta$.
Our sampling algorithm ensures that when the guessed value is less than $\OPT$ but not too far away, the required space is small. For other guesses, the required space might be a lot, but since we have a space budget, our algorithm can return \textbf{FAIL} when the space runs out. Since at least one guess is accurate, at least one copy of the algorithm will succeed and output a small $\epsilon$-coreset.

\begin{figure}[t]
	\centering
	\includegraphics[width=0.7\textwidth]{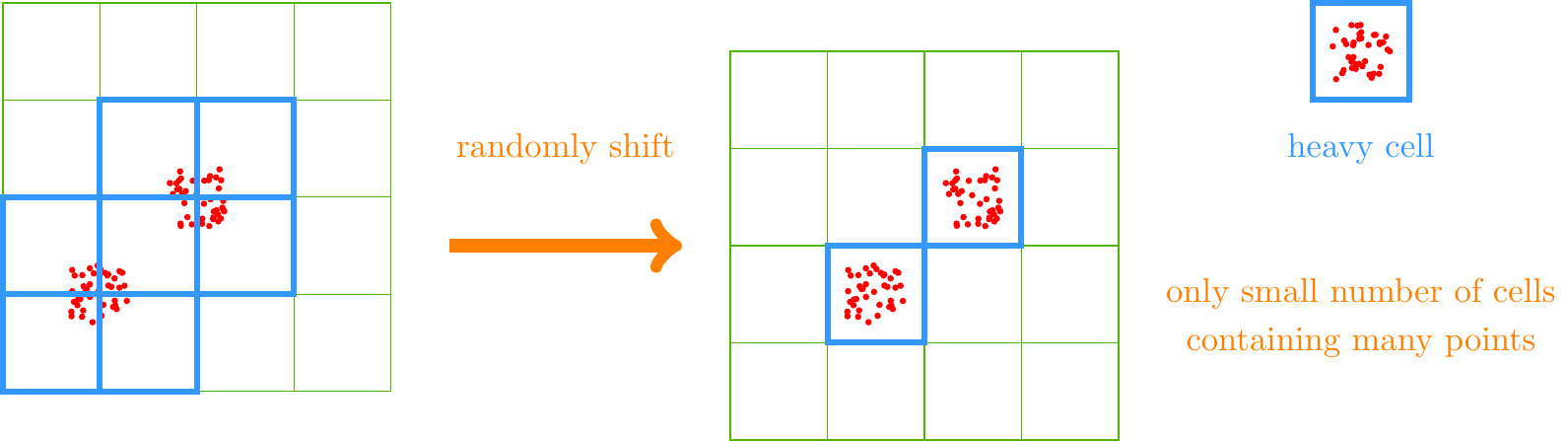}
	\vspace{-0.15in}
	\caption{\small Random shift of grid brings down the number of heavy cells.
		In the left panel, we have a bad alignment of points and grids such that many cells contain lots of points. In the right panel, after the random shift, only two cells contain many points. \label{fig:random shift}}
		\vspace{-0.2in}
\end{figure}

\subsection{Related Work}\label{sec:related}

It is well known that exactly solving $k$-means is NP-hard even for $k=2$ or $d=2$ \citep{adhp09,mnv09}. The most successful algorithm used in practice is Lloyd's algorithm, which is also known as ``the'' $k$-means method \citep{l82}.
Because of the NP-hardness, various attempts were made on approximation algorithms.
\cite{kmnpsw02} proved that a very simple local search heuristic achieves $(9+\epsilon)$-approximation in polynomial time for any fixed $\epsilon >0$. 
When $d$ is a constant \citep{frs16,ckm16} or $k$ is a constant \citep{fms07,kss10,fl11}, $(1+\epsilon)$-approximation can be achieved in polynomial time.

There is a line of work studying $k$-means and $k$-median in insertion-only streams, e.g.,  \citep{bs80, gmmo00, cop03, bdmo03, ahv04, hm04, hk05, chen09, fl11, fs12, amrsls12, bfl16}. 
There also have been a lot of interests in dynamic streaming algorithms for other problems, e.g.
\citep{bjkst02,fkmsz05,b08,kl11,agm12b,agm12,gkp12,gkk12,agm12,bks12,cms13,agm13,m14,bgs15,bhnt15,bs16,acdgw16,adkkp16,bwz16,klmms17,swz17,swz17b}. 
In addition, $k$-means and $k$-median were studied in various different settings, e.g., \citep{ccgg98,ip11,birw16,bcmn14,sw18}.

The most relevant papers are \citep{fs05, fl11, bfl16, bflsy17}.
\cite{fs05} designed an algorithm to maintain an $\epsilon$-coreset of size $k\epsilon^{-O(d)}$ for $k$-means and $k$-median.
\cite{fl11} introduced the sensitivity sampling framework for coreset construction, and their approach was further improved by \cite{bfl16}, but both of them only work for insertion-only streams and do not apply to dynamic streams.
\cite{bflsy17} focused on the $k$-median problem and constructed a coreset of size $O(k \cdot \poly(d, \log\Delta))$ in the dynamic streaming setting, but their technique heavily relies on $k$-median and cannot be extended to $k$-means.
In Appendix~\ref{sec:previous algorithm fail} we explain in detail the limitations of previous approaches.

\section{Preliminaries}\label{sec:preli}


\paragraph{Notation.} For $n\in \mathbb{N}_{+}$, let $[n]:=\{1,2,\ldots,n\}$. 
We define $\wt{O}(f)$ to be $O\left( f\cdot \log^{O(1)}(f) \right)$. 
For any $x\in \mathbb{R}_{\ge 0},\varepsilon\in(0,1),$ we use $(1\pm \varepsilon) \cdot x$ to denote the interval $((1-\varepsilon)\cdot x, (1+\varepsilon)\cdot x)$. For any $x\in\mathbb{R}$ and $a\in\mathbb{R}_{> 0}$, $x\pm a$ denotes the interval $(x-a,x+a)$.

We denote by $\dist(\cdot, \cdot)$ the Euclidean distance in $\mathbb{R}^d$, i.e., for $p, q\in \mathbb{R}^d$,
$
\dist(p, q) := \norm{p-q}_2
$.
For sets $P, Q\subseteq \mathbb{R}^d$ and a point $p\in \mathbb{R}^d$, we define
$
\dist(p, Q) = \dist(Q, p) := \min_{q\in Q}\dist(p,q)$ and $\dist(P, Q):=\min_{p\in P, q\in Q}\dist(p,q).
$
For any two sets $Q,Z\subseteq \mathbb{R}^d$, we define $\cost(Q,Z):=\sum_{q\in Q}\dist^2(q,Z)$.
We define $\diam(Q)$  to be $Q$'s diameter, i.e., $\diam(Q) := \max_{p, q\in Q} \dist(p, q)$.

\paragraph{The dynamic streaming model.}
We consider the \emph{dynamic streaming model}, defined below. 
\begin{definition}[Dynamic streaming model]\label{def:dynamic_stream}
Let $Q\subseteq[\Delta]^d$ initially be an empty set. In the dynamic streaming model, there is a stream of update operations such that the $t^{\text{th}}$ operation has the form $(p_t,\pm)$ which indicates that a point $p_t\in[\Delta]^d$ is inserted to or deleted from the set $Q$, where $+$ denotes insertion and $-$ denotes deletion. 
There is no invalid deletion during the stream.\footnote{At any time during the stream, for any point $p\in[\Delta]^d$, the number of deletions of $p$ so far is always no more than the number of insertions of $p$.}
An algorithm is allowed a single pass over the stream. At the end of the stream, the algorithm stores some information regarding $Q$. The space complexity of an algorithm in this model is defined as the total number of bits used by the algorithm during the stream.  
\end{definition} 

The goal of an algorithm in this model is to store some information which can be used for a certain computation  task, while using as small space as possible. Although optimizing the running time 
is not required in this model, the algorithm in the current paper is actually efficient for each update.

In this paper
we suppose that any two points in $Q$ have different locations\footnote{At the end of the stream, for any point $p\in[\Delta]^d$, the number of insertions of $p$ is at most one more than the number of deletions of $p$. }, i.e., $Q$ is not a multiset. Our algorithm can be easily extended to allow multiple copies of a point by blowing up the total space by an $O(\log M)$ factor, where $M$ is an upper bound on the number of copies.


\paragraph{$k$-means clustering.}
Now we introduce the \emph{$k$-means clustering problem} and the notion of \emph{coreset}. 

\begin{definition}[$k$-means clustering]
Given a point set $Q\subseteq [\Delta]^d$ and a parameter $k\in\mathbb{N}_+$ for the target number of centers, the goal of $k$-means clustering  is to find a set of $k$ points $Z\subseteq \mathbb{R}^d$ such that the objective function, 
$
\cost(Q, Z) := \sum_{q\in Q} \dist^2(q, Z),
$
 is minimized.
Each point in $Z$ is called a \emph{center}.
 $\OPT$ is defined to be the optimal cost of the $k$-means clustering problem.
\end{definition}

However, solving the $k$-means problem exactly is NP-hard~\citep{adhp09}. Oftentimes, we only need a good approximation.
For the purpose of finding an approximate solution, an important concept is \emph{coreset}, which is a small subset of (weighted) points whose $k$-means solution is a good approximate solution for the entire dataset.
The formal definition is the following:


\begin{definition}[Coreset for $k$-means]\label{def:coreset_kmeans}
Given $Q\subseteq [\Delta]^d$, $k\in\mathbb{N}_{+}$ and $\epsilon>0$,
a set of point-weight pairs $S=\{(s_1, w_1), (s_2, w_2), \ldots, (s_m,w_m)\} \subset [\Delta]^d \times \R_{>0}$ is an \emph{$\epsilon$-coreset} for $Q$, where $w_i$ is the weight of $s_i$, if $S$ satisfies
$$
\forall Z\subset \mathbb{R}^d, |Z|=k: \big|\cost(S, Z) -\cost(Q, Z)\big|\le \epsilon \cdot \cost(Q, Z),
$$
where
$
\cost(S, Z) :=\sum_{i=1}^m w_i\dist^2(s_i, Z).
$
The size of the coreset is $|S|$.
\end{definition}

The main problem studied in this paper is how to construct a small coreset for $k$-means over a dynamic data stream. The formal description is the following.

\begin{definition}[Coreset for $k$-means over a dynamic stream]
Given a point set $Q\subseteq [\Delta]^d$ described by a dynamic stream of operations (Definition~\ref{def:dynamic_stream}), a parameter $k\in\mathbb{N}_+$ for the target number of centers, and an error parameter $\varepsilon\in(0,0.5)$. The goal is to design an algorithm in the dynamic streaming model  which can with probability at least $0.9$ output a small size $k$-means $\varepsilon$-coreset  (Definition~\ref{def:coreset_kmeans}) for $Q$ using as small space as possible.
\end{definition}

\paragraph{Sensitivity sampling based coreset construction.}
Let us briefly review the coreset construction framework proposed by~\cite{fl11,bfl16}. Given $Q\subseteq [\Delta]^d$ and $k \in \mathbb N_+$, the \emph{sensitivity} of a point $p\in Q$ is defined as:
\begin{align*}
s(p) = \max_{Z \in \mathbb{R}^d, |Z| = k} \frac{\dist^2(p,Z)}{ \sum_{q\in Q} \dist^2(q,Z) }.
\end{align*}
The following theorem gives guarantee of a sensitivity sampling based coreset construction.
\begin{theorem}[\cite{fl11,bfl16}]\label{thm:fl11}
	Given a set of points $Q \subseteq [\Delta]^d$ and a parameter $k$, let $s(p)$ denote the sensitivity of each point $p\in Q$.
	For each $p\in Q$, let $s'(p)$ be an upper bound on the sensitivity of $p$, i.e., $s'(p)\ge s(p)$, and let $t'=\sum_{p\in Q} s'(p).$
	Consider a multiset $S$ of $m$ i.i.d. samples from $Q$, where each sample chooses $p\in Q$ with probability $s'(p)/t'$. For each sampled point $p$, a weight $w(p)\in (1\pm \varepsilon/2)\cdot t'/(ms'(p))$ is associated with $p$. If $m \ge \Omega(t'\varepsilon^{-2}(\log|Q|\log t'+ \log (1/\delta)) )$,
	then with probability at least $1-\delta$, $\{(p,w(p))\mid p\in S\}$ is an $\epsilon$-coreset (Definition~\ref{def:coreset_kmeans}) for $Q$.  
\end{theorem}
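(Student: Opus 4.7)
My plan is to prove this via the standard \emph{unbiased estimator} plus \emph{uniform convergence} two-step strategy that underlies all sensitivity sampling results. First I would verify that for any \emph{fixed} $Z\subseteq\R^d$ with $|Z|=k$, the estimator $\cost(S,Z)$ is unbiased for $\cost(Q,Z)$. Treating the weights as $w(p)=t'/(m\,s'(p))$ exactly (the $(1\pm\epsilon/2)$ slack can be absorbed by replacing $\epsilon$ with $\epsilon/4$ at the end), a single sample contributes in expectation
\[
\sum_{p\in Q}\frac{s'(p)}{t'}\cdot\frac{t'}{m\,s'(p)}\,\dist^2(p,Z)=\frac{1}{m}\cost(Q,Z),
\]
so summing over the $m$ i.i.d.\ samples yields $\E[\cost(S,Z)]=\cost(Q,Z)$. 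By the definition of sensitivity, $\dist^2(p,Z)\le s(p)\cost(Q,Z)\le s'(p)\cost(Q,Z)$, so each summand $w(p)\dist^2(p,Z)\le (t'/m)\cost(Q,Z)$. A Bernstein/Hoeffding bound on the sum of these bounded independent variables then gives, for a fixed $Z$,
\[
\Pr\bigl[|\cost(S,Z)-\cost(Q,Z)|>\epsilon\cost(Q,Z)\bigr]\le 2\exp\bigl(-\Omega(\epsilon^2 m/t')\bigr).
\]

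The main obstacle is promoting this pointwise concentration to a statement that holds \emph{uniformly} over all $Z\subseteq\R^d$ with $|Z|=k$, since this family is uncountable and a naive union bound fails. My plan is to invoke the pseudo-dimension / $\epsilon$-approximation machinery on the normalized function class $\mathcal{F}=\{f_Z\colon p\mapsto\dist^2(p,Z)/s'(p)\}_{|Z|=k}$, whose elements are bounded in $[0,t']$ by the sensitivity upper bound. The sub-level sets of $f_Z$ are unions of $k$ Euclidean balls in $\R^d$, and standard arguments show the induced range space has pseudo-dimension $O(dk)$. Weighted VC / $\epsilon$-approximation theorems (e.g.\ the Li--Long--Srinivasan bound) then guarantee that a sample of size $\tilde{O}(t'\epsilon^{-2}(\log|Q|+\log(1/\delta)))$ approximates every $f_Z$ simultaneously, which translates to the claimed coreset guarantee. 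The extra $\log t'$ factor arises from discretizing the cost scale $\cost(Q,Z)$, which appears inside the Bernstein radius via the sensitivity inequality and must be pulled outside by a net argument across the $O(\log t')$ possible cost magnitudes.

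The hardest step will be the uniform-convergence part: setting up the range space, certifying its VC complexity, and then carefully tracking the logarithmic factors. In practice, since Theorem~\ref{thm:fl11} is cited directly from \cite{fl11,bfl16}, my plan is to treat their uniform-convergence statement as a black box once the unbiasedness and per-sample boundedness above are established, and to spend the detailed work only on the bookkeeping needed to absorb the $(1\pm\epsilon/2)$ weight perturbation (by a trivial rescaling $\epsilon\mapsto\epsilon/4$) and to verify that the hypothesis $m\ge\Omega(t'\epsilon^{-2}(\log|Q|\log t'+\log(1/\delta)))$ matches the sample-size requirement coming from their $\epsilon$-approximation theorem.
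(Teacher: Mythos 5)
The paper does not prove Theorem~\ref{thm:fl11} at all --- it is imported verbatim as a black-box citation to \cite{fl11,bfl16} and used later (in the proof of Theorem~\ref{thm:offline_sampling_coreset}) without any re-derivation. Your proposal does essentially the same thing: after a brief and correct sanity check (unbiasedness of $\cost(S,Z)$ for fixed $Z$, the per-sample bound $w(p)\dist^2(p,Z)\le (t'/m)\cost(Q,Z)$ that follows from $\dist^2(p,Z)\le s'(p)\cost(Q,Z)$, and a Bernstein-style pointwise tail), you explicitly defer the uniform-convergence step to the $\varepsilon$-approximation machinery of \cite{fl11,bfl16}. So the two treatments are identical in substance; yours just adds a useful sketch of \emph{why} the cited theorem is true.

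One small bookkeeping caution if you were to push the sketch further: you correctly identify the pseudo-dimension of the range space induced by $\{p\mapsto\dist^2(p,Z)/s'(p)\}_{|Z|=k}$ as $O(dk)$, but the theorem as stated (and as used in the paper, where $\log|Q|$ is instantiated as $\log(\Delta^d)=dL$) replaces that dimension term by $\log|Q|\log t'$. Those two quantities are not the same in general, and reconciling them (or pinning down exactly which bound from \cite{fl11} or \cite{bfl16} is being quoted) would be the one nontrivial piece of verification missing from your plan. Since the theorem is cited rather than proved, this is a matter of matching constants/log-factors in the black box rather than a gap in your argument.
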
 
According to the above theorem, if we can find a good sensitivity upper bound $s'(p)$ for each point $p$, then we are able to construct a coreset with size nearly linear in $t'=\sum_{p} s'(p)$. In section~\ref{sec:off_sens}, we give an offline algorithm which can estimate a good sensitivity upper bound for each point, which readily implies an efficient offline coreset construction algorithm. In section~\ref{sec:streaming}, we show how to implement this sensitivity sampling procedure over a dynamic stream.
Notice that \cite{bfl16} gave a sensitivity sampling framework that works for clustering with general loss functions, and our method can be extended to those problems as well.

\vspace{-0.15in}
\section{An Offline Sensitivity Sampling Procedure}\label{sec:off_sens}
\vspace{-0.05in}
In this section, we consider the \emph{offline} setting in which all the data points are given. In this setting, we design a coreset construction algorithm based on sensitivity sampling.
In Section~\ref{sec:streaming}, we will show how to implement this algorithm in the dynamic streaming setting. 
\vspace{-0.15in}
\subsection{Randomly Shifted Grids}
\vspace{-0.05in}
We consider data points from $[\Delta]^d$ and assume without loss of generality that $\Delta=2^{L}$ for some positive integer $L$.
The space $[\Delta]^d$ is partitioned by a hierarchical grid structure as follows.
The first level (level $0$) of the grid contains  \emph{cells} with side-length $\Delta$ such that all the data points are contained in a single cell.
For each higher level, we refine the grid by splitting each cell into $2^d$ equal sized sub-cells.
In the finest level, i.e., the $L$-th level, each cell contains a single point.
We further randomly shift the boundary of the grids to achieve certain properties, which we will show later.
Formally, our grid structure is defined as the following.

\begin{definition}[Grids and cells] \label{def:grid_structure}
Let $g_0 = \Delta$. Choose a vector $v$ uniformly at random from $[0,\Delta]^d$. Partition the space $\mathbb{R}^d$ into a regular Cartesian grid $G_0$ with side-length $g_0$ and translate $G_0$ such that a vertex of this grid falls on $v$. The grid $G_0$ can be regarded as an infinite set of disjoint \emph{cells}, where each cell $C \in G_0$ can be expressed as
\begin{align*}
[v_1 + n_1 g_0,\ v_1 + (n_1+1) g_0) \times \cdots \times [ v_d + n_d g_0,\ v_d + (n_d+1) g_0)\subset \mathbb{R}^d
\end{align*}
for some $(n_1,n_2,\ldots,n_d)\in \mathbb{Z}^d$. (Note that each cell is a Cartesian product of intervals.)
	
For $i\geq 1$, we define the regular grid $G_i$ as the grid with side-length $g_i = g_0 /2^i$ aligned such that each cell in $G_{i-1}$ contains $2^d$ cells in $G_i$. The finest grid is $G_L$ where $L =  \log_2 \Delta $. A cell of $G_L$ has side-length $1$ and thus contains at most one data point.
\end{definition}


For convenience, we also define $G_{-1}$ to be the regular grid with side-length $g_{-1}=2 \Delta$, and each cell in $G_{-1}$ is a union of $2^d$ cells in $G_0$. Since the data points are in $[\Delta]^d$, there must be a single cell in $G_{-1}$ which contains all the data points. 
Consider two cells $C\in G_{i}$ and $C'\in G_j$ for some $i,j\in\{-1,0,1,\ldots,L\}$. If $C'\subset C$, then $C$ is an \emph{ancestor} of $C'$. Furthermore, if $i=j-1,$ then $C$ is the \emph{parent} of $C'$ and $C'$ is a \emph{child} of $C$. Thus every cell which is not from $G_L$ has exactly $2^d$ children cells. For a point $p$ (or a set $P$ of points), $c_i(p)$ (or $c_i(P)$) denotes the cell $C$ in grid $G_i$ which contains $p$ (or $P$). If $i$ is clear from the context, we will just use $c(p)$ (or $c(P)$) for short.

\subsection{Sensitivity Estimation and Coreset Construction}\label{sec:offline_sensitivity_estimation_and_coreset_construction}
In Algorithm~\ref{alg:offline_sens} we describe how to assign a sensitivity upper bound for every point. It needs an estimate $o$ of the optimal $k$-means cost $\OPT$. We will show how to enumerate the guesses $o$ later. According to Theorem~\ref{thm:fl11}, it directly gives an offline coreset construction algorithm.

\begin{algorithm}[t!]
	\small
	\begin{algorithmic}[1]\caption{Sensitivity Estimation}\label{alg:offline_sens}
		\STATE {\bfseries predetermined:} a guess $o\in [1,\Delta^d\cdot d\Delta]$ of the optimal $k$-means cost $\OPT$
		\STATE {\bfseries input:} a point set $Q\subseteq [\Delta]^d$, a parameter $k\in\mathbb{N}_{+}$
		\STATE Impose randomly shifted grids $G_{-1},G_0,G_1,\ldots,G_L$ (Definition~\ref{def:grid_structure}).
		\STATE Let $C\in G_{-1}$ be the cell which contains $Q$, i.e., $C= c(Q)$. Mark $C$ as \emph{heavy}.
		\FOR{$i:=0\rightarrow L-1$}
			\STATE Set the threshold value $T_i(o) = (d/g_i)^2\cdot o/k \cdot 1/100$.
			\FOR{$C\in G_i$ with $C\cap Q\not = \emptyset$}
				\STATE Let $z$ be an estimated value of $|C\cap Q|$ up to some precision. \label{sta:estimation_cell_size}
				\STATE If $z\geq T_i(o)$, mark $C$ as heavy. 
				\STATE Otherwise, if all the ancestors of $C$ are marked as heavy, mark $C$ as \emph{crucial}.
			\ENDFOR
		\ENDFOR
		\STATE For $C\in G_L$, if all the ancestors of $C$ are marked as heavy, mark $C$ as crucial.
		\STATE Initialize $Q_0=Q_1=\cdots=Q_L=\emptyset$.
		\STATE For $p\in Q,$ if $c_i(p)$ is marked as crucial, add $p$ into set $Q_i$ and set $s'(p)=10d^3/T_i(o)$. 
		\STATE {\bfseries output:} $Q_0,Q_1,\ldots,Q_L$ and $s'(\cdot)$
	\end{algorithmic}
\end{algorithm}

We also give 
 an alternative sampling procedure in Algorithm~\ref{alg:offline_coreset} which is useful for the dynamic streaming model. 

\begin{algorithm}[t!]
	\small
	\begin{algorithmic}[1]\caption{Sensitivity Sampling Based Coreset Construction}\label{alg:offline_coreset}
		\STATE {\bfseries predetermined:} a guess $o$ of the optimal $k$-means cost $\OPT$, an error parameter $\varepsilon\in(0,0.5)$
		\STATE {\bfseries input:} a point set $Q\subset [\Delta]^d$, a parameter $k\in \mathbb{N}_{+}$
		\STATE Let $Q_0,Q_1,\ldots,Q_L$ and $s'(\cdot)$ be the output of Algorithm~\ref{alg:offline_sens}.
		\STATE Let $\hat{q}_0,\hat{q}_1,\ldots,\hat{q}_L$ be the estimated values of $|Q_0|,|Q_1|,\ldots,|Q_L|$ respectively. \label{sta:estimated_level_size}
		\STATE For $i\in \{0,1,\ldots,L\},$ set $T_i(o)=(d/g_i)^2 \cdot o/k\cdot 1/100$ (same as in Algorithm~\ref{alg:offline_sens}).
		\STATE Set $\gamma = \varepsilon/(40^2 L d^3)$.
		\STATE Let $I=\{i\mid 0\le i\le L, \hat{q}_i\ge \gamma T_i(o) \}$.\label{sta:set_important_level_I}
		\STATE Let $Q^I=\bigcup_{i\in I} Q_i$. \label{sta:definition_of_QI}
		\qquad\qquad\qquad\qquad\codecmt{Only consider the levels with sufficient number of points.}
		\STATE Set $t'=\sum_{i\in I} \hat{q}_i \cdot 10d^3/T_i(o)$.  \qquad\qquad\qquad\qquad\qquad\quad \codecmt{Total estimated sensitivities.}
		\STATE Set $m=\Theta(t'\varepsilon^{-2}Ld\log t')$ and initialize $S=\emptyset$. \qquad\qquad \codecmt{$m$ is the total number of samples.}
		\FOR{$j=1\rightarrow m$}
			\STATE Choose a random level $i\in I$ with probability $(\hat{q}_i\cdot  10d^3/T_i(o)) / t'$.
			\STATE Uniformly sample a point $p$ from $Q_i$.
			\STATE Add $(p,t'/(ms'(p)))$ to set $S$.
		\ENDFOR
		\STATE {\bfseries output:} the set $S$
	\end{algorithmic}
\end{algorithm}

\begin{theorem}\label{thm:offline_sampling_coreset}
	Suppose that for any $ i\in\{0,1,\ldots,L\}$ and for any cell $C\in G_i$ with $C\cap Q\not=\emptyset$, the estimated value $z$ in line~\ref{sta:estimation_cell_size} of Algorithm~\ref{alg:offline_sens} satisfies either $z\in |C\cap Q|\pm 0.1 T_i(o)$ or $z\in(1\pm 0.01) \cdot |C\cap Q|$, and for any $Q_i$, the estimated value $\hat{q}_i$ in line~\ref{sta:estimated_level_size} of Algorithm~\ref{alg:offline_coreset} satisfies either $\hat{q}_i \in |Q_i|\pm 0.1\varepsilon\gamma T_i(o)$ or $\hat{q}_i \in (1\pm 0.01\epsilon)\cdot|Q_i| $.
Suppose $o\in (0,\OPT]$.
 Then the set $S$ output by Algorithm~\ref{alg:offline_coreset} is an $\varepsilon$-coreset (Definition~\ref{def:coreset_kmeans}) for $Q$. Furthermore, with probability at least $0.93$, $|S|$ is at most $O(k\varepsilon^{-2}d^4L^2\log(kdL)\cdot (\OPT/o + 1))$.
\end{theorem}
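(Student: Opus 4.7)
The plan is to apply the sensitivity-sampling guarantee of Theorem~\ref{thm:fl11} on the retained subset $Q_I = \bigcup_{i\in I}Q_i$, and separately argue that the discarded points $Q\setminus Q_I$ contribute only an $O(\varepsilon)$-fraction of the cost against any $Z$. Four sub-claims must be checked: (a) $s'(p) = 10d^3/T_i(o)$ is a valid upper bound on $s(p)$ for every $p\in Q_i$; (b) Algorithm~\ref{alg:offline_coreset}'s sampling probabilities and weights match, up to $(1\pm O(\varepsilon))$ factors, those prescribed by Theorem~\ref{thm:fl11} on $Q_I$; (c) $\cost(Q\setminus Q_I, Z)\le (\varepsilon/2)\,\cost(Q, Z)$ for every $Z$; and (d) $t' = O(kd^3 L(\OPT/o + 1))$ with probability $\ge 0.93$, which through $m=\Theta(t'\varepsilon^{-2}Ld\log t')$ yields the claimed $|S|$ bound.

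\textbf{Sensitivity validity.} For (a), fix $p\in Q_i$ with $i\ge 1$ and let $C' = c_{i-1}(p)$. Since $C'$ is marked heavy, the accuracy hypothesis on $z$ gives $|C'\cap Q|\ge 0.9\,T_{i-1}(o) = 0.225\,T_i(o)$ (using $T_{i-1}(o)=T_i(o)/4$). For any $k$-point set $Z$, apply $\dist^2(p,Z)\le 2\dist^2(p,p') + 2\dist^2(p',Z)$ to each $p'\in C'\cap Q$ and average over $p'$, using $\diam(C')^2 = 4 d g_i^2$, to obtain
\[
\dist^2(p,Z) \;\le\; 8 d g_i^2 \;+\; \frac{2\,\cost(C'\cap Q,Z)}{|C'\cap Q|}.
\]
Dividing by $\cost(Q,Z)\ge\OPT\ge o$ and substituting $g_i^2 = d^2 o/(100 k T_i(o))$ yields $s(p)\le 0.08\,d^3/(k T_i(o)) + 9/T_i(o)\le 10 d^3/T_i(o) = s'(p)$ for all $k,d\ge 1$. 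The case $i=0$ is analogous, with the level-$(-1)$ cell (marked heavy by default) acting as the parent and the residual slack handled by the trivial bound $s(p)\le 1$.

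\textbf{Sampling correctness and ignored levels.} For (c), each $i\notin I$ satisfies $\hat q_i<\gamma T_i(o)$, so by the $\hat q_i$-accuracy $|Q_i|\le O(\gamma T_i(o))$; using (a), $\sum_{p\in Q_i}\dist^2(p,Z)\le |Q_i|\cdot(10 d^3/T_i(o))\cdot\cost(Q,Z) = O(\gamma d^3)\,\cost(Q,Z)$. Summing over $\le L+1$ ignored levels and plugging in $\gamma=\varepsilon/(40^2 L d^3)$ gives $\cost(Q\setminus Q_I,Z)\le O(\varepsilon)\,\cost(Q,Z)$ uniformly in $Z$. For (b), set $t'' = \sum_{p\in Q_I} s'(p) = \sum_{i\in I}|Q_i|\cdot 10d^3/T_i(o)$; the $\hat q_i$-accuracy gives $t'\in(1\pm O(\varepsilon))t''$, so both the sampling probability $(\hat q_i\,s'(p))/(|Q_i|\,t')$ for $p\in Q_i$ and the weight $t'/(m s'(p))$ lie within $(1\pm O(\varepsilon))$ of the idealised $s'(p)/t''$ and $t''/(m s'(p))$. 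Since $m=\Theta(t'\varepsilon^{-2}Ld\log t')$ dominates Theorem~\ref{thm:fl11}'s requirement $\Omega(t'\varepsilon^{-2}(\log|Q|\log t' + \log(1/\delta)))$ (using $\log|Q|\le dL$ and constant $\delta$), the output is an $(\varepsilon/2)$-coreset for $Q_I$, which combined with (c) gives an $\varepsilon$-coreset for $Q$.

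\textbf{Bounding $|S|$: the main obstacle.} What remains is to prove $t' = O(kd^3 L(\OPT/o+1))$ with probability $\ge 0.93$. Collecting terms and absorbing the additive $\hat q_i$-error into the $\gamma$-threshold defining $I$,
\[
t' \;\lesssim\; 10 d^3 \sum_i \frac{|Q_i|}{T_i(o)} \;=\; \frac{1000\,k\,d}{o}\sum_{p\in Q_I} g_{\eta(p)}^2,
\]
where $\eta(p)$ is the unique level with $p\in Q_{\eta(p)}$ and the equality uses $1/T_i(o) = 100 k g_i^2/(d^2 o)$. The key step is to show $\sum_p g_{\eta(p)}^2 = O(d^2 L(\OPT + o))$ w.h.p.\ over the random shift. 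The plan is a two-part charging against an optimal center set $Z^*$: each heavy cell $C$ at level $i$ is either \emph{far} from every $z\in Z^*$ (distance $\ge g_i$), in which case $C$ absorbs $\ge T_i(o)g_i^2 = d^2 o/(100 k)$ of $\OPT$, giving at most $O(k\OPT/(d^2 o))$ such cells per level; or \emph{close} to some $z\in Z^*$, for which the random shift guarantees (in expectation) that each $z$ has only $\poly(d)$ close heavy cells per level. Summing over $L$ levels and $k$ centers and applying Markov converts the expectation into a probability-$0.93$ statement; plugging in yields $t' = O(kd^3 L(\OPT/o+1))$ and hence $|S| = m = O(k\varepsilon^{-2} d^4 L^2 \log(kdL)(\OPT/o+1))$. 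The subtle point is the random-shift counting of close heavy cells: a naive bound would pay $2^d$ from the fact that each heavy parent can have up to $2^d$ children, and keeping the dependence polynomial in $d$ requires exploiting the independence of the coordinate-wise random shifts in the ``close'' charging.
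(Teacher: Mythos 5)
Your high-level architecture matches the paper's: apply Theorem~\ref{thm:fl11} on the retained set $Q^I = \bigcup_{i\in I}Q_i$, show the discarded levels are negligible, and bound $t'$ by a far/close charging against an optimal $Z^*$. Sub-claim (c) is in fact a cleaner route than the paper's Lemma~\ref{lem:reducing_to_important_points}: once $s'(p)\ge s(p)$, each discarded level contributes $\cost(Q_i,Z)\le |Q_i|\,s'(p)\,\cost(Q,Z)\le 1.1\gamma\cdot 10d^3\cdot\cost(Q,Z)$, and summing over $L+1$ levels with $\gamma=\varepsilon/(40^2Ld^3)$ gives the $O(\varepsilon)$ bound directly. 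One wrinkle worth flagging: Theorem~\ref{thm:fl11} applied to $Q^I$ requires $s'(p)$ to bound the sensitivity of $p$ \emph{with respect to} $Q^I$, whereas your sub-claim (a) proves it with respect to the larger set $Q$. This is repairable --- your (c) gives $\cost(Q^I,Z)\ge(1-\varepsilon/2)\cost(Q,Z)$, so the two sensitivities differ only by a constant factor --- but the proposal does not say so; the paper's Lemma~\ref{lem:sensitivity_ub} and Claim~\ref{cla:heavy_cell_contains_important_points} exist precisely to handle this $Q$-vs-$Q^I$ distinction.

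The real gap is in (d), and you have, to your credit, named the obstacle without resolving it. Your ``close'' threshold for a cell in $G_i$ is distance $< g_i$ from some $z^*\in Z^*$. With that threshold, the random shift buys you nothing: for each coordinate, $z^*$ is \emph{always} within $g_i$ of a grid boundary (the grid has period $g_i$), so with probability one \emph{all} of the up to $2^d$ cells adjacent to the grid vertex nearest $z^*$ count as close, and the $2^d$ is unavoidable. The missing idea --- the content of the paper's Lemma~\ref{lem:num_center_cell} --- is to shrink the closeness threshold to $g_i/(2d)$. Then the per-coordinate probability that $z^*$ lies within $g_i/(2d)$ of a boundary drops to at most $1/d$, and by independence of the coordinate shifts the expected number of cells of $G_i$ within $g_i/(2d)$ of $z^*$ is $\prod_{\alpha=1}^{d}(1+1/d) = (1+1/d)^d \le e$. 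Summing over $k$ centers and $L+1$ levels gives $O(kL)$ center cells in expectation, and Markov converts this into the $0.94$-probability bound of $100kL$ center cells that the theorem conditions on. On the ``far'' side of the charge you then only get $\dist^2(p,Z^*)\ge (g_i/(2d))^2$ rather than $g_i^2$, costing a $(2d)^2$ factor, but that is exactly absorbed by the $d^2$ already present in the target $\sum_p g_{\eta(p)}^2 = O(d^2L(\OPT+o))$. Without the $1/(2d)$ scaling, your charging scheme yields a bound exponential in $d$ --- the very dependence this paper is designed to avoid. (A secondary point: the quantity $\sum_p g_{\eta(p)}^2 = \sum_i g_i^2|Q_i|$ runs over points of \emph{crucial} cells, not heavy cells as your charging is stated; the paper's Lemma~\ref{lem:total_sensitivity_ub} charges points of $Q_i$ directly and sidesteps this.)
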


\subsection{Analysis}\label{sec:offline_analysis}
 Now we give the proof of Theorem~\ref{thm:offline_sampling_coreset}.
 All the missing proofs in this section are given in Appendix~\ref{app:missing_proofs_offline}.
Let us first state some simple facts.
\begin{fact}\label{fac:Q_is_partitioned}
	The point sets $Q_0,Q_1,\ldots,Q_L$ obtained by Algorithm~\ref{alg:offline_sens} form a partition of $Q$, i.e., for all $ p\in Q,$ there is exactly one $i\in\{0,1,\ldots, L\}$ such that $p\in Q_i$.
\end{fact}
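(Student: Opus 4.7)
The plan is to trace, for each fixed point $p\in Q$, the nested chain of cells $c_{-1}(p)\supset c_0(p)\supset c_1(p)\supset\cdots\supset c_L(p)$ that contain $p$, and argue that exactly one cell in this chain gets marked as \emph{crucial} by Algorithm~\ref{alg:offline_sens}. Since $p\in Q_i$ iff $c_i(p)$ is crucial, this immediately gives the fact.

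The first observation to record is structural: for any cell $C\in G_i$ with $0\le i\le L-1$ and $C\cap Q\ne\emptyset$, the algorithm either marks $C$ heavy or (in the ``otherwise'' branch) possibly crucial, so heavy and crucial are mutually exclusive at these levels; and for $C\in G_L$, $C$ is only ever marked crucial, never heavy. Moreover, by construction, being crucial requires all strict ancestors of $C$ to be heavy.

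Next I would define $i^\ast(p)$ as the smallest index $i\in\{0,1,\ldots,L-1\}$ such that $c_i(p)$ is not heavy, with the convention $i^\ast(p)=L$ if $c_0(p),c_1(p),\ldots,c_{L-1}(p)$ are all heavy. Note that $c_{-1}(p)=c(Q)$ is heavy by the initialization step, so this is well-defined. I would then verify three claims in order:
\begin{itemize}
\item $c_{i^\ast(p)}(p)$ is marked crucial. In the case $i^\ast(p)<L$, all ancestors $c_{-1}(p),c_0(p),\ldots,c_{i^\ast(p)-1}(p)$ are heavy by minimality of $i^\ast(p)$, while $c_{i^\ast(p)}(p)$ itself is not heavy, so the ``otherwise'' branch fires. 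In the case $i^\ast(p)=L$, all ancestors $c_{-1}(p),c_0(p),\ldots,c_{L-1}(p)$ are heavy, so the final ``for $C\in G_L$'' step marks $c_L(p)$ crucial.
\item For $i<i^\ast(p)$, $c_i(p)$ is heavy, hence not crucial (by mutual exclusivity at these levels).
\item For $i>i^\ast(p)$, the ancestor $c_{i^\ast(p)}(p)$ of $c_i(p)$ is not heavy, so the ``all ancestors heavy'' precondition for being marked crucial fails, and $c_i(p)$ is not crucial.
\end{itemize}

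Combining these three claims, exactly one index $i\in\{0,1,\ldots,L\}$ yields $c_i(p)$ crucial, namely $i=i^\ast(p)$, so $p$ belongs to exactly one of $Q_0,Q_1,\ldots,Q_L$, which is what Fact~\ref{fac:Q_is_partitioned} asserts. There is no real obstacle here; the only mild subtlety is the asymmetric treatment of level $L$ (cells there are never checked for heaviness), which is what the convention $i^\ast(p)=L$ is designed to handle cleanly.
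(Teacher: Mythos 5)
Your proof is correct and takes essentially the same approach as the paper: identify for each $p$ the first non-heavy level along the chain of cells containing $p$ (using $L$ as the fallback if all levels through $L-1$ are heavy), observe that exactly that cell becomes crucial, and that no other cell in the chain can be crucial. The paper phrases this via ``the largest $l$ such that $C_{-1},\ldots,C_{l-1}$ are all heavy,'' which is the same index as your $i^\ast(p)$.
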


\begin{fact} \label{fact:trivial-1}
	For $ C\in G_i$, if $C$ is marked as heavy, then $|C\cap Q|\geq 0.9 T_i(o)$; otherwise $|C\cap Q|\leq 1.1 T_i(o)$. Similarly, if $i\in I$, then $|Q_i|\geq 0.9\gamma T_i(o)$; otherwise $|Q_i|\leq 1.1\gamma T_i(o)$.
\end{fact}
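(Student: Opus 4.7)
The statement is essentially a translation of the two approximation guarantees on the estimated counts (from the hypothesis of Theorem~\ref{thm:offline_sampling_coreset}) into clean two-sided bounds on the true counts $|C\cap Q|$ and $|Q_i|$. The plan is simply to unfold the definition of ``heavy'' (namely $z\ge T_i(o)$) and of ``$i\in I$'' (namely $\hat{q}_i\ge \gamma T_i(o)$), and then in each direction handle the two possible accuracy regimes of the estimator separately.

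For the first half, I would fix a cell $C\in G_i$ with $C\cap Q\neq\emptyset$ and let $z$ be the estimate appearing in line~\ref{sta:estimation_cell_size} of Algorithm~\ref{alg:offline_sens}. By hypothesis, either $z\in|C\cap Q|\pm 0.1\,T_i(o)$ or $z\in(1\pm 0.01)\cdot|C\cap Q|$. If $C$ is heavy then $z\ge T_i(o)$; in the additive regime this gives $|C\cap Q|\ge z-0.1\,T_i(o)\ge 0.9\,T_i(o)$, and in the multiplicative regime $|C\cap Q|\ge z/1.01\ge T_i(o)/1.01>0.9\,T_i(o)$. If $C$ is not heavy then $z<T_i(o)$, and the same two regimes give $|C\cap Q|\le z+0.1\,T_i(o)<1.1\,T_i(o)$ and $|C\cap Q|\le z/0.99<1.1\,T_i(o)$ respectively.

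The second half is structurally identical, with $\hat{q}_i$ replacing $z$, $|Q_i|$ replacing $|C\cap Q|$, and threshold $\gamma T_i(o)$ replacing $T_i(o)$; the only wrinkle is that the additive error is $0.1\varepsilon\gamma T_i(o)$ and the multiplicative error is $0.01\varepsilon$. Since $\varepsilon\in(0,1/2)$, we have $(1-0.1\varepsilon)>0.9$ and $1/(1+0.01\varepsilon)>0.9$ in the lower-bound case, and analogously $(1+0.1\varepsilon)<1.1$ and $1/(1-0.01\varepsilon)<1.1$ in the upper-bound case, so the desired $0.9\gamma T_i(o)$ and $1.1\gamma T_i(o)$ bounds follow immediately.

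There is essentially no obstacle here: the statement is a direct algebraic consequence of the estimator guarantee together with the thresholding rule. The only thing worth being a little careful about is that the two accuracy regimes (additive in units of $T_i(o)$ versus multiplicative) are covered separately, and that the constants $0.9$ and $1.1$ are loose enough that both regimes comfortably fit under the same bound; no use of the coreset machinery, the grid geometry, or the value of $o$ is needed.
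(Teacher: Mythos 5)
The paper itself declares this fact ``obvious from the algorithms'' and omits the proof, so there is no paper proof to compare against; your proposal is exactly the straightforward unfolding the authors had in mind. Your argument is correct: in each of the four combinations (heavy/non-heavy, $i\in I$ / $i\notin I$) you split on the additive vs.\ multiplicative regime of the estimator guarantee from the hypothesis of Theorem~\ref{thm:offline_sampling_coreset}, and the resulting bounds (e.g.\ $1/1.01\approx 0.99>0.9$, $1/0.99\approx 1.01<1.1$, $1\pm 0.1\varepsilon$ with $\varepsilon<1/2$, $1/(1\mp 0.01\varepsilon)$) all comfortably land inside $[0.9,1.1]$. One tiny remark for completeness: the first claim should be read for cells $C$ with $C\cap Q\neq\emptyset$ (those are the only ones for which $z$ is computed); for an empty cell the ``otherwise'' bound $|C\cap Q|=0\leq 1.1\,T_i(o)$ holds vacuously and the cell is never marked heavy, so nothing breaks.
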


\begin{fact} \label{fact:trivial-2}
For $Q_i$ output by Algorithm~\ref{alg:offline_sens}, every point $p\in Q_i$ is assigned the same sensitivity upper bound $s'(p)=10d^3/T_i(o)$.
\end{fact}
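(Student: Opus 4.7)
The plan is to read the statement directly off the pseudocode of Algorithm~\ref{alg:offline_sens}. The assignment of $s'(\cdot)$ occurs in exactly one place, namely the penultimate line of the algorithm, which states: for $p \in Q$, if $c_i(p)$ is marked as crucial, then $p$ is added into $Q_i$ and the value $s'(p) = 10d^3/T_i(o)$ is set. In particular, membership in $Q_i$ and the numerical value assigned to $s'(p)$ are produced by the same rule and governed by the same index $i$.

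First I would invoke Fact~\ref{fac:Q_is_partitioned}, which guarantees that for each $p \in Q$ there is a unique index $i \in \{0, 1, \ldots, L\}$ such that $p \in Q_i$; equivalently, exactly one cell $c_i(p)$ on the path of ancestors of $p$ is marked crucial. This ensures that the assignment rule fires for $p$ with a single well-defined value of $i$, so that $s'(p)$ is unambiguously defined.

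Next I would observe that for a fixed $i$ (and fixed guess $o$), the quantity $T_i(o) = (d/g_i)^2 \cdot o/k \cdot 1/100$ depends only on $i, d, k, o$, and in particular does not depend on the identity of $p$. Consequently $10d^3/T_i(o)$ is a constant across all $p$ placed into $Q_i$, so every $p \in Q_i$ is assigned the same sensitivity upper bound $s'(p) = 10d^3/T_i(o)$. No genuine obstacle arises: the claim is a direct readout of the algorithm together with the partition property of Fact~\ref{fac:Q_is_partitioned}, and the entire proof is a one-line inspection.
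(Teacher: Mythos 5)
Your argument is correct and is exactly the direct inspection the paper has in mind: the paper explicitly omits the proof of this fact, stating that it is obvious from the algorithm. Your additional invocation of Fact~\ref{fac:Q_is_partitioned} to ensure $s'(p)$ is well-defined is a sensible (and implicit in the paper) touch, but otherwise there is nothing to compare—your one-line readout is the proof.
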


In line~\ref{sta:set_important_level_I} of Algorithm~\ref{alg:offline_coreset}, we set $I$ to be the set of levels such that there are sufficient number of points in the crucial cells in those levels. 
The following lemma shows that the point set
$Q^I$ (line~\ref{sta:definition_of_QI} of Algorithm~\ref{alg:offline_coreset}) is 
a good representative of the point set $Q$, i.e., for any set of $k$ centers $Z$, the $k$-means cost $\cost(Q,Z)$ is close to the $\cost(Q^I,Z)$.

\begin{lemma}\label{lem:reducing_to_important_points}
Let $Q^I$ and $\varepsilon$ be the same as in Algorithm~\ref{alg:offline_coreset}. If $o\in(0,\OPT]$, then for any $Z\subseteq \mathbb{R}^d$ with $|Z|=k$, we have:
$
\cost(Q^I,Z)\leq \cost(Q,Z)\leq (1+\varepsilon/10) \cost(Q^I,Z).
$
\end{lemma}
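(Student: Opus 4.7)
The plan is to prove the two inequalities separately. The lower bound $\cost(Q^I,Z)\le\cost(Q,Z)$ is immediate because $Q^I\subseteq Q$ and every squared distance is nonnegative. For the upper bound, it suffices to establish $\cost(Q\setminus Q^I,Z)\le (\varepsilon/20)\cost(Q,Z)$; rearranging then yields $\cost(Q,Z)\le(1-\varepsilon/20)^{-1}\cost(Q^I,Z)\le(1+\varepsilon/10)\cost(Q^I,Z)$ for $\varepsilon<1/2$.

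By Fact~\ref{fac:Q_is_partitioned}, $Q\setminus Q^I=\bigcup_{i\notin I}Q_i$. Fix any $i\notin I$ and $p\in Q_i$: the cell $c_i(p)$ is crucial by definition of $Q_i$, so its parent $C':=c_{i-1}(p)$ is heavy, which for $i\ge 1$ gives $|C'\cap Q|\ge 0.9\,T_{i-1}(o)$ by Fact~\ref{fact:trivial-1}. The central step is an averaging argument. The triangle inequality implies $\dist^2(p,Z)\le 2\dist^2(p,q)+2\dist^2(q,Z)$ for every $q$; averaging this over $q\in C'\cap Q$ and using that $C'$ has diameter $g_{i-1}\sqrt{d}$ yields
\[
\dist^2(p,Z)\;\le\;2d\,g_{i-1}^2 \;+\; \frac{2}{|C'\cap Q|}\,\cost(C'\cap Q,Z).
\]

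Summing over $p\in Q_i$ and then over $i$ splits the total into a ``diameter'' part and an ``averaging'' part. For the diameter part, $|Q_i|\le 1.1\gamma T_i(o)$ combined with the identity $T_i(o)g_{i-1}^2=4d^2 o/(100k)$ yields a per-level contribution of $O(\gamma d^3 o/k)$; summing over the $O(L)$ levels and substituting $\gamma=\varepsilon/(40^2Ld^3)$ collapses this to $O(\varepsilon o/k)$, which is $O(\varepsilon/k)\cost(Q,Z)$ via $o\le\OPT\le\cost(Q,Z)$. For the averaging part, I would group terms by parent cell $C'\in G_{i-1}$. Since parent cells at a fixed level are pairwise disjoint, $\sum_{C'}\cost(C'\cap Q,Z)\le\cost(Q,Z)$, and the level-$i$ total is bounded by $\bigl(\max_{C'}2n_i(C')/|C'\cap Q|\bigr)\cost(Q,Z)$, where $n_i(C')$ counts the $p\in Q_i$ with $c_{i-1}(p)=C'$. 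Using $n_i(C')\le|Q_i|\le 1.1\gamma T_i(o)=4.4\gamma T_{i-1}(o)$ and $|C'\cap Q|\ge 0.9T_{i-1}(o)$ gives $O(\gamma)\cost(Q,Z)$ per level, hence $O(L\gamma)\cost(Q,Z)=O(\varepsilon/d^3)\cost(Q,Z)$ after summing.

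The boundary case $i=0$ needs a brief extra argument because the ``parent'' $c_{-1}(p)$ is the unique level-$(-1)$ cell containing $Q$ and has no threshold $T_{-1}(o)$. If $Q_0\ne\emptyset$ and $0\notin I$, I claim at least one level-$0$ cell must be heavy: otherwise $Q_0=Q$ and hence $|Q|\le 1.1\gamma T_0(o)$, which combined with $\OPT\le d\Delta^2|Q|$ (yielding $T_0(o)\le d^3|Q|/(100k)$) and the tiny value of $\gamma$ forces $|Q|<|Q|$, a contradiction. Consequently $|Q|\ge T_0(o)$, so the averaging goes through with $|C'\cap Q|=|Q|\ge T_0(o)$, and the same $O(\gamma)\cost(Q,Z)$ per-level bound holds. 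Combining the two parts yields $\cost(Q\setminus Q^I,Z)\le(\varepsilon/20)\cost(Q,Z)$, completing the proof. The main obstacle I anticipate is the bookkeeping of absolute constants so that the final $\varepsilon/10$ slack is preserved; the choice $\gamma=\varepsilon/(40^2Ld^3)$ is tuned precisely to provide this margin, and tracking all intermediate constants carefully is the most delicate part.
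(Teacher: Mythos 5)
Your proof is correct and follows the same high-level template as the paper --- averaging inside a heavy parent cell, splitting into a ``diameter'' and an ``averaging'' contribution, and absorbing the self-referential $\cost(Q,Z)$ by rearrangement --- but it differs in one substantive technical choice. The paper averages over $c_{i-1}(p)\cap Q^I$, which requires invoking Claim~\ref{cla:heavy_cell_contains_important_points} (an inductive argument showing heavy cells retain most of their mass inside $Q^I$) to guarantee that $c_{i-1}(p)\cap Q^I$ is nonempty and has size $\geq T_{i-1}(o)/2$; the self-referential $\cost(Q,Z)$ then enters only through the diameter term via $o\leq\OPT\leq\cost(Q,Z)$. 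You instead average over $c_{i-1}(p)\cap Q$, for which $|c_{i-1}(p)\cap Q|\geq 0.9\,T_{i-1}(o)$ is immediate from heaviness (Fact~\ref{fact:trivial-1}), so Claim~\ref{cla:heavy_cell_contains_important_points} is not needed at all; the price is that $\cost(Q,Z)$ now appears in the averaging term, which you then move to the left-hand side. The per-level grouping by parent cell gives the same $O(\gamma)\cost(Q,Z)$ bound one gets from the paper's cruder ``$\sum_{q\in C'\cap Q^I}\dist^2(q,Z)\leq\cost(Q^I,Z)$'' step, so this part is a stylistic refinement rather than a necessary one. What your route genuinely buys is a more self-contained proof that sidesteps the induction behind Claim~\ref{cla:heavy_cell_contains_important_points}, plus an explicit treatment of the $i=0$ boundary case (the level-$(-1)$ cell is heavy by fiat rather than by count, so $|c_{-1}(p)\cap Q|\geq 0.9T_{-1}(o)$ is not automatic); the paper glosses over this point, and your argument that some level-$0$ cell must be heavy whenever $Q_0\neq\emptyset$ and $0\notin I$, hence $|Q|\geq 0.9T_0(o)$, closes that small gap cleanly. (Note that Claim~\ref{cla:heavy_cell_contains_important_points} is still needed elsewhere in the paper, e.g.\ in Lemma~\ref{lem:sensitivity_ub}, so your variant does not eliminate it globally.) Constants all check out: with $\gamma=\varepsilon/(40^2Ld^3)$, the averaging part sums to at most roughly $20L\gamma\cost(Q,Z)\leq\varepsilon\cost(Q,Z)/(80d^3)$ and the diameter part to at most $O(\varepsilon/k)\cost(Q,Z)$, comfortably below the $\varepsilon/20$ target.
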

Next, instead of showing $s'(p)$ (output by Algorithm~\ref{alg:offline_sens}) is a sensitivity upper bound with respect ot $Q$, we show that $s'(p)$ is also an sensitivity upper bound with respect to $Q^I$. This is even stronger since $Q^I$ is a subset of $Q$ and we have:
\vspace{-0.1in}
\begin{align*}
\forall Z\subseteq \mathbb{R}^d :|Z|=k,\ \frac{\dist^2(p,Z)}{\sum_{q\in Q} \dist^2(q,Z)}\leq \frac{\dist^2(p,Z)}{\sum_{q\in Q^I}\dist^2(q,Z)}.
\end{align*}
\vspace{-0.2in}

\begin{lemma}\label{lem:sensitivity_ub}
Let $Q^I$ be the same as in Algorithm~\ref{alg:offline_coreset}. If $o\in (0,\OPT]$, then for all $ i\in\{0,1,2,\ldots,L\}$ and $ p\in Q_i $, we have:
\vspace{-0.1in}
\begin{align*}
\max_{Z\subseteq \mathbb{R}^d:|Z|=k}\frac{\dist^2(p,Z)}{\sum_{q\in Q^I}\dist^2(q,Z)} \leq 10\frac{d^3}{T_i(o)} = s'(p).
\end{align*}
\end{lemma}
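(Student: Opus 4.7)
The plan is to avoid any delicate counting of how many points of $c_{i-1}(p)$ survive into $Q^I$, and instead first bound the sensitivity of $p$ with respect to the \emph{full} dataset $Q$ and then transfer to $Q^I$ via Lemma~\ref{lem:reducing_to_important_points}. Fix any $i\in\{0,1,\ldots,L\}$, $p\in Q_i$, and $Z\subseteq\mathbb{R}^d$ with $|Z|=k$. Because $p\in Q_i$, Algorithm~\ref{alg:offline_sens} marked $c_i(p)$ crucial, which forces its parent $c_{i-1}(p)$ (for $i\geq 1$; for $i=0$ the parent is the $G_{-1}$ cell containing all of $Q$) to be heavy. Fact~\ref{fact:trivial-1} thus supplies $|c_{i-1}(p)\cap Q|\geq 0.9\,T_{i-1}(o)$.

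For the full-dataset bound, I would use a direct averaging over this heavy parent. Any $q\in c_{i-1}(p)$ lies in a cell of side-length $g_{i-1}$ together with $p$, so $\dist(p,q)\leq\sqrt{d}\,g_{i-1}$ and hence $\dist^2(p,Z)\leq 2dg_{i-1}^2+2\dist^2(q,Z)$. Averaging over $q\in c_{i-1}(p)\cap Q$ and invoking the lower bound $|c_{i-1}(p)\cap Q|\geq 0.9\,T_{i-1}(o)$ yields
\[
\dist^2(p,Z)\;\leq\;2dg_{i-1}^2+\frac{2}{0.9\,T_{i-1}(o)}\,\cost(Q,Z).
\]
The hypothesis $o\leq\OPT\leq\cost(Q,Z)$ lets me absorb the additive $2dg_{i-1}^2$ term into a multiple of $\cost(Q,Z)$, producing the estimate $\dist^2(p,Z)/\cost(Q,Z)\leq 2dg_{i-1}^2/o+2/(0.9\,T_{i-1}(o))$.

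To pass from $Q$ to $Q^I$, I invoke Lemma~\ref{lem:reducing_to_important_points}, which supplies $\cost(Q^I,Z)\geq\cost(Q,Z)/(1+\varepsilon/10)$; then $\dist^2(p,Z)/\cost(Q^I,Z)$ is at most a factor $(1+\varepsilon/10)$ times the previous estimate. Substituting $g_{i-1}=2g_i$, $T_{i-1}(o)=T_i(o)/4$, and the definition $T_i(o)=d^2o/(100kg_i^2)$, the first summand becomes an $O(1/k)$ fraction of the target $10d^3/T_i(o)$ while the second becomes an $O(1/d^3)$ fraction of the same target. For $k\geq 1$, $d\geq 1$, and $\varepsilon\leq 1/2$, the two contributions together stay below $10d^3/T_i(o)=s'(p)$, which after taking the maximum over $Z$ gives exactly the claimed bound.

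There is no genuinely hard step once Lemma~\ref{lem:reducing_to_important_points} is in hand; the main subtlety is bookkeeping the constants and handling the $i=0$ case, in which the ``parent'' is the oversize virtual $G_{-1}$ cell rather than a cell whose size is controlled by the algorithm. In that case one replaces the use of $|c_{i-1}(p)\cap Q|\geq 0.9\,T_{-1}(o)$ by the trivial equality $|c_{-1}(Q)\cap Q|=|Q|$ and invokes the crude bound $\OPT\leq|Q|\cdot d\Delta^2$, i.e., $|Q|\geq o/(d\Delta^2)$, to push the same calculation through with the same $10d^3/T_0(o)$ target.
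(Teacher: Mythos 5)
Your proof is correct, and it takes a genuinely different (and arguably cleaner) route than the paper's. The paper averages $\dist^2(q,Z)$ over $q\in c_{i-1}(p)\cap Q^I$, which requires first invoking Claim~\ref{cla:heavy_cell_contains_important_points} to guarantee that enough of the parent's points survive into $Q^I$ (specifically $|c_{i-1}(p)\cap Q^I|\geq T_i(o)/4.5$). You instead average over $q\in c_{i-1}(p)\cap Q$, for which Fact~\ref{fact:trivial-1} directly supplies the lower bound $0.9\,T_{i-1}(o)$, and only afterwards trade the normalizer $\cost(Q,Z)$ for $\cost(Q^I,Z)$ using the ready-made Lemma~\ref{lem:reducing_to_important_points}, at the cost of a benign $(1+\varepsilon/10)$ multiplicative factor. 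This decouples the sensitivity bound from Claim~\ref{cla:heavy_cell_contains_important_points} entirely (that claim is still used, but only inside the black-boxed Lemma~\ref{lem:reducing_to_important_points}), which makes the dependency structure a bit more modular. Your constant accounting checks out: the two contributions come to $(1+\varepsilon/10)\bigl(\tfrac{0.08}{k}+\tfrac{8.89}{d^3}\bigr)d^3/T_i(o)\leq 0.95\cdot 10d^3/T_i(o)$ for $k,d\geq 1$ and $\varepsilon\leq 1/2$, so the $10d^3/T_i(o)$ target is met with a little room to spare.

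One point in your favor worth emphasizing: your explicit treatment of $i=0$ is not merely bookkeeping. The paper's justification ``$|C\cap Q|\geq 0.9\,T_{i-1}(o)$'' invokes Fact~\ref{fact:trivial-1}, but that fact is about grids $G_i$ with $i\geq 0$; for $i=0$ the parent lives in $G_{-1}$, which is declared heavy unconditionally and carries no such threshold guarantee. Your replacement $|c_{-1}(p)\cap Q|=|Q|\geq o/(d\Delta^2)$ (from $\OPT\leq|Q|\cdot d\Delta^2$) is exactly the fix needed; plugging $g_{-1}=2\Delta$ into the same chain yields a bound of $(1+\varepsilon/10)\cdot 10\,d\Delta^2/o$, comfortably below the target $10d^3/T_0(o)=1000k\,d\Delta^2/o$. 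So your write-up actually closes a small crack in the paper's own argument.
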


Now, we explain the reason of imposing randomly shifted grids. We fix an optimal set $Z^*=\{z_1^*,z_2^*,\ldots,z_k^*\}\subset \mathbb{R}^d$ of $k$ centers for the point set $Q$, i.e., $\cost(Q,Z^*)=\OPT$. We call a cell $C\in G_i$ a \emph{center cell} if it is close to a center in $Z^*$, namely $\dist(C,Z^*)\leq g_i/(2d)$. We claim that there will not be too many center cells since we randomly shift the grids. In other words, each center of $Z^*$ is far from the boundary of every gird.

\begin{lemma}\label{lem:num_center_cell}
	With probability at least $0.94$, the total number of center cells is at most $100 kL$. 
\end{lemma}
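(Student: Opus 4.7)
The plan is to bound the expected total number of center cells by $O(kL)$ and then apply Markov's inequality. The argument factorizes across the $d$ coordinates because the random shift $v=(v_1,\ldots,v_d)$ has independent coordinates uniform on $[0,\Delta]$.

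Fix $z^*\in Z^*$ and a level $i\in\{-1,0,\ldots,L\}$. For a cell $C\in G_i$ with $\dist(C,z^*)\le g_i/(2d)$, Euclidean distance dominates the coordinate-wise distance, so in every coordinate $j$ the $j$-th axis interval of $C$ must lie within $g_i/(2d)$ of $z_j^*$. Denoting by $X_j$ the number of 1D intervals of $G_i$'s projection onto axis $j$ within that threshold, the number of center cells associated with $z^*$ at level $i$ is at most $\prod_{j=1}^d X_j$.

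For a single coordinate $j$, since $v_j\sim\mathrm{Unif}[0,\Delta]$ and $g_i$ divides $\Delta$, the offset $a_j:=(z_j^*-v_j)\bmod g_i$ is uniform on $[0,g_i)$. The interval containing $z_j^*$ always qualifies (distance $0$); its left neighbor qualifies iff $a_j\le g_i/(2d)$ and its right neighbor iff $g_i-a_j\le g_i/(2d)$; these two events are disjoint (as $1/(2d)<1/2$) and each has probability $1/(2d)$; no further interval can qualify since every other one is at distance at least $g_i$ from $z_j^*$ in coordinate $j$. Thus $\E[X_j]=1+1/d$, and by independence of $v_1,\ldots,v_d$, $\E\bigl[\prod_j X_j\bigr]=(1+1/d)^d\le e$.

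Summing over the $k$ centers in $Z^*$ and the $L+2$ grid levels gives an expected total of at most $e\,k(L+2)\le 6kL$ for $L\ge 1$. Markov's inequality then yields
\[
\Pr\bigl[\text{total number of center cells}\ge 100kL\bigr]\le \frac{6kL}{100kL}=0.06,
\]
which is the desired $0.94$ success probability. The main technical point I expect to require care is the disjoint-events count in the per-coordinate analysis that then makes the $d$-fold product bound $(1+1/d)^d\le e$ independent of $d$; once this is in hand, the Markov step is routine, and the boundary grids $G_{-1}$ (cell side $2\Delta$, where the shift does not fully cover a cell) and $G_L$ (unit-sided cells) are handled identically by the same $L_\infty$-counting argument and absorbed into the constant.
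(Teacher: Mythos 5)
Your argument is essentially the paper's argument: upper-bound the number of center cells per center and per level by a product over coordinates (your $\prod_j X_j$ equals the paper's $N = 2^{\sum_\alpha X_{j,\alpha}}$ with $X_j = 1 + X_{j,\alpha}$), use independence of the shift coordinates to get $\E[\prod_j X_j] = (1+1/d)^d \le e$, sum over $k$ centers and $O(L)$ levels, and finish with Markov. The one place you diverge is that you include $G_{-1}$ in the count, which you should not: the center cells that matter for Lemma~\ref{lem:total_sensitivity_ub} are only those in $G_0,\ldots,G_L$. Your inclusion creates two small problems you acknowledge but do not really resolve. First, the numerical claim $e\,k(L+2)\le 6kL$ fails for $L=1$ (indeed $3e\approx 8.15 > 6$); the paper's $e\,k(L+1)\le 6kL$ does hold for all $L\ge 1$. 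Second, at $G_{-1}$ the offset $(z_j^*-v_j)\bmod g_{-1}$ is \emph{not} uniform on $[0,g_{-1})$ because the shift $v_j\sim\mathrm{Unif}[0,\Delta]$ covers only half the period $g_{-1}=2\Delta$; if one does include that level, the boundary-proximity probability per coordinate is bounded by $2/d$ rather than $1/d$, giving $(1+2/d)^d\le e^2$ for that level, and the constant degrades further. Restricting to $G_0,\ldots,G_L$ removes both issues and yields exactly the paper's calculation.
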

This lemma is similar to Lemma 2.2 in \citep{bflsy17}.
For completeness, we also provide a proof in Appendix~\ref{app:missing_proofs_offline}.


 Consider the total estimated sensitivities, i.e., the sum of the sensitivity upper bounds over all the points. Due to Theorem~\ref{thm:fl11}, this sum determines the size of the coreset. We show that if the estimate $o$ of the optimal $k$-means cost is close to $\OPT$, then the total estimated sensitivities can not be too large.

\begin{lemma}\label{lem:total_sensitivity_ub}
Suppose the number of center cells is at most $100kL$.
Let $Q_0,Q_1,\ldots,Q_L,s'(\cdot)$ be the output of Algorithm~\ref{alg:offline_sens}. Then the total estimated sensitivities satisfies
$
\sum_{p\in Q} s'(p)\le  4000kLd^3   \cdot \left(\OPT/o + 1\right).
$
\end{lemma}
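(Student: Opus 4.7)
The plan is to reduce the desired bound to the simpler inequality $\sum_i |Q_i|/T_i(o) \le O(kL(\OPT/o+1))$ and then, for each level, to partition the points $Q_i$ by the type of the crucial cell that contains them. Using Fact~\ref{fac:Q_is_partitioned} (the $Q_i$ partition $Q$) together with Fact~\ref{fact:trivial-2} (every $p\in Q_i$ has $s'(p)=10d^3/T_i(o)$), one immediately gets
\[
\sum_{p\in Q}s'(p) \;=\; 10d^3\sum_{i=0}^{L}\frac{|Q_i|}{T_i(o)},
\]
so it suffices to bound the right-hand side by $400kL(\OPT/o+1)$.

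For the main step I would split $Q_i=Q_i^{\mathrm{c}}\cup Q_i^{\mathrm{nc}}$ according to whether the crucial cell $c_i(p)\in G_i$ is itself a center cell. For $p\in Q_i^{\mathrm{nc}}$, the non-center property gives $\dist(p,Z^*)\ge \dist(c_i(p),Z^*)>g_i/(2d)$, so every such point contributes more than $g_i^2/(4d^2)$ to $\OPT=\cost(Q,Z^*)$; this yields $|Q_i^{\mathrm{nc}}|\le 4d^2\OPT/g_i^2$, and substituting $T_i(o)=d^2o/(100kg_i^2)$ gives $|Q_i^{\mathrm{nc}}|/T_i(o)\le 400k\OPT/o$, a bound that is independent of $i$. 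For the center part, Fact~\ref{fact:trivial-1} says a crucial cell contains at most $1.1T_i(o)$ points, so $|Q_i^{\mathrm{c}}|\le 1.1T_i(o)\cdot L_i^{\mathrm{center}}$, where $L_i^{\mathrm{center}}$ is the number of center cells at level $i$; hence $|Q_i^{\mathrm{c}}|/T_i(o)\le 1.1L_i^{\mathrm{center}}$.

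Summing over $i\in\{0,1,\dots,L\}$, the non-center contribution is at most $(L+1)\cdot 400k\OPT/o = O(kL\cdot\OPT/o)$, and the center contribution is at most $1.1\sum_i L_i^{\mathrm{center}}\le 1.1\cdot 100kL = O(kL)$ by the hypothesis inherited from Lemma~\ref{lem:num_center_cell}. Multiplying by the leading $10d^3$ recovers the stated bound $\sum_{p\in Q}s'(p)\le 4000kLd^3(\OPT/o+1)$ (with a bit of slack in the absolute constants).

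The main obstacle is conceptual rather than computational. The tempting route is to partition $Q_i$ by whether the heavy \emph{parent} cell $c_{i-1}(p)$ is a center cell, since this is what Lemma~\ref{lem:num_center_cell} directly counts. But that route has to bound the number of crucial children of each heavy parent, and a heavy cell can have up to $2^d$ children; this introduces a $2^d$ factor that would ruin the target $d^3$ dependence. The fix is to classify by the center-ness of the crucial cell $c_i(p)$ itself: non-center crucial cells are charged to $\OPT$ via the $g_i/(2d)$ distance lower bound without ever counting cells, while center crucial cells are few in number by Lemma~\ref{lem:num_center_cell}, so the branching factor never enters.
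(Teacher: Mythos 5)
Your proof is correct and follows essentially the same route as the paper's: both reduce to $\sum_{p} s'(p) = 10d^3\sum_i |Q_i|/T_i(o)$, split each $Q_i$ by whether the (crucial) cell $c_i(p)\in G_i$ is a center cell, charge the non-center part to $\OPT$ via the $g_i/(2d)$ distance lower bound and the identity $T_i(o)=d^2 o/(100kg_i^2)$, and bound the center part by $1.1T_i(o)$ per crucial center cell times the $100kL$ cap on center cells. Your closing remark about why to classify by the crucial cell $c_i(p)$ rather than the heavy parent $c_{i-1}(p)$ (to avoid a $2^d$ branching factor) is correct intuition and matches what the paper implicitly does, though the paper does not spell it out.
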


Since $Q^I$ is a subset of $Q$, according to the above lemma, we also have $\sum_{p\in Q^I} s'(p)\le 4000d^3 L k  \cdot \left(\OPT/o + 1\right) $.
Now, we are ready to prove Theorem~\ref{thm:offline_sampling_coreset}.

\begin{proofof}{Theorem~\ref{thm:offline_sampling_coreset}}
By Lemma~\ref{lem:num_center_cell}, with probability at least $0.94$, the number of center cells is at most $100kL$. 
In the following, we condition on this event. 

Algorithm~\ref{alg:offline_coreset} draws $m$ i.i.d. samples. For each sample, a point $p\in Q_i\subseteq Q^I$ is chosen with probability 
\vspace{-0.2in}
\begin{align*}
\frac{\hat{q}_i/T_i(o)}{\sum_{j\in I} \hat{q}_j/T_j(o)} \cdot \frac{1}{|Q_i|} = \frac{\frac{\hat{q}_i}{|Q_i|}\cdot 20\frac{d^3}{T_i(o)}}{\sum_{j\in I}\sum_{p'\in Q_j} \frac{\hat{q}_j}{|Q_j|}\cdot 20\frac{d^3}{T_j(o)}}.
\end{align*}
\vspace{-0.1in}
Each sample $p$ is given a weight 
\begin{align*}
\frac{t'}{ms'(p)} = \frac{1}{m} \cdot \frac{\sum_{j\in I}\sum_{p'\in Q_j} \frac{\hat{q}_j}{|Q_j|}\cdot 20\frac{d^3}{T_j(o)}}{\frac{\hat{q}_i}{|Q_i|}\cdot 20\frac{d^3}{T_i(o)}}\cdot \frac{\hat{q}_i}{|Q_i|}\in (1\pm \varepsilon/4) \cdot \frac{1}{m}\cdot \frac{\sum_{j\in I}\sum_{p'\in Q_j} \frac{\hat{q}_j}{|Q_j|}\cdot 20\frac{d^3}{T_j(o)}}{\frac{\hat{q}_i}{|Q_i|}\cdot 20\frac{d^3}{T_i(o)}}.
\end{align*}
Let $s''(p)=\frac{\hat{q}_i}{|Q_i|}\cdot 20\frac{d^3}{T_i(o)}$. Since $\hat{q_i}/|Q_i|\geq 1/2,$ we know that $s''(p)$ is still a sensitivity upper bound of $p$ with respect to $Q^I$ by Lemma~\ref{lem:sensitivity_ub}.
According to Algorithm~\ref{alg:offline_coreset}, we have $t'=\frac{1}{2}\sum_{p\in Q^I} s''(p)$. Therefore, if we set $m$ to be a sufficiently large $\Omega(t'\varepsilon^{-2}Ld\log t')=\Omega(t'\varepsilon^{-2}(\log(\Delta^d)\log t'+\log (1/0.01)))$, then according to Theorem~\ref{thm:fl11}, $S$ output by Algorithm~\ref{alg:offline_coreset} is an $\varepsilon/2$-coreset for $Q^I$ with probability at least $0.99$. By Lemma~\ref{lem:reducing_to_important_points}, if $S$ is an $\varepsilon/2$-coreset for $Q^I$, then $S$ is also an $\varepsilon$-coreset for $Q$. Thus, the correctness is proved, and the overall success probability is at least $0.93$ obtained by a simple union bound.
Now let us analyze the size of the coreset $S$.
  Since $\forall j\in I, \hat{q}_j/|Q_j|\leq 2,$ we have $\sum_{j\in I}\sum_{p'\in Q_j} s''(p')\leq 2t'\leq 4\cdot 4000 d^3Lk\cdot(\OPT/o + 1)$ by Lemma~\ref{lem:total_sensitivity_ub}. Thus, the size of $S$ is $m=O(k\varepsilon^{-2}d^4L^2\log(kdL)\cdot (\OPT/o + 1))$.
\end{proofof}

\vspace{-0.15in}
\section{Coreset Construction over a Dynamic Stream}\label{sec:streaming}
In this section, we show how to implement Algorithms~\ref{alg:offline_sens}~and~\ref{alg:offline_coreset} in the dynamic streaming setting.
We defer all the missing details in this section to Appendix~\ref{sec:missing_details_of_dynamic_stream}.

First, we introduce a dynamic storage structure that allows us to insert and delete points or cells.
We then use this data structure combined with hash functions to estimate the number of points falling into each cell.
Lastly, we combine them with the sensitivity sampling procedure to obtain our final algorithm.

\begin{algorithm}[h!]
	\small
	\begin{algorithmic}[1]\caption{Point-cell storing procedure}\label{alg:storing}
		\STATE \textsc{Storing}$(G_i,\alpha,\beta,\delta)$: 
		\STATE {\bfseries Input:} $\{((p_1,l_1),\pm),((p_2,l_2),\pm),\ldots\}$ \qquad\codecmt{
			$l_t\in[\hat{m}]$. Only Algorithm~\ref{alg:sensitivity_sampler} uses the case for $\hat{m}>1$.}
		\STATE Run \textsc{Distinct}$(\alpha,\delta/4)$  on $\{(c_i(p_1),\pm),(c_i(p_2),\pm),\ldots\}$ in parallel. \label{sta:global_estimation}
		\STATE Set $r=\lceil\log(4\alpha/\delta)\rceil$ and $h_1,h_2,\ldots,h_r$, $\forall j\in[r],$ $h_j:G_i\rightarrow [2 \alpha]$. ~~~ \codecmt{$h_j$ is pairwise independent.}
		\STATE Run $r\cdot 2 \alpha$ copies of \textsc{Distinct}$(\beta,\delta/(2\alpha ))$ in parallel.
		\label{sta:multi_copy_distinct}\\
		\qquad\codecmt{	
			Each copy is indexed by a pair $(j,b)\in [r]\times [2 \alpha]$. The $(j,b)$-th copy 
			is run on the sub-stream $\{((p'_1,l'_1),\pm),((p'_2,l'_2),\pm),\ldots\}$, where each $((p'_t,l'_t),\pm)$ satisfies $h_j(c_i(p'_t)) = b$.}
		\STATE If line~\ref{sta:global_estimation} returns \textbf{FAIL}, output \textbf{FAIL}; otherwise, let $\mathcal{C},f:\mathcal{C}\rightarrow \mathbb{N}_{+}$ be the output of line~\ref{sta:global_estimation}.\\
		\qquad\qquad\qquad\qquad
		\codecmt{$\mathcal{C}\subset G_i$ contains all the cells found, and 
			$f(C)$ denotes the number of points in $C$.}
		\STATE Initialize $S\gets\emptyset$.
		\FOR{$C\in\mathcal{C}$ with $f(C)\leq \beta$}
		\STATE Find $j\in[r]$ s.t. 
		$\forall C'\in\mathcal{C},h_j(C)\not = h_j(C')$
		and
		the $(j,h_j(C))$-th copy 
		in line~\ref{sta:multi_copy_distinct} does not \textbf{FAIL}.
		\STATE If such $j$ does not exist, output \textbf{FAIL};
		otherwise $S\leftarrow S\cup S_{j,h_j(C)}$.\\
		\qquad\qquad\codecmt{Here $S_{j,h_j(C)}$ is the set of distinct points found by the $(j,h_j(C))$-th copy 
			in line~\ref{sta:multi_copy_distinct}.}
		\ENDFOR
		\STATE {\bfseries Output:} $\mathcal{C}\subset G_i,f:\mathcal{C}\rightarrow \mathbb{N}_+,S=\{(\tilde{p}_1,\tilde{l}_1),(\tilde{p}_2,\tilde{l}_2),\ldots\}$
	\end{algorithmic}
\end{algorithm}

\paragraph{The dynamic point-cell storing data structure.}
We introduce an algorithm that maintains a set of points and cells in a dynamic data stream. Before that, let us recall \cite{ganguly2005counting}'s result for finding distinct elements, which we use as a subroutine in our algorithm.

\begin{lemma}[Distinct elements~\citep{ganguly2005counting}]\label{lem:k_set}
Given parameters $M \geq 1,N \geq 1,s \geq 1,\delta \in (0,1/2)$, there is an algorithm \textsc{Distinct}$(s,\delta)$ that requires $O(s ( \log M + \log N) \log (s/\delta))$ bits to process a stream of insertion/deletion of data items. For each operation $(i,\pm)$ ($i\in[N]$), the algorithm takes $O(\log(s/\delta))$ time. $M$ is an upper bound of the total frequency of all items during the stream. At the end of the stream, if the number of distinct elements is at most $s$, with probability at least $1-\delta$ it returns all the distinct elements and their  frequencies. It returns \textbf{FAIL} otherwise.
\end{lemma}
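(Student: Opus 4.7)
The plan is to build the sketch from two standard ingredients: (i) linear sketches that are one-sparse recoverable inside each bucket, and (ii) pairwise-independent hashing to isolate each distinct element with high probability.

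First I would set up $R = \Theta(\log(s/\delta))$ independent pairwise-independent hash functions $h_1,\dots,h_R:[N]\to[B]$ with $B=\Theta(s)$. For each hash table $r\in[R]$ and each bucket $b\in[B]$ I maintain three linear sketches over the stream of $(i,\pm)$ updates: a frequency counter $c_{r,b}=\sum_{i:h_r(i)=b}f_i$, a weighted identifier $\ell_{r,b}=\sum_{i:h_r(i)=b} i\cdot f_i$, and a fingerprint $v_{r,b}=\sum_{i:h_r(i)=b}\phi(i)\cdot f_i \pmod p$, where $\phi$ is a random polynomial hash and $p$ is a prime of size $\poly(N,s/\delta)$. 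Each counter needs $O(\log M+\log N)$ bits, so the total space is $R\cdot B\cdot O(\log M+\log N) = O(s(\log M+\log N)\log(s/\delta))$ bits, matching the claim; each update touches $O(R)=O(\log(s/\delta))$ buckets, giving the stated per-update time.

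For recovery I would declare a bucket $(r,b)$ \emph{isolated} if $c_{r,b}\neq 0$, $\ell_{r,b}/c_{r,b}$ is an integer $i^\star\in[N]$, and the fingerprint test $v_{r,b}\equiv \phi(i^\star)\cdot c_{r,b}\pmod p$ passes; in that case we output $(i^\star,c_{r,b})$. I would then run a peeling loop: whenever an item $i^\star$ is recovered, subtract its sketched contribution from all $R$ buckets it touches, and iterate until no more isolated buckets appear. At the end, either every recovered item's frequencies are consistent across tables and the global distinct-element count matches, in which case we output the recovered set, or we output \textbf{FAIL}.

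For correctness, assume the number of distinct nonzero frequencies is at most $s$. By pairwise independence, the probability that a fixed item $i$ collides with any other nonzero item in a fixed table is at most $(s-1)/B\le 1/2$, so the probability $i$ is non-isolated in every table is at most $2^{-R}\le \delta/(2s)$. A union bound over the $\le s$ items shows that with probability at least $1-\delta/2$ every item is isolated in some table \emph{before any peeling}, hence the loop recovers all of them. The fingerprint catches false positives (a multi-item bucket whose $\ell/c$ accidentally lies in $[N]$) with failure probability $O(1/p)$ per bucket, which is dominated by $\delta/2$ after a union bound over all $RB$ buckets. Combining the two events gives overall success probability $\ge 1-\delta$.

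The main obstacle I would anticipate is the \textbf{FAIL} guarantee when the number of distinct elements exceeds $s$: the algorithm must not silently return an incorrect set. I would handle this with two safeguards — the polynomial fingerprint (which prevents misidentifying a non-isolated bucket as isolated except with probability $O(1/p)$), and a global consistency check that the recovered multiset's total frequency equals an independently tracked stream sum. If peeling stalls with non-empty residual sketches, or if any consistency check fails, the algorithm returns \textbf{FAIL}. Bounding the collision/fingerprint error probabilities tightly enough to keep the space at $O(s(\log M+\log N)\log(s/\delta))$ — rather than a worse $\log(N/\delta)$ factor — is the place where the parameter choices need the most care.
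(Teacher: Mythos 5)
The paper does not prove Lemma~\ref{lem:k_set}; it imports the statement as a black box from \cite{ganguly2005counting}, so there is no internal argument to compare yours against. Your reconstruction is the now-standard exact $s$-sparse recovery design (hash into $\Theta(s)$ buckets across $\Theta(\log(s/\delta))$ independent tables, maintain a count/identifier/fingerprint triple per bucket, verify one-sparsity, peel), and the isolation-plus-union-bound analysis you give correctly yields failure probability $\delta$ with the stated per-update time. This matches the guarantee and asymptotics of Ganguly's $k$-set structure, so as a blind reconstruction it is sound and in the same spirit, even if Ganguly's original presentation differs in low-level detail.

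Two small caveats worth flagging. First, as you noted yourself, the fingerprint $v_{r,b}$ must live modulo a prime $p$ large enough to union-bound false positives over all $RB$ buckets; with $\phi(i)=z^i$ for random $z\in\mathbb{Z}_p$ the per-bucket false-positive probability is about $N/p$, forcing $\log p = \Theta(\log N + \log(s/\delta))$. The resulting space bound is $O\big(s\,\log(s/\delta)\,(\log M + \log N + \log(s/\delta))\big)$, which equals the claimed $O(s(\log M+\log N)\log(s/\delta))$ only under the mild (and in this paper always-satisfied) assumption $\log(s/\delta) = O(\log M + \log N)$; a careful writeup should say so explicitly. Second, the paper later (proof of Lemma~\ref{lem:cell_point_storing}) uses the FAIL clause as if it were deterministic: whenever more than $s$ distinct items survive, \textsc{Distinct} \emph{must} output \textbf{FAIL}. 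A linear sketch of this size cannot guarantee that with probability one; what it guarantees is that, except with probability $\delta$, the algorithm never emits an incorrect set --- and since a correct answer with more than $s$ items cannot fit, it must emit \textbf{FAIL} in that regime except with probability $\delta$. Your consistency-check mechanism delivers exactly this one-sided-error semantics, which is also what the downstream analysis actually needs.
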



We use \textsc{Distinct} as our sub-routine. We  set the parameter $M$ and $N$ to be sufficiently large in our case, i.e., $M=N=\Delta^{2d}$.
In Algorithm~\ref{alg:storing}, we describe a method which can with probability at least $1-\delta$ output all the non-empty cells in grid $G_i$ when the total number of non-empty cells is not too large (at most $\alpha$). Furthermore, if the number of points in a particular cell is not too large (at most $\beta$), the algorithm can output all the points in that cell. Notice that Algorithm~\ref{alg:storing} is only a subroutine of our final algorithm and will only work on some sub-stream of the entire data stream.

\paragraph{Estimating the number of points in each cell.}
We use Algorithm~\ref{alg:storing} as a subroutine and design a dynamic streaming algorithm (Algorithm~\ref{alg:streaming_est_cell}) that can estimate the number of points in each cell up to some precision. Furthermore, it also estimates the number of points $|Q_i|$ in crucial cells of each level $i$.

\begin{algorithm}[t!]
	\small
	\begin{algorithmic}[1]\caption{Estimating the number of points in each cell and in each level}\label{alg:streaming_est_cell}
		\STATE \textsc{PointsEstimation}$(o,\varepsilon,\delta)$:
		\STATE \textbf{Input:} a point set $Q\subseteq [\Delta]^d$ described by a stream $\{(p_1,\pm),(p_2,\pm),\ldots\}$ 
		\FOR{$i\in\{0,1,\ldots,L\}$,  in parallel, } 
		\STATE $T_i(o)\gets (d/g_i)^2 \cdot o/(100k)$, 
		$\alpha\gets10^{11} kLd\log(1/\delta)$,
		$\alpha'\gets10^{16}\varepsilon^{-3}kL^2d^4$.\\
		\qquad\qquad\codecmt{$T_i(o)$: threshold for heaviness (Algortihm~\ref{alg:offline_sens});\qquad $\alpha,\alpha'$: parameters for \textsc{Storing}.}
		\STATE 
		Let $h_i:[\Delta]^d\rightarrow \{0,1\}$ 
		be a $\lambda$-wise independent hash function.
		\\
		\qquad\qquad
		\codecmt{$\lambda = 10\lceil(dL+\log(1/\delta)+1)\rceil$;\qquad $\forall p\in[\Delta]^d, h_i(p)=1$ w.p. $\min(4\cdot10^4\lambda/T_i(o),1)$.
		}
		\STATE 
		Run \textsc{Storing}$(G_i,\alpha,1,0.1\delta/L)$  on a sub-stream $\{((p'_1,1),\pm),((p'_2,1),\pm),\ldots\}$ in parallel.
		\\
		\qquad\qquad
		\codecmt{\textsc{Storing} is defined in
			Algorithm~\ref{alg:storing}. Here $p'_j$ satisfies $h_i(p'_j) = 1$.} \\
		If \textsc{Storing} returns \textbf{FAIL}, output \textbf{FAIL}. \label{sta:run_storing}
		\STATE Let $\gamma\gets \varepsilon/(40^2Ld^3)$.
		\qquad\codecmt{Threshold for discarding levels}
		\STATE 
		Let  $h'_i:[\Delta]^d\rightarrow \{0,1\}$ be a $\lambda$-wise independent hash function.  \\
		\qquad\qquad\codecmt{
			$\forall p\in [\Delta]^d,h'_i(p)=1$ w.p. $\min(4\cdot 10^4\varepsilon^{-2}\gamma^{-1}\lambda/T_i(o),1)$.
		}
		\STATE 
		Run \textsc{Storing}$(G_i,\alpha',1,0.1\delta/L)$
		on sub-stream $\{((p''_1,1),\pm),((p''_2,1),\pm),\ldots\}$ in parallel. \\
		\qquad\qquad
		\codecmt{\textsc{Storing} is defined in Algorithm~\ref{alg:storing}.  Here $p''_j$ satisfies $h_i'(p''_j)=1$.} \\
		If \textsc{Storing} returns \textbf{FAIL}, output \textbf{FAIL}. \label{sta:call_storing_dominating}
		\STATE 
		Let $\mathcal{C}_i,f_i,S_i$ $\gets$ \textsc{Storing}$(G_i,\alpha,1,0.1\delta/L)$ in line~\ref{sta:run_storing}. Let $\hat{f}(C)=f_i(C)\cdot\min\left\{T_i(o)/(4\cdot 10^4 \lambda),1\right\}$.\\
		 \qquad\qquad\codecmt{
			Use $\hat{f}(C)$ as an estimator for $|C\cap Q|$ and follow Algorithm~\ref{alg:offline_sens} to determine whether $C$ is marked as heavy, crucial or nothing. (And conceptually compute $Q_0,\ldots,Q_L$ for analysis.)
		}
		\STATE 
		Let $\mathcal{C}_i',f_i',S_i'$ $\gets$ \textsc{Storing}$(G_i,\alpha',1,0.1\delta/L)$ in line~\ref{sta:call_storing_dominating}.
		\STATE Let 
		$\hat{q}_i = \min\left\{\varepsilon^2\gamma T_i(o)/(4\cdot 10^4\lambda) \cdot \sum_{C\in\mathcal{C}_i':C\text{ is crucial}} f_i'(C),1\right\} $.
		\ENDFOR
		\STATE {\bfseries Output:} $\hat{q}_0,\hat{q}_1,\ldots,\hat{q}_L$ and $\hat{f}:\bigcup_{i=0}^L G_i \rightarrow \mathbb{R}_+$
	\end{algorithmic}
\end{algorithm}

\paragraph{Sensitivity sampling over a dynamic stream.}
Since using Algorithm~\ref{alg:streaming_est_cell} we can estimate the number of points in each cell and the size of each $Q_i$, the only remaining thing for simulating Algorithm~\ref{alg:offline_coreset} is to draw samples based on their sensitivity upper bounds. 
In Algorithm~\ref{alg:sensitivity_sampler}, we show how to achieve this in a dynamic stream.

\begin{algorithm}[h!]
	\small
	\begin{algorithmic}[1]\caption{Sensitivity sampling over a dynamic stream}\label{alg:sensitivity_sampler}
		\STATE \textsc{Sampling}$(o,\varepsilon,\delta)$:
		\STATE \textbf{Input:} a point set $Q\subseteq [\Delta]^d$ described by a stream
		$\{(p_1,\pm),(p_2,\pm),\ldots\}$
		\STATE Run \textsc{PointsEstimation}$(o,\varepsilon,\delta/2)$ in parallel on the input stream. If it returns \textbf{FAIL}, output \textbf{FAIL}.
		 \label{sta:run_dynamic_partition}
		\STATE $\forall i\in\{0,1,\ldots,L\}$, $T_i(o) \gets (d/g_i)^2 \cdot o/k\cdot 1/100$.
		\qquad\qquad\codecmt{Threshold for heaviness (Algorithm~\ref{alg:offline_sens}).}
		\STATE $\hat{m}\gets \Theta\left(k\varepsilon^{-3} L^4d^7\log\left(\frac{dLk}{\delta}\right)\cdot\frac{1}{\delta}\right)$. 
		 \qquad\qquad \codecmt{$\wh{m}$ hash functions needed for $m$ independent samples. 
		}
	\STATE $\lambda \gets 10\lceil(dL+\log(1/\delta)+1)\rceil$. \qquad\qquad~~~~~~\codecmt{$\lambda$ is the independence parameter.}
		\STATE  $\forall i\in\{0,1,\ldots,L\},$ choose $h_{i,1},h_{i,2},\ldots,h_{i,\hat{m}}:[\Delta]^d\rightarrow\{0,1\}$.\\ 
		\qquad~~~~
		\codecmt{
			$\forall j\in[\hat{m}],$ $h_{i,j}$ is $\lambda$-wise independent, $\forall p\in [\Delta]^d$, $h_{i,j}(p)=1$ w.p. $\min\left\{1/(10^4 kL T_i(o)),1\right\}$. 
		}
		\STATE  $\alpha\gets\Theta( k\varepsilon^{-3}L^3d^7\log(dLk/\delta)\delta^{-1}),\beta\gets \Theta(\varepsilon^{-3}L^3d^7\log(dLk/\delta)\delta^{-1})$.
		\qquad~~~~~\codecmt{$\alpha, \beta$:  for \textsc{Storing}.}
		\STATE For $i\in\{0,1,\ldots,L\}$, 
	 run \textsc{Storing}$(G_i,\alpha,\beta,0.1\delta/L)$ (Algorithm~\ref{alg:storing}) in parallel. \\
		\qquad~~~
		\codecmt{Each instance is run on a new stream obtained by splitting each operation $(p_t,\pm)$ from the original input stream into a set of new operations $\{((p_t,j),\pm)\mid j\in [\hat{m}],h_{i,j}(p_t)=1\}$. 
		}\\
		 If any \textsc{Storing} returns \textbf{FAIL}, output \textbf{FAIL}. \label{sta:store_samples}
		\STATE  $\hat{q}_0,\hat{q}_1,\ldots,\hat{q}_L$, $\hat{f}:\bigcup_{i=0}^L G_i\rightarrow \mathbb{R}_+$ \qquad$\gets$\qquad \textsc{PointsEstimation}$(o,\varepsilon,\delta/2)$ in line~\ref{sta:run_dynamic_partition}.
		\STATE		 $\mathcal{C}_i,f_i,S_i$ $\gets$ \textsc{Storing}$(G_i,\alpha,\beta,0.1\delta/L)$ in line~\ref{sta:store_samples}.
		\STATE For $i\in\{0,1,\ldots,L\}$,
		 if $\exists C\in\mathcal{C}_i$ marked as crucial 
		 in line~\ref{sta:run_dynamic_partition}, and $f_i(C)>\beta$, output \textbf{FAIL}. \label{sta:guarantee_storing_correctly}
		\STATE  $\gamma\gets \varepsilon/(40^2Ld^3)$. \qquad\quad\qquad\qquad\qquad~~ \codecmt{Threshold for discarding levels.}
		\STATE  $I\gets \{i\mid 0\leq i\leq L,\hat{q}_i\geq \gamma T_i(o)\}$. 
		\STATE  $t'\gets \sum_{i\in I} \hat{q}_i\cdot 10d^3/T_i(o)$.\qquad\qquad\qquad \codecmt{Total estimated sensitivities.}
		\STATE  $m\gets\Theta(t'\varepsilon^{-2} Ld \log (t'/\delta))$,  $S\gets\emptyset$.\qquad~ \codecmt{Total number of samples.}
		\STATE For $i\in\{0,1,\ldots,L\},$ set $A_i=[\hat{m}]$.
		\FOR {$j=1\rightarrow m$}\label{sta:uniform_sample_stage}
		\STATE Choose a random level $i\in I$ with probability $(\hat{q}_i\cdot 10d^3/T_i(o))/t'$. \label{sta:choose_i}
		\STATE Choose the minimum $j\in A_i$ s.t. $\exists p\in Q_i$, $(p,j)\in S_i$. If no such $j$, output \textbf{FAIL}. \label{sta:find_a_sample}
		\STATE Uniformly choose a point $p$ from the set $\{q\in Q_i\mid (q,j)\in S_i\}$. \label{sta:uniform_sampling_doable}
		\STATE Update $A_i\leftarrow \{j+1,j+2,\ldots,\hat{m}\}$.
		\STATE Add $(p,t'/(ms'(p)))$ to set $S$. \qquad\qquad\codecmt{$s'(p)=10d^3/T_i(o)$.} 
		\ENDFOR
		\STATE \textbf{\bfseries Output:} the set $S$
	\end{algorithmic}
\end{algorithm}


\paragraph{The final algorithm.} Finally, we use exponential search to enumerate the guesses $o$. In Algorithm~\ref{alg:final}, we show the details of how to run Algorithm~\ref{alg:sensitivity_sampler} with different guesses in parallel. Our main theorem is the following.

\begin{algorithm}[t]
	\small
	\begin{algorithmic}[1]\caption{Coreset construction over a dynamic stream}\label{alg:final}
		\STATE \textsc{DynamicCoreset}$(\varepsilon)$:
		\STATE \textbf{input:} a point set $Q\subseteq [\Delta]^d$ described by a stream
		$\{(p_1,\pm),(p_2,\pm),\ldots\}$
		\STATE Impose randomly shifted grids $G_{-1},G_0,G_1,\ldots,G_L$ (Definition~\ref{def:grid_structure}).
		\STATE Run \textsc{Distinct}$(10000k,0.001)$ (Lemma~\ref{lem:k_set}) over the input stream in parallel. \label{sta:recover_all}
		\STATE If line~\ref{sta:recover_all} does not output \textbf{FAIL}, we output the entire point set $Q$.
		\STATE For each $u\in [2dL]$, let $o_u=2^u\cdot 50k$ and run \textsc{Sampling}$(o_u,\varepsilon,0.001/(dL))$ (Algorithm~\ref{alg:sensitivity_sampler}) over the input stream in parallel. \label{sta:parallel_sampling}
		\STATE Set a threshold $h=\Theta(k\varepsilon^{-2}L^2d^4\log(kLd))$.
		\STATE Find the smallest $u^*$ such that \textsc{Sampling}$(o_{u^*},\varepsilon,0.001/(dL))$ in line~\ref{sta:parallel_sampling} does not output \textbf{FAIL}, and the returned set $S^*$ has size at most $h$, i.e., $|S^*|\leq h$.
		\STATE If no such $u^*$, output \textbf{FAIL}. Otherwise, output $S^*$ returned by \textsc{Sampling}$(o_{u^*},\varepsilon,0.001/(dL))$.
	\end{algorithmic}
\end{algorithm}

\begin{theorem}\label{thm:main}
Suppose a point set $Q\subseteq [\Delta]^d$ is given by a stream of insertion/deletion operations in the dynamic streaming model (Definition~\ref{def:dynamic_stream}). Let $L=\log \Delta$. For given $\varepsilon\in(0,1/2)$, Algorithm~\ref{alg:final} uses a single pass over the stream and on termination outputs a $k$-means $\varepsilon$-coreset $S$ (Definition~\ref{def:coreset_kmeans}) for $Q$ with probability at least $0.9$. Furthermore, the size of the coreset is at most $O(k\varepsilon^{-2}d^4L^2\log(kdL))$. The total space used by the algorithm is $\widetilde{O}(k)\cdot \poly(dL/\varepsilon)$ bits.
\end{theorem}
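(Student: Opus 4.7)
The plan is to reduce Theorem~\ref{thm:main} to Theorem~\ref{thm:offline_sampling_coreset} by showing that, for the right guess $o_{u^*}$ of $\OPT$, Algorithm~\ref{alg:sensitivity_sampler} faithfully simulates Algorithm~\ref{alg:offline_coreset} in the dynamic streaming model. The argument has three pieces: correctness, size, and space.

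First I would handle a trivial case. If at the end of the stream the number of distinct points in $Q$ is at most $10000k$, then by Lemma~\ref{lem:k_set} the call \textsc{Distinct}$(10000k, 0.001)$ in line~\ref{sta:recover_all} of Algorithm~\ref{alg:final} recovers $Q$ in full with probability at least $0.999$, and returning $Q$ itself (each point with weight $1$) is trivially a $0$-coreset of size $O(k)$. Otherwise $|Q|>10000k$, which together with the fact that all points are distinct integer-lattice points in $[\Delta]^d$ gives $\OPT \ge 1$ (at least one cluster must contain two distinct points, which are at Euclidean distance $\ge 1$). Moreover $\OPT \le |Q|\cdot d \Delta^2 \le \Delta^d \cdot d\Delta^2$. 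Consequently the grid of guesses $o_u = 2^u\cdot 50k$ for $u\in[2dL]$ covers $\OPT$ in the sense that there exists $u'$ with $o_{u'} \in [\OPT/2,\OPT]$. Let $u^\star$ be the smallest index in $[2dL]$ such that Algorithm~\ref{alg:sensitivity_sampler} neither fails nor exceeds the size threshold $h$; I will argue (i) $u^\star\le u'$, so $o_{u^\star}\le \OPT$, and (ii) the output at $u^\star$ is a valid $\varepsilon$-coreset.

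The heart of the proof is showing that, for every $u$ with $o_u\in(0,\OPT]$, the streaming procedure \textsc{Sampling}$(o_u,\varepsilon,0.001/(dL))$ either returns \textbf{FAIL} or returns a set whose distribution is identical (up to a $(1\pm\varepsilon/4)$-weight perturbation) to the offline output of Algorithm~\ref{alg:offline_coreset} run with $o=o_u$, and that for the particular guess $o_{u'}\in[\OPT/2,\OPT]$ it succeeds with high probability. I would split this into two sub-claims. (a) Algorithm~\ref{alg:streaming_est_cell} produces estimators $\hat f(C)$ and $\hat q_i$ satisfying exactly the precision hypotheses of Theorem~\ref{thm:offline_sampling_coreset}: this follows by sub-sampling each point via the $\lambda$-wise independent hash at rate $\min\{4\cdot 10^4\lambda/T_i(o),1\}$ (respectively $\min\{4\cdot 10^4\varepsilon^{-2}\gamma^{-1}\lambda/T_i(o),1\}$), invoking a Bellare--Rompel type tail bound for $\lambda$-wise independent sums to get either an additive $0.1 T_i(o)$ or multiplicative $(1\pm 0.01)$ error with failure probability $\le 2^{-\lambda/10}\le 0.1\delta/L$ per cell, and union-bounding over the at most $\alpha$ non-empty cells surfaced by \textsc{Storing} in each level. (b) The sampling stage (lines~\ref{sta:uniform_sample_stage}--\ref{sta:uniform_sampling_doable} of Algorithm~\ref{alg:sensitivity_sampler}) generates $m$ independent samples whose marginal distribution agrees with Algorithm~\ref{alg:offline_coreset}'s: the $j$-th hash $h_{i,j}$ independently keeps each point of $Q_i$ with the same probability, so conditioning on the surviving set being non-empty and then picking one of its points uniformly gives a uniform sample from $Q_i$. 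The only way this fails is if all $\hat m$ hashes for level $i$ miss $Q_i$ (then line~\ref{sta:find_a_sample} returns \textbf{FAIL}) or if \textsc{Storing} cannot resolve a crucial cell because it contains more than $\beta$ sub-sampled points (line~\ref{sta:guarantee_storing_correctly}); choosing $\hat m$ and $\beta$ as in Algorithm~\ref{alg:sensitivity_sampler} makes both events have total probability $\le \delta/4$ by a Chernoff/Markov bound on $\sum_{p\in Q_i} \mathbf 1[h_{i,j}(p)=1]$.

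Granting (a) and (b), Theorem~\ref{thm:offline_sampling_coreset} applied with $o=o_{u'}\le\OPT$ yields an $\varepsilon$-coreset of size $O(k\varepsilon^{-2}d^4L^2\log(kdL)\cdot(\OPT/o_{u'}+1))=O(k\varepsilon^{-2}d^4L^2\log(kdL))\le h$ with probability at least $0.93$, so $u^\star\le u'$ exists and its output is the desired coreset; a union bound over all $2dL$ guesses against the per-instance failure probability $0.001/(dL)$ handles the case that a smaller $u<u'$ spuriously succeeds but returns an invalid set (this cannot happen under the success event because each non-FAIL instance is, by the same simulation argument, a valid coreset for its own guess, and the validity of being an $\varepsilon$-coreset of $Q$ does not depend on $o$ being tight). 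Aggregating all failure probabilities yields overall success probability at least $0.9$. For the size and space bounds, substituting the parameters $\alpha,\alpha',\beta,\hat m,\lambda=O(dL+\log(1/\delta))$ into Lemma~\ref{lem:k_set} gives $\widetilde O(k)\cdot \poly(dL/\varepsilon)$ bits per level per guess, and the $O(L)\cdot O(dL)$ parallel instances preserve this bound.

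I expect the main obstacle to be sub-claim (a): carefully tracking the interaction between the hash-based sub-sampling and the \textsc{Storing} data structure so that the estimator $\hat f(C)=f_i(C)\cdot \min\{T_i(o)/(4\cdot 10^4\lambda),1\}$ enjoys the additive precision $0.1 T_i(o)$ demanded by Theorem~\ref{thm:offline_sampling_coreset} \emph{simultaneously} for all cells surfaced by \textsc{Storing}, with failure budget split correctly across the $O(L)$ levels and the $O(dL)$ guesses of $\OPT$. A secondary subtlety is ensuring that the crucial/heavy classification computed from $\hat f$ in Algorithm~\ref{alg:streaming_est_cell} matches the one that would be computed from true counts (so that $Q_i$ is the same set in the online and offline runs) -- this requires the precision hypothesis of Theorem~\ref{thm:offline_sampling_coreset} to be met even at cells whose true count is near the threshold $T_i(o)$, which is precisely where the additive/multiplicative dichotomy in the hypothesis is needed.
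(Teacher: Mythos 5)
Your proposal follows the same high-level plan as the paper (trivial small-$|Q|$ case, exponential search over guesses of $\OPT$, reduction to the offline guarantee of Theorem~\ref{thm:offline_sampling_coreset}, union bound over smaller guesses to protect against a spuriously-succeeding instance), but there is a genuine gap in your reduction to the case where a valid guess $o_{u'}$ exists.

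The issue is in the step where you justify that the geometric grid of guesses $o_u = 2^u\cdot 50k$, $u\in[2dL]$, contains a value in $[\OPT/2,\OPT]$. This requires a \emph{lower bound} $\OPT\geq o_1 = 100k$; your argument only shows $\OPT\geq 1$ (``at least one cluster must contain two distinct points''). With only $\OPT\geq 1$, every guess could overshoot $\OPT$ and all parallel instances would be running with $o>\OPT$, for which the correctness guarantee (Theorem~\ref{thm:offline_sampling_coreset}, Lemma~\ref{lem:correctness_of_sampler}) does not apply. The paper closes this by observing that, once $|Q|\geq 10000k$, the stronger bound $\OPT\geq 1000k$ holds: since $Q$ consists of distinct lattice points, any fixed set of $k$ centers has at most $k$ points of $Q$ within distance $1/2$ of it, so at least $|Q|-k\geq 9999k$ points each contribute cost at least $1/4$, giving $\OPT\geq 9999k/4>1000k$. (This is exactly why the algorithm's small-$|Q|$ threshold is chosen at $10000k$ and not simply at $k$ or at $0$.) You should replace your ``$\OPT\geq 1$'' observation with this lattice-packing argument, after which the existence of a valid $u'$ with $o_{u'}\leq \OPT$ follows and the rest of your proof goes through. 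A second, smaller caveat: you gloss over conditioning on the event ``number of center cells $\leq 100kL$'' (Lemma~\ref{lem:num_center_cell}, probability $\geq 0.94$), which is what underlies both the size bound in Theorem~\ref{thm:offline_sampling_coreset} and the non-FAIL guarantee of \textsc{Sampling} (Lemma~\ref{lem:the_overall_success_prob}); this should be made an explicit term in your probability budget.
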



\section{Conclusion}

This paper gives the first $k$-means coreset construction in the dynamic streaming model using space polynomial in the dimension $d$ and nearly optimal (linear) in $k$.
The algorithm is based on sensitivity sampling, which we believe is a powerful tool and can have broader applications.


\bibliography{ref}

\appendix

\section{Missing Details in Section~\ref{sec:off_sens}}\label{app:missing_proofs_offline}

\begin{proofof}{Fact~\ref{fac:Q_is_partitioned}}
	Consider an arbitrary point $p\in Q.$ Let $C_{-1},C_0,\ldots,C_L$ be the cells which contain $p$, where $\forall i\in\{-1,0,1,\ldots, L\}$, the cell $C_i$ is from the grid $G_i$. According to Algorithm~\ref{alg:offline_sens}, $C_{-1}$ is marked as heavy and $C_L$ cannot be marked as heavy. Let $l$ be the largest integer such that all the cells $C_{-1},C_0,\ldots,C_{l-1}$ are marked as heavy. Then the cell $C_l$ must be marked as crucial, and all the cells $C_{l+1},C_{l+2},\ldots,C_{L}$ can not be crucial. Thus, we have $p\in Q_l$ and $\forall i\in\{0,1,\ldots,L\}\setminus\{l\},$ $p\not\in Q_i$.
\end{proofof}

Facts~\ref{fact:trivial-1}~and~\ref{fact:trivial-2} are obvious from the algorithms, so we omit the proofs.

The following claim is useful in the proofs.

\begin{claim}\label{cla:heavy_cell_contains_important_points}
	Let $Q^I$ and $\gamma$ be the same as mentioned in Algorithm~\ref{alg:offline_coreset}.
	For $ i\in \{0,1,\ldots,L\}$ and any heavy cell $C\in G_{i-1}$, 
	if all the ancestors of $C$ are marked as heavy, then we have $|C\cap Q^I|\geq (1-5(L-i)\gamma)\cdot |C\cap Q|$.
\end{claim}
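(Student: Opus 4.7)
My plan is to prove the claim by downward induction on $i \in \{0,1,\ldots,L\}$. The key structural observation is that whenever $C \in G_{i-1}$ is heavy and all of its ancestors are heavy, every point $p \in C \cap Q$ has its level-$i$ cell $c_i(p) \subseteq C$ marked by Algorithm~\ref{alg:offline_sens} as either heavy (if it clears the threshold) or crucial (otherwise, since the parent $C$ and all further ancestors are heavy), and these two possibilities disjointly partition $C \cap Q$:
\[
C \cap Q \;=\; (C \cap Q_i) \;\sqcup\; \bigsqcup_{D \in H} (D \cap Q),
\]
where $H$ denotes the set of heavy children of $C$ in $G_i$. Every $D \in H$ satisfies the hypotheses of the claim at level $i+1$ (it is heavy in $G_{(i+1)-1}$, and its ancestors are $C$ together with $C$'s ancestors, all heavy), so the induction hypothesis will apply to each such $D$.

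For the base case $i = L$, I observe that Algorithm~\ref{alg:offline_sens}'s main loop only runs up to $L-1$, so cells in $G_L$ are never marked heavy; those whose ancestors are all heavy are marked crucial. Therefore every $p \in C \cap Q$ lies in a crucial cell of $G_L$, giving $C \cap Q \subseteq Q_L$ and in particular $|Q_L| \ge |C \cap Q| \ge 0.9 \, T_{L-1}(o) = 0.225 \, T_L(o)$. Since $\gamma = \varepsilon/(40^2 L d^3)$ is tiny, this comfortably exceeds $2\gamma T_L(o)$, so after accounting for the estimation slack on $\hat q_L$ assumed in Theorem~\ref{thm:offline_sampling_coreset} we obtain $\hat q_L \ge \gamma T_L(o)$, i.e.\ $L \in I$. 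Hence $C \cap Q \subseteq Q_L \subseteq Q^I$, and the base inequality $|C \cap Q^I| \ge 1 \cdot |C \cap Q|$ is trivial.

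For the inductive step from $i+1$ down to $i$, the induction hypothesis gives $|D \cap Q^I| \ge (1 - 5(L-i-1)\gamma)|D \cap Q|$ for every $D \in H$, which sums to $\sum_{D \in H} |D \cap Q^I| \ge (1-5(L-i-1)\gamma)(|C \cap Q| - |C \cap Q_i|)$. If $i \in I$, then $Q_i \subseteq Q^I$, and adding $|C \cap Q_i|$ to both sides and collecting terms yields $|C \cap Q^I| \ge (1-5(L-i-1)\gamma)|C \cap Q| \ge (1 - 5(L-i)\gamma)|C \cap Q|$. If $i \notin I$, then by Fact~\ref{fact:trivial-1} we have $|C \cap Q_i| \le |Q_i| \le 1.1\gamma T_i(o)$; using $T_i(o) = 4\,T_{i-1}(o)$ and the heaviness bound $|C \cap Q| \ge 0.9 \, T_{i-1}(o)$, this gives $|C \cap Q_i| \le (4.4/0.9)\gamma\,|C \cap Q| < 5\gamma\,|C \cap Q|$. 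A short algebraic manipulation shows that $(1 - 5(L-i-1)\gamma)(|C \cap Q| - |C \cap Q_i|) \ge (1 - 5(L-i)\gamma)|C \cap Q|$ is equivalent to $5\gamma|C \cap Q| \ge (1 - 5(L-i-1)\gamma)|C \cap Q_i|$, which is implied by the bound on $|C \cap Q_i|$ just obtained.

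The main obstacle is the case $i \notin I$: the extra slack $5\gamma$ picked up in the recursion when moving the index from $L-i-1$ to $L-i$ must exactly absorb the fraction of $C$ that falls into $Q_i$ and is therefore absent from $Q^I$. This works only because the heaviness threshold grows geometrically, $T_i(o) = 4T_{i-1}(o)$, so the heaviness of $C$ at level $i-1$ forces $|C \cap Q|$ to be within a universal constant factor of the threshold $T_i(o)$ that caps $|Q_i|$ from above, making the ratio $|C \cap Q_i|/|C \cap Q|$ a depth-independent $O(\gamma)$. A naive direct sum $\sum_{j \notin I,\,j \ge i} |C \cap Q_j| \le 1.1\gamma \sum_j T_j(o)$ would blow up like $4^{L-i}\gamma T_{i-1}(o)$; the induction is what telescopes this into the clean additive $5(L-i)\gamma$ loss asserted by the claim.
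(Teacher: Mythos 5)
Your proof is correct and follows essentially the same downward induction on the level as the paper's, with the same partition of $C \cap Q$ into heavy children plus the level-$i$ crucial portion, the same use of Fact~\ref{fact:trivial-1} together with $T_i(o)=4T_{i-1}(o)$ to absorb the at-most-$5\gamma$ loss when $i\notin I$, and the same base-case argument that heaviness of $C\in G_{L-1}$ forces $L\in I$. The only cosmetic difference is that you write the recursion a bit more uniformly and make the equivalence $5\gamma|C\cap Q| \ge (1-5(L-i-1)\gamma)|C\cap Q_i|$ explicit, but the substance matches the paper.
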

\begin{proof}
	The proof is by induction. When $i=L$, consider a heavy cell $C\in G_{L-1}$ whose ancestors are also heavy. If there is no such cell $C$, then the claim holds directly for $i=L$. Otherwise, according to the construction of $Q_0,Q_1,\ldots,Q_L$, $\forall j\in\{0,1,\ldots,L-1\},$ we have $Q_j\cap C=\emptyset$ and $(Q\cap C) \subseteq Q_L$. Since $C$ is marked as heavy, we know that $|Q_L\cap C|=|Q\cap C|\geq 0.9T_{L-1}(o)\geq 0.9T_{L}(o)/4$. It implies that $\hat{q}_L\geq \min(|Q_L|- 0.1\varepsilon\gamma T_L(o),(1-0.01\varepsilon) |Q_L|)\geq \min(|Q\cap C|- 0.1\varepsilon\gamma T_L(o),(1-0.01\varepsilon) |Q\cap C|)\geq \gamma T_L(o)$. Thus, we have $Q_L\subseteq Q^I$ which implies that $|C\cap Q^I|=|C\cap Q_L|=|C\cap Q|$.
	
	
	
	Now assume the claim is true for $i+1,i+2,\ldots,L.$ Consider a heavy cell $C\in G_{i-1}$ whose ancestors are also marked as heavy. If there is no such cell $C$, the claim holds directly for $i$. Now consider the case when $C$ exists. If level $i\in I$, i.e., $Q_i\subseteq Q^I$, we have
	\begin{align*}
	|C\cap Q^I| &= \sum_{C'\in G_i:\text{$C'$ is a heavy child of $C$}} |C'\cap Q^I| + \sum_{C'\in G_i:\text{$C'$ is a crucial child of $C$}}|C'\cap Q^I|\\
	&= \sum_{C'\in G_i:\text{$C'$ is a heavy child of $C$}} |C'\cap Q^I| + |C\cap Q_i|\\
	&\geq (1-5(L-i-1)\gamma) \sum_{C'\in G_i:\text{$C'$ is a heavy child of $C$}} |C'\cap Q| + |C\cap Q_i|\\
	&\geq (1-5(L-i-1)\gamma)|C\cap Q|\\
	&\geq (1-5(L-i)\gamma)|C\cap Q|.
	\end{align*}
	If level $i\not \in I$, i.e., $Q_i\not \subseteq Q^I$, we have
	\begin{align*}
	\sum_{C'\in G_i:\text{$C'$ is a heavy child of $C$}} |C'\cap Q| \geq |C\cap Q| - |Q_i|\geq |C\cap Q| - 1.1\gamma T_i(o)\geq (1-5\gamma)|C\cap Q|.
	\end{align*}
	Thus,
	\begin{align*}
	|C\cap Q^I|&\geq \sum_{C'\in G_i:\text{$C'$ is a heavy child of $C$}} |C'\cap Q^I|\\
	&\geq (1-5(L-i-1)\gamma) \cdot \sum_{C'\in G_i:\text{$C'$ is a heavy child of $C$}} |C'\cap Q|\\
	&\geq (1-5(L-i-1)\gamma)(1-5\gamma)|C\cap Q|\\
	&\geq (1-5(L-i)\gamma) |C\cap Q|. 
	\end{align*}
\end{proof}

\begin{proofof}{Lemma~\ref{lem:reducing_to_important_points}}
	Since $Q^I$ is a subset of $Q$, $\cost(Q^I,Z)\leq \cost(Q,Z)$ is trivial. In the following, we are trying to prove $\cost(Q,Z)\leq (1+\varepsilon/10) \cost(Q^I,Z)$.
	
	We consider an level $i\not\in I$, i.e., $Q_i\not\subseteq Q^I$. For a point $p\in Q_i$, all the ancestors of $c_{i}(p)$ must be heavy. By averaging argument, there must exist a point $q\in c_{i-1}(p)\cap Q^I$ such that
	\begin{align}\label{eq:averaging_argument}
	\dist^2(p,Z) \leq & ~ 2 \dist^2(p,q) + 2 \dist^2(q,Z) \notag \\
	\leq & ~ 2 d g_{i-1}^2 + 2 \dist^2(q,Z) \notag \\
	\leq & ~ 2 d g_{i-1}^2 + 2 \frac{ 1 }{ | c_{i-1}(p)\cap Q^I | } \sum_{ q \in c_{i-1}(p)\cap Q^I } \dist^2(q,Z)
	\end{align}
	where the first step follows from triangle inequality, the second step follows from definition of the grids, the last step follows from an averaging argument.

	According to Claim~\ref{cla:heavy_cell_contains_important_points}, we have
	\begin{align*}
	| c_{i-1}(p)\cap Q^I | \geq \frac{1}{2} T_{i-1}(o).
	\end{align*}
	
	Let $Q^N = Q\setminus Q^I$. 
	We can lower bound $\cost(Q,Z)$ in the following sense,
	\begin{align*}
	\cost(Q,Z) = & ~ \cost(Q^I, Z) + \cost(Q^N, Z) \\
	= & ~ \cost(Q^I, Z) + \sum_{i\not\in I}\sum_{p\in Q_i} \dist^2(p,z) \\
	\leq & ~ \cost(Q^I, Z) + 2 \sum_{i \not\in I} \sum_{p\in Q_i}  \left( d g_{i-1}^2 + \frac{ 1 }{ | c_{i-1}(p)\cap Q^I | } \sum_{ q \in c_{i-1}(p)\cap Q^I } \dist^2(q,Z) \right) \\
	\leq & ~ \cost(Q^I, Z) + 2 \sum_{i \not\in I} \sum_{p\in Q_i}  \left( d g_{i-1}^2 + \frac{ 2 }{ T_{i-1}(o) } \sum_{ q \in c_{i-1}(p)\cap Q^I } \dist^2(q,Z) \right) \\
	\leq & ~ \cost(Q^I, Z) + 2 \sum_{i \not\in I} \sum_{p \in Q_i} \left( d g_{i-1}^2 + \frac{ 2 }{ T_{i-1}(o) } \sum_{ q \in Q^I } \dist^2(q,Z) \right) \\
	\leq & ~ \cost(Q^I, Z) + 2 \sum_{i \not\in I} 1.1\gamma T_i(o) \cdot \left( d g_{i-1}^2 + \frac{ 2 }{ T_{i-1}(o) } \sum_{ q \in Q^I } \dist^2(q,Z) \right) \\
	\leq & ~ \cost(Q^I, Z) + 5L \gamma T_i(o) \cdot \left( d g_{i-1}^2 + \frac{ 2 }{ T_{i-1}(o) } \sum_{ q \in Q^I } \dist^2(q,Z) \right) \\
	= & ~ \cost(Q^I, Z) + 5L \gamma T_i(o) \cdot \left( d g_{i-1}^2 + \frac{ 2 }{ T_{i-1}(o) } \cost( Q^I , Z ) \right) \\
	\leq & ~ \cost(Q^I, Z) + 5 L \gamma ( d^3  o / (25k) + 8 \cost (Q^I, Z) ) \\
	\leq & ~ \cost(Q^I, Z) + 5 L \gamma ( d^3  \cost(Q,Z) / (25k) + 8 \cost (Q^I, Z) ).
	\end{align*}
	where the second step follows from the definition of the cost, the third step follows from Eq.~\eqref{eq:averaging_argument}, the fourth step follows from $|c_{i-1}(p)\cap Q^I| \geq T_{i-1}(o) /2 $, the fifth step follows from $(c_{i-1}(p)\cap Q^I) \subset Q^I$, the sixth step follows from $|Q_i| \leq 1.1\gamma T_i(o)$, the seventh step follows from $L+1-|I|\leq L+1 \leq 2L$ and $2\cdot 2\cdot 1.1\leq 5$, the ninth step follows from $T_i(o) = 4 T_{i-1}(o)$ and $T_i(o) = (d/g_i)^2 \cdot  o / k\cdot 1/100$, and the last step follows from $o \leq \OPT \leq \cost(Q,Z)$.
	
	It implies that
	\begin{align*}
	\frac{ \cost( Q , Z ) }{ \cost( Q^I , Z ) } \leq \frac{1+ 40 L \gamma}{1-5 L\gamma d^3 / (25k)} \leq \frac{1+\epsilon/40}{1-\epsilon/40} \leq 1+\epsilon/10,
	\end{align*}
	where the second step follows from $\gamma \leq \epsilon / (40^2 L d^3)$, and the last step follows from $\varepsilon < 1$.
\end{proofof}

\begin{proofof}{Lemma~\ref{lem:sensitivity_ub}}
Let $Z\subseteq \mathbb{R}^d$ be an arbitrary set of $k$ centers. Fix a point $p\in Q$. Suppose $p$ is in $Q_i$, i.e., $p$ is in a crucial cell of $G_i$. Let $C=c_{i-1}(p)$, i.e., $C$ is the parent cell of the crucial cell that contains $p$. By Algorithm~\ref{alg:offline_sens}, $C$ and all of its ancestors must be heavy. By Claim~\ref{cla:heavy_cell_contains_important_points}, $C\cap Q^I$ cannot be empty. Thus, by an averaging argument, there is a point $p'\in C$ such that 
\begin{align}\label{eq:average_arugment_in_a_cell}
\dist^2(p',Z)\leq \frac{1}{|C \cap Q^I|} \sum_{q\in C\cap Q^I} \dist^2(q, Z).
\end{align}

We have
\begin{align*}
\frac{\dist^2(p,Z)}{ \sum_{q \in Q^I} \dist^2 (q,Z) }
\leq & ~ 2 \frac{\dist^2(p',Z)}{ \sum_{q \in Q^I} \dist^2 (q,Z) } + 2 \frac{\dist^2(p,p')}{ \sum_{q \in Q^I} \dist^2 (q,Z) }  \\
\leq & ~ 2 \frac{ \sum_{q\in C \cap Q^I } \dist^2(q,Z)  }{  |C \cap Q^I| \sum_{q \in Q^I} \dist^2 (q,Z) }  + 2 \frac{ d g_{i-1}^2 }{ \sum_{q\in Q^I} \dist^2(q,Z)} \\
\leq & ~ 2 \frac{ \sum_{q\in Q^I } \dist^2(q,Z)  }{  |C \cap Q^I| \sum_{q \in Q^I} \dist^2 (q,Z) }  + 2 \frac{ d g_{i-1}^2 }{ \sum_{q\in Q^I} \dist^2(q,Z)} \\
= & ~ 2 \frac{1}{|C \cap Q^I|} + 2 \frac{ d g_{i-1}^2 }{ \sum_{q\in Q^I} \dist^2(q,Z)} \\
\leq & ~ 2 \frac{1}{|C \cap Q^I|} + 4 \frac{ d g_{i-1}^2 }{ \OPT } \\
\leq & ~ 2 \frac{1}{|C \cap Q^I|} + 16 \frac{ d  g_{i}^2 }{ \OPT } \\
\leq & ~ 2 \frac{1}{|C \cap Q^I|} +  \frac{ d^3  o  }{ T_i(o) k \OPT } \\
\leq & ~ 9 \frac{1}{ T_i(o)} +  \frac{ d^3  o  }{ T_i(o) k \OPT } \\
\leq & ~ 9 \frac{1}{T_i(o)} +  \frac{d^3}{T_i(o)} \\
\leq & ~ 10 \frac{d^3}{T_i(o)}
\end{align*}
where the first step follows from triangle inequality, the second step follows from Eq.~\eqref{eq:average_arugment_in_a_cell} and $p' \in c_{i-1}( p )$,  the fifth step follows from $\sum_{q\in Q^I } \dist^2(q,Z) \geq (1-\epsilon) \cost(Q,Z) \geq \OPT /2$ (Lemma~\ref{lem:reducing_to_important_points}), the sixth step follows from $g_{i-1}^2 \leq 4 g_i^2$, the seventh step follows from $T_i(o) = \frac{d^2}{g_i^2} \frac{ o}{ 100 k}$, the eighth step follows from $T_i(o) = 4 T_{i-1}(o) \leq 4.5 |C\cap Q^I|$ (Claim~\ref{cla:heavy_cell_contains_important_points} and $|C\cap Q|\geq 0.9 T_{i-1}(o)$), 
the ninth step follows from $ 1/k \leq 1, o\leq \OPT$.
\end{proofof}

\begin{proofof}{Lemma~\ref{lem:num_center_cell}}
	Fix an $i\in\{0,1,\ldots,L\}$ and consider the grid $G_i$. For each optimal center $z_j^*$, we use $X_{j,\alpha}$ to denote the indicator random variable for the event that the distance from $z_j^*$ to the boundary in dimension $\alpha$ of the grid $G_i$ is at most $g_i / (2d)$. Since in each dimension, if the center is close to a boundary, it contributes a factor at most $2$ to the total number of center cells. It follows that the number of cells that have distance at most $g_i/(2d)$ to $z_j^*$ is at most
	\begin{align*}
	N = 2^{\sum_{\alpha=1}^d X_{j,\alpha}}.
	\end{align*}
	We denote $Y_{j,\alpha}$ to be $ 2^{X_{j,\alpha}}$, then
	\begin{align*}
	\E[ N ] = \E \left[ \prod_{\alpha=1}^d Y_{j,\alpha}  \right] = \prod_{\alpha=1}^d \E[ Y_{j,\alpha} ].
	\end{align*}
	By using $\Pr[ X_{j,\alpha} =1] \leq (2g_i/(2d))/g_i =1/d $, we obtain
	\begin{align*}
	\E[ Y_{j,\alpha} ] \leq \E[ 1 + X_{j,\alpha} ] = 1 + \E[ X_{j,\alpha} ] \leq 1 + 1/d.
	\end{align*}
	Thus $\E[ N ] = \prod_{\alpha=1}^d \E[Y_{j,\alpha}] \leq (1+1/d)^d \leq e$. The expected number of center cells in a single grid is at most $(1+1/d)^{d} k \leq e k\leq 3k$.
	By linearity of expectation, the expected number of center cells in all grids is at most $ek(L+1)\leq 6kL$.
	By Markov's inequality, the probability that we have more than $100 kL$ center cells in all grids is at most $0.06$.
\end{proofof}

\begin{proofof}{Lemma~\ref{lem:total_sensitivity_ub}}
\begin{align*}
\sum_{p\in Q} s'(p) & = \sum_{i=0}^L \sum_{p\in Q_i} 10\frac{d^3}{T_i(o)} \\
& = 10d^3 \sum_{i=0}^L \frac{1}{T_i(o)} \left( \sum_{\text{center cell } C\in G_i} |C\cap Q_i| + \sum_{\text{non-center cell } C\in G_i} |C\cap Q_i| \right)\\
&\leq 10d^3 \sum_{i=0}^L  \frac{1}{T_i(o)}  \left( \sum_{\text{center cell } C\in G_i} 1.1 T_i(o) + \frac{\OPT}{g_i^2/(2d)^2} \right)\\
&\leq 11d^3 \cdot (\text{\# of center cells}) + 4000d^3 L k  \cdot \frac{\OPT}{o}\\
&\leq 1100d^3Lk + 4000d^3 L k  \cdot \frac{\OPT}{o}\\
&\leq 4000d^3 L k  \cdot \left(\frac{\OPT}{o} + 1\right),
\end{align*}
where the first step follows from the definition of $s'(p)$, the third step follows from 
\begin{enumerate}
	\item if $C\in G_i$ and $C\cap Q_i\not=\emptyset$, then $|C\cap Q_i|=|C\cap Q|\leq 1.1 T_i(o)$;
	\item if $p$ is in a non-center cell $C\in G_i$, then $\dist^2(p,Z^*)\geq g_i^2/(2d)^2$,
\end{enumerate}
the forth step follows from $g_i^2/d^2 = o/(100kT_i(o))$, the fifth step follows from that the total number of center cells is bounded by $100kL$.
\end{proofof}

\section{Missing Details in Section~\ref{sec:streaming}}\label{sec:missing_details_of_dynamic_stream}

In this section, we give all the missing details in Section~\ref{sec:streaming}.

\subsection{The Dynamic Point-Cell Storing Data Structure}

The following lemma shows the guarantee of Algorithm~\ref{alg:storing}.

\begin{lemma}\label{lem:cell_point_storing}
	Given parameters $i\in[L],\alpha,\beta\in\mathbb{N}_+,\delta\in(0,0.5)$, \textsc{Storing}$(G_i,\alpha,\beta,\delta)$ (Algorithm~\ref{alg:storing}) uses $O(\alpha\beta dL\cdot\log^2(\alpha\beta/\delta))$ bits to process a stream $\{((p_1,l_1),\pm),((p_2,l_2),\pm),\ldots\}$ of insertion/deletion operations of data points. At the end of the stream, if the number of non-empty cells in $G_i$ is at most $\alpha$, then with probability at least $1-\delta$ it returns the set $\mathcal{C}$ of all the non-empty cells, the number of points $f(C)$ in each cell $C\in \mathcal{C}$, and the set $S$ of points in all the non-empty cells that contain at most $\beta$ points. It returns \textbf{FAIL} otherwise.
\end{lemma}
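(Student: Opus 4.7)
}
The plan is to decompose the correctness guarantee into three events whose failure probabilities are controlled separately by Lemma~\ref{lem:k_set} and by the pairwise independence of the hash family used in Algorithm~\ref{alg:storing}, and then to bound the space by summing the contributions of the parallel \textsc{Distinct} invocations. At a high level, the outer \textsc{Distinct}$(\alpha,\delta/4)$ in line~\ref{sta:global_estimation} discovers the set $\mathcal{C}$ of non-empty cells together with the counts $f(\cdot)$, and the $r\cdot 2\alpha$ inner instances in line~\ref{sta:multi_copy_distinct} are meant to recover the actual points in each cell $C$ with $f(C)\le \beta$ once a hash function $h_j$ \emph{isolates} $C$ from all other non-empty cells.

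Concretely, I would first define: event $A$ that \textsc{Distinct}$(\alpha,\delta/4)$ in line~\ref{sta:global_estimation} succeeds, so that $\mathcal{C}$ and $f$ are exactly correct; by Lemma~\ref{lem:k_set} this holds with probability at least $1-\delta/4$ when the number of non-empty cells is at most $\alpha$. Next, for every cell $C\in\mathcal{C}$ with $f(C)\le\beta$, call a hash function $h_j$ \emph{good for $C$} if (i) $h_j(C)\neq h_j(C')$ for every other $C'\in\mathcal{C}$, and (ii) the corresponding copy $(j,h_j(C))$ of \textsc{Distinct}$(\beta,\delta/(2\alpha))$ succeeds on its sub-stream. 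By pairwise independence of $h_j$ and a union bound over the at most $\alpha-1$ other cells, $\Pr[(\text{i}) \text{ fails}]\le (\alpha-1)/(2\alpha)\le 1/2$; combined with $\Pr[(\text{ii}) \text{ fails}]\le\delta/(2\alpha)$, this gives a constant lower bound on $\Pr[h_j\text{ good for }C]$. Since the $r$ hash functions and the associated inner instances use independent randomness, the probability that \emph{no} $j\in[r]$ is good for $C$ decays geometrically in $r$, so choosing $r=\Theta(\log(\alpha/\delta))$ as in the algorithm makes this probability at most $\delta/(4\alpha)$; a final union bound over the at most $\alpha$ relevant cells gives failure probability at most $\delta/4$. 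Together with event $A$, the total failure probability is at most $\delta$. Whenever all these events hold, for every $C$ with $f(C)\le\beta$ the algorithm can find some good $j$, and since $h_j$ isolates $C$ the sub-stream of instance $(j,h_j(C))$ contains only points of $C$, so $S_{j,h_j(C)}$ recovers exactly $C\cap Q$, producing the claimed set $S$.

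For the space bound, I would simply plug the parameters into Lemma~\ref{lem:k_set} with $M=N=\Delta^{2d}$ (hence $\log M+\log N=O(dL)$). The outer \textsc{Distinct}$(\alpha,\delta/4)$ costs $O(\alpha dL\log(\alpha/\delta))$ bits. Each of the $r\cdot 2\alpha$ inner copies costs $O(\beta dL\log(\alpha\beta/\delta))$ bits, for a total of $O(r\alpha\beta dL\log(\alpha\beta/\delta))$; since $r=O(\log(\alpha/\delta))$ this is $O(\alpha\beta dL\log^2(\alpha\beta/\delta))$, which dominates the outer term and matches the claimed bound. Storing the hash descriptions adds only lower-order terms because they are pairwise independent and can be described with $O(dL+\log\alpha)$ bits each.

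The main obstacle, and the place where I would be most careful, is the interaction between the two random sources in the inner instances: the randomness of $h_j$ (which determines \emph{which} inner copy $(j,h_j(C))$ we consult) and the independent randomness of that copy (which determines whether it reports correctly). One has to argue that the success probability of copy $(j,h_j(C))$ is $\ge 1-\delta/(2\alpha)$ uniformly over the realization of $h_j$, so that the bound on ``at least one $j$ is good'' truly factorizes over $j$. Another subtle point is that a union bound over all $r\cdot 2\alpha$ inner copies would be too wasteful; instead, the factorization argument across independent $j$'s, together with setting $r=\Theta(\log(\alpha/\delta))$, is what keeps the failure probability at $O(\delta)$ without inflating the space budget. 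Once this is handled cleanly the rest of the argument is a routine combination of Lemma~\ref{lem:k_set} and elementary union bounds.
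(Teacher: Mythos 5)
Your high-level plan matches the paper's proof: you decompose correctness into the outer \textsc{Distinct} succeeding plus, for each light cell, finding some index $j$ that isolates the cell and whose inner \textsc{Distinct} copy succeeds, and you derive the space bound by summing the contributions of the parallel \textsc{Distinct} invocations, which is exactly what the paper does. You also correctly flag the subtlety that the inner copy's randomness must be independent of $h_j$ so that the conditional success probability is uniform over the isolating hashes; the paper relies on this implicitly.

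However, your arithmetic in the per-cell bound does not close with the algorithm's actual parameter $r=\lceil\log(4\alpha/\delta)\rceil$. You fold event (ii) (the inner copy succeeding) into the definition of ``$h_j$ good,'' getting $\Pr[h_j\text{ good}]\ge 1/2-\delta/(2\alpha)\ge 1/4$, and then argue $\Pr[\text{no }j\text{ good}]\le(3/4)^r$. But $(3/4)^{\lceil\log_2(4\alpha/\delta)\rceil}\approx(4\alpha/\delta)^{-\log_2(4/3)}\approx(4\alpha/\delta)^{-0.415}$, which is much larger than the $\delta/(4\alpha)$ you claim; to make your argument go through you would need $r\approx 2.4\log_2(4\alpha/\delta)$. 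The paper avoids this by \emph{not} folding (ii) into the geometric decay: it first bounds $\Pr[\text{no }j\text{ isolates }C]\le(1/2)^r\le\delta/(4\alpha)$ using only the isolation event (giving the clean base $1/2$), and then adds a separate $\delta/(2\alpha)$ for the single inner copy at the first isolating index. Summing $\delta/(4\alpha)+\delta/(2\alpha)$ over at most $\alpha$ cells and adding $\delta/4$ for the outer call yields exactly $\delta$. A smaller point: the bound $\Pr[\text{bad}]\le\Pr[(\text{i})\text{ fails}]+\Pr[(\text{ii})\text{ fails}]$ is a little imprecise, because ``the inner copy succeeds'' is only meaningfully defined conditioned on isolation (otherwise the copy may see more than $\beta$ elements and FAIL for a non-error reason); writing it as $\Pr[\text{good}]=\Pr[(\text{i})]\cdot\Pr[(\text{ii})\mid(\text{i})]$ is cleaner and gives the same constant. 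Finally, you should also state the easy direction that the lemma promises: if the number of non-empty cells exceeds $\alpha$, the outer \textsc{Distinct}$(\alpha,\delta/4)$ necessarily returns \textbf{FAIL} by Lemma~\ref{lem:k_set}, so the algorithm returns \textbf{FAIL}.
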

\begin{proof}
	If the number of non-empty cells of $G_i$ at the end of the stream is more than $\alpha$, then according to line~\ref{sta:global_estimation} of Algorithm~\ref{alg:storing} and Lemma~\ref{lem:k_set}, Algorithm~\ref{alg:storing} must output \textbf{FAIL}.
	
	Now consider the case when the total number of non-empty cells is at most $\alpha$. According to Lemma~\ref{lem:k_set}, with probability at least $1-\delta/4$, line~\ref{sta:global_estimation} of Algorithm~\ref{alg:storing} will return the set $\mathcal{C}$ of all the non-empty cells of $G_i$ and the number of points $f(C)$ for each cell $C\in \mathcal{C}$. For each $C\in\mathcal{C}$ with $|C|\leq \beta$, since $|\mathcal{C}|\leq \alpha,$ the probability that $\exists j\in[r]$ such that $\forall C'\in \mathcal{C},h_j(C')\not=h_j(C)$ is at least $1-1/2^r\geq 1-\delta/(4\alpha)$ and furthermore the probability that the $(j,h_j(C))$-th copy of \textsc{Distinct}$(\beta,\delta/(2\alpha))$ in line~\ref{sta:multi_copy_distinct} of Algorithm~\ref{alg:storing} will output all the points in $C$ is at least $1-\delta/(2\alpha)$. By taking union bound, the overall probability that Algorithm~\ref{alg:storing} does not output \textbf{FAIL} is at most $1-\delta/4-(\delta/(2\alpha)+\delta/(4\alpha))\cdot \alpha=1-\delta$.
	
	According to Lemma~\ref{lem:k_set}, the space needed by line~\ref{sta:global_estimation} of Algorithm~\ref{alg:storing} is $O(\alpha dL\cdot \log(\alpha/\delta))$ bits and the space needed of each copy in line~\ref{sta:multi_copy_distinct} of Algorithm~\ref{alg:storing} is $O(\beta dL \cdot \log(\alpha\beta/\delta))$ bits. Thus, the total space needed is at most $O(\alpha\beta dL\cdot\log^2(\alpha\beta/\delta))$.
\end{proof}

\subsection{Estimating the Number of Points in Each Cell}

Now let us analyze Algorithm~\ref{alg:streaming_est_cell}.
We need the following high concentration bound in our analysis.
\begin{theorem}[\cite{br94}]
	\label{thm:lambda_wise_concentration}
	Let $\lambda$ be an even integer, and let $X$ be the sum of $n$ $\lambda$-wise independent random variables taking values in $[0,1]$. Let $\mu=\E[X]$ and $a>0$. Then we have
	\[
	\Pr\bigg[|X-\mu|>a\bigg]\le8\cdot\bigg(\frac{\lambda\mu+\lambda^2}{a^2}\bigg)^{\lambda/2}.
	\]
\end{theorem}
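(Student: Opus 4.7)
The plan is a standard method-of-moments argument, in the style of Bellare--Rompel, which first reduces the tail bound to an upper bound on the $\lambda$-th central moment of $X$ and then exploits $\lambda$-wise independence to evaluate that moment exactly. First I would center the variables by setting $Y_i = X_i - \E[X_i]$, so that $X - \mu = \sum_i Y_i$, each $Y_i$ satisfies $\E[Y_i] = 0$ and $|Y_i|\le 1$, and the $Y_i$ inherit $\lambda$-wise independence from the $X_i$. Since $\lambda$ is even, $(X-\mu)^\lambda \ge 0$, so Markov's inequality gives
\[
\Pr\bigl[|X-\mu|>a\bigr] \;\le\; \frac{\E[(X-\mu)^\lambda]}{a^\lambda},
\]
and it remains to upper-bound the central moment by $8(\lambda\mu+\lambda^2)^{\lambda/2}$.

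Next I would expand the moment as a multinomial sum
\[
\E[(X-\mu)^\lambda] \;=\; \sum_{(i_1,\dots,i_\lambda)} \E\bigl[Y_{i_1}Y_{i_2}\cdots Y_{i_\lambda}\bigr],
\]
group the tuples by the multiset of distinct indices and their multiplicities $(a_1,\dots,a_t)$ with $\sum_j a_j=\lambda$, and invoke $\lambda$-wise independence to factor each such expectation as $\prod_j \E[Y_{i_j}^{a_j}]$. The key observation is that whenever some $a_j = 1$ the term vanishes because $\E[Y_{i_j}]=0$, so only tuples with every $a_j \ge 2$ survive; in particular $t \le \lambda/2$. For the surviving terms I would use $|Y_i|\le 1$ to bound $|Y_i|^{a_j} \le Y_i^2$ and then $\E[Y_i^2]\le \E[X_i]=p_i$ (because $X_i\in[0,1]$), so each surviving product is at most $\prod_j p_{i_j}$. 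Summing over choices of $t$ distinct indices then yields the uniform bound $\sum_{i_1<\cdots<i_t}\prod_j p_{i_j}\le \mu^t/t!$ after accounting for ordering.

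Combining these steps, I would get
\[
\E[(X-\mu)^\lambda] \;\le\; \sum_{t=1}^{\lambda/2} N(\lambda,t)\,\mu^t,
\]
where $N(\lambda,t)$ counts ordered $\lambda$-tuples that partition into $t$ classes each of size at least $2$; this count can be written in terms of Stirling numbers of the second kind and bounded by $\binom{\lambda}{2t}(2t)!\,S(2t,t) \cdot t!$-type expressions. I would then apply a standard estimate $N(\lambda,t)\le \binom{\lambda/2}{t}(\lambda^2)^{\lambda/2-t}\cdot (\text{small constant})^{\lambda/2}$, which after summing telescopes into the binomial expansion
\[
\sum_{t=1}^{\lambda/2} \binom{\lambda/2}{t}(\lambda\mu)^t (\lambda^2)^{\lambda/2-t} \;\le\; (\lambda\mu+\lambda^2)^{\lambda/2},
\]
with an extra factor of at most $8$ coming from absorbing the small constants and the $t!$ denominators. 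Dividing by $a^\lambda$ produces the claimed bound.

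The main obstacle will be the combinatorial counting step: getting a clean estimate on $N(\lambda,t)$ and folding it into the binomial identity in a way that yields exactly the $(\lambda\mu+\lambda^2)^{\lambda/2}$ form with a leading constant no larger than $8$. Naive bounds on Stirling numbers easily lose large factors; the right approach is to directly count tuples by first choosing a set partition of $[\lambda]$ into $t$ blocks of size $\ge 2$ and then injecting block labels to distinct indices, and to use the inequality $S(\lambda,t)\le \binom{\lambda}{t}t^{\lambda-t}/2$-type bounds carefully so that the $\lambda^{\lambda-2t}$ factor lines up with $(\lambda^2)^{\lambda/2-t}$ after pairing blocks. Everything else is a mechanical multinomial computation once the centering and $\lambda$-wise independence have been used.
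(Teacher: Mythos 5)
The paper does not prove this statement; it is cited as a black-box result from Bellare and Rompel (1994). Your plan follows the same method-of-moments strategy used in that reference: center, apply Markov to the even $\lambda$-th power, expand the multinomial, discard tuples containing a singleton via $\E[Y_i]=0$ and $\lambda$-wise independence, replace each surviving factor by its second moment $\le p_i$, and then count. At that level you are on exactly the right track.

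There is, however, a concrete arithmetic error in the combinatorial step you flag as the ``main obstacle,'' and it is an error rather than just a rough spot. You claim $N(\lambda,t)\le \binom{\lambda/2}{t}(\lambda^2)^{\lambda/2-t}\cdot C^{\lambda/2}$ and then say this ``telescopes'' into $\sum_{t}\binom{\lambda/2}{t}(\lambda\mu)^t(\lambda^2)^{\lambda/2-t}$. But multiplying your $N(\lambda,t)$ bound by $\mu^t$ gives $\binom{\lambda/2}{t}(\lambda^2)^{\lambda/2-t}\mu^t$, which sums to $C^{\lambda/2}(\lambda^2+\mu)^{\lambda/2}$, not $(\lambda^2+\lambda\mu)^{\lambda/2}$: you are short a factor of $\lambda^t$. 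The bound you actually need is $B(\lambda,t)\le C^{\lambda/2}\binom{\lambda/2}{t}\lambda^{\lambda-t}$, where $B(\lambda,t)$ is the number of set partitions of $[\lambda]$ into $t$ blocks each of size at least $2$ (the $t!$ from ordering cancels against the $t!$ in $\mu^t/t!$, so the right object is the partition count, not the surjection count). Moreover, the naive Stirling estimate you gesture at, $S(\lambda,t)\le\binom{\lambda}{t}t^{\lambda-t}$, is too lossy: plugging it in and comparing to $\binom{\lambda/2}{t}\lambda^{\lambda-t}$ loses roughly a $2^{\lambda/2}$ factor near $t=\lambda/2$, precisely because it ignores the ``all blocks have size $\ge 2$'' restriction. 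A cleaner route is to bound $B(\lambda,t)\le\binom{\lambda}{2t}\frac{(2t)!}{2^t t!}\,t^{\lambda-2t}$ by choosing $2t$ positions to seed one pair per block and then assigning the remaining $\lambda-2t$ positions freely, which (using $t\le\lambda/2$ and Stirling) does fit under $8\binom{\lambda/2}{t}\lambda^{\lambda-t}$. So your overall approach is sound and matches the cited source, but the step you identify as hard is exactly where the sketch as written would not close, and the stated intermediate bound needs correcting before the binomial identity can be invoked.
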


\begin{lemma}[Samples from each cell]\label{lem:good_samples}
	In \textsc{PointsEstimation}$(o,\varepsilon,\delta)$ (Algorithm~\ref{alg:streaming_est_cell}), with probability at least $1-\delta/10$, $\forall i \in \{0,1,\ldots,L\}$ with $T_i(o)\geq 4\cdot 10^4\lambda$, $\forall C\in G_i$, we have either 
	\begin{align*}
	\sum_{p \in C\cap Q} h_i(p) \in |C\cap Q|\cdot \frac{4\cdot 10^4\lambda}{T_i(o)} \pm 4\cdot 10^3\lambda
	\end{align*}
	or
	\begin{align*}
	\sum_{p \in C\cap Q} h_i(p) \in |C\cap Q| \cdot \frac{4\cdot 10^4\lambda}{T_i(o)} \cdot(1\pm 0.01).
	\end{align*}
\end{lemma}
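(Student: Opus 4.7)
\begin{proofof}{Lemma~\ref{lem:good_samples} (proposal)}
The plan is to apply the $\lambda$-wise concentration inequality (Theorem~\ref{thm:lambda_wise_concentration}) cell by cell, split the analysis into an additive and a multiplicative regime depending on the size of $\mu_C := |C\cap Q|\cdot q_i$ (where $q_i := 4\cdot 10^4\lambda / T_i(o)$, which is in $[0,1]$ under the hypothesis $T_i(o)\ge 4\cdot 10^4\lambda$), and finally take a union bound over all cells in all levels. Empty cells are trivial because then $X_C := \sum_{p\in C\cap Q} h_i(p) = 0 = \mu_C$ and both conclusions hold.

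Fix a level $i$ with $T_i(o)\ge 4\cdot 10^4\lambda$ and a non-empty cell $C\in G_i$. The indicators $\{h_i(p)\}_{p\in C\cap Q}$ take values in $\{0,1\}$ and are $\lambda$-wise independent, so Theorem~\ref{thm:lambda_wise_concentration} applies to $X_C$ with expectation $\mu_C$. I would split on whether $\mu_C\le 4\cdot 10^5\lambda$. In the small-$\mu_C$ regime I take $a = 4\cdot 10^3\lambda$; then
\[
\frac{\lambda\mu_C+\lambda^2}{a^2} \le \frac{4\cdot 10^5\lambda^2+\lambda^2}{(4\cdot 10^3\lambda)^2} \le \frac{1}{32},
\]
so the failure probability is at most $8\cdot 32^{-\lambda/2}$, and in this case $X_C\in\mu_C\pm 4\cdot 10^3\lambda$ as required. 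In the large-$\mu_C$ regime I take $a = 0.01\,\mu_C$; using $\mu_C\ge \lambda$ so that $\lambda\mu_C+\lambda^2\le 2\lambda\mu_C$, I get
\[
\frac{\lambda\mu_C+\lambda^2}{a^2} \le \frac{2\lambda\mu_C}{10^{-4}\mu_C^2} = \frac{2\cdot 10^4\lambda}{\mu_C} \le \frac{2\cdot 10^4\lambda}{4\cdot 10^5\lambda} = \frac{1}{20},
\]
so the failure probability is at most $8\cdot 20^{-\lambda/2}$, and in this case $X_C\in(1\pm 0.01)\mu_C$.

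Combining the two cases, for every non-empty cell $C$ in every eligible level the failure probability is at most $8\cdot 20^{-\lambda/2}$. With $\lambda = 10\lceil dL+\log(1/\delta)+1\rceil$, this bound is at most $8\cdot 2^{-\lambda\log_2(20)/2}\le \delta\cdot 2^{-dL}/(100L)$, comfortably smaller than what is needed for a union bound. The number of non-empty cells per level is at most $|Q|\le \Delta^d = 2^{dL}$, and there are at most $L+1$ levels, so union-bounding over all (non-empty) cells in all levels gives a total failure probability at most $(L+1)\cdot 2^{dL}\cdot \delta\cdot 2^{-dL}/(100L) \le \delta/10$, which proves the lemma. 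The only mildly delicate point is ensuring the choice of $\lambda$ absorbs the $2^{dL}$ factor from the union bound over cells; the definition $\lambda = 10\lceil dL+\log(1/\delta)+1\rceil$ was made exactly for this purpose, so no further tuning is needed.
\end{proofof}
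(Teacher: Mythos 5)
Your proof is correct and follows essentially the same route as the paper: apply the $\lambda$-wise independent concentration bound (Theorem~\ref{thm:lambda_wise_concentration}) cell by cell, split into an additive regime (small $\mu_C$) and a multiplicative regime (large $\mu_C$), bound the per-cell failure probability by a term exponentially small in $\lambda$, and union-bound over all non-empty cells in all levels. The only differences are cosmetic numerics: you place the regime cutoff at $\mu_C \le 4\cdot 10^5\lambda$ rather than the paper's $|C\cap Q| \le T_i(o)$ (equivalently $\mu_C \le 4\cdot 10^4\lambda$), and you track the base $20$ rather than the paper's looser base $2$; both comfortably absorb the $(L+1)\cdot 2^{dL}$ union-bound factor given $\lambda = 10\lceil dL + \log(1/\delta) + 1\rceil$, and your explicit handling of empty cells is a small extra touch of rigor over the paper's phrasing.
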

\begin{proof}
	Consider a cell $C\in G_i$.
	If $|C\cap Q|\leq T_i(o)$, then $\mu=\E\left[\sum_{p\in C\cap Q} h_i(p)\right]=4\cdot 10^4\lambda|C\cap Q|/T_i(o)\leq 4\cdot 10^4 \lambda.$
	According to Theorem~\ref{thm:lambda_wise_concentration}, we have
	\begin{align*}
	\Pr\left[\left|\sum_{p\in C\cap Q} h_i(p) - \mu\right| > 4\cdot 10^3 \lambda \right]\leq 8\cdot \left(\frac{4\cdot 10^4 \lambda^2+\lambda^2}{(4\cdot 10^3\lambda)^2}\right)^{\lambda/2}\leq 8\cdot (1/2)^{\lambda/2}\leq (\delta/\Delta^d)^{5}.
	\end{align*}
	If $|C\cap Q|> T_i(o)$, then $\mu=\E\left[\sum_{p\in C\cap Q} h_i(p)\right]=4\cdot 10^4\lambda|C\cap Q|/T_i(o)> 4\cdot 10^4\lambda.$ According to Theorem~\ref{thm:lambda_wise_concentration}, we have
	\begin{align*}
	\Pr\left[\left|\sum_{p\in C\cap Q} h_i(p) - \mu\right| > 0.01\mu \right]\leq 8\cdot \left(\frac{\lambda \mu+\lambda^2}{(0.01\mu)^2}\right)^{\lambda/2}\leq 8\cdot (1/2)^{\lambda/2}\leq (\delta/\Delta^d)^{5}.
	\end{align*}
	By taking union bound over all $i\in\{0,1,\ldots,L\}$ and all the cells in $G_i$, the claim is proved.
\end{proof}

\begin{lemma}[Estimating the number of points in each cell]\label{lem:good_estimation_for_each_cell}
	If \textsc{PointsEstimation}$(o,\varepsilon,\delta)$ (Algorithm~\ref{alg:streaming_est_cell}) does not output \textbf{FAIL}, then with probability at least $1-\delta/10$, $\forall i\in \{0,1,\ldots,L\}, C\in G_i$, either $\hat{f}(C)\in |C\cap Q|\pm 0.1T_i(o)$ or $\hat{f}(C)\in (1\pm 0.01) |C\cap Q|$.
\end{lemma}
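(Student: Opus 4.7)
\medskip

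\noindent\textbf{Proof plan for Lemma~\ref{lem:good_estimation_for_each_cell}.}
The plan is to decompose $\hat{f}(C)$ into (i) a deterministic rescaling of the count $f_i(C)$ returned by \textsc{Storing}, and (ii) a concentration statement for the random subsampling performed by $h_i$, which is exactly what Lemma~\ref{lem:good_samples} provides. Fix a level $i\in\{0,1,\ldots,L\}$ and a cell $C\in G_i$. By construction of Algorithm~\ref{alg:streaming_est_cell}, the sub-stream fed into \textsc{Storing}$(G_i,\alpha,1,0.1\delta/L)$ consists of all the insertion/deletion operations on points $p$ with $h_i(p)=1$. Conditioning on the event that the entire call to \textsc{PointsEstimation} does not return \textbf{FAIL}, the corresponding \textsc{Storing} instance did not fail either, so by Lemma~\ref{lem:cell_point_storing} its output $f_i(C)$ equals exactly $\sum_{p\in C\cap Q} h_i(p)$ (with $f_i(C)=0$ if $C\notin\mathcal{C}_i$, which is consistent with the sum being $0$).

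Now split into two cases according to whether $T_i(o)$ is smaller or larger than $4\cdot 10^4\lambda$. If $T_i(o) < 4\cdot 10^4\lambda$, then the sampling probability $\min(4\cdot 10^4\lambda/T_i(o),1)$ equals $1$, so $h_i(p)=1$ deterministically for every $p$, hence $f_i(C)=|C\cap Q|$. Moreover, the scaling factor $\min\{T_i(o)/(4\cdot 10^4\lambda),1\}$ is also $1$, so $\hat{f}(C)=|C\cap Q|$, which satisfies both of the desired conclusions trivially and with no probability loss. If instead $T_i(o)\geq 4\cdot 10^4\lambda$, apply Lemma~\ref{lem:good_samples}, which says that with probability at least $1-\delta/10$ (over the choice of the $h_i$'s), simultaneously for all such $i$ and all $C\in G_i$,
\[
\sum_{p\in C\cap Q} h_i(p) \;\in\; |C\cap Q|\cdot\frac{4\cdot 10^4\lambda}{T_i(o)} \pm 4\cdot 10^3\lambda \quad\text{or}\quad \sum_{p\in C\cap Q} h_i(p) \;\in\; |C\cap Q|\cdot\frac{4\cdot 10^4\lambda}{T_i(o)}\cdot(1\pm 0.01).
\]

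Multiply the above by the scale factor $T_i(o)/(4\cdot 10^4\lambda)$, which equals the factor used to form $\hat{f}(C)$ in this case. The additive branch becomes
\[
\hat f(C)\;\in\; |C\cap Q| \pm 4\cdot 10^3\lambda\cdot\frac{T_i(o)}{4\cdot 10^4\lambda} \;=\; |C\cap Q| \pm 0.1\, T_i(o),
\]
and the multiplicative branch becomes $\hat f(C)\in(1\pm 0.01)|C\cap Q|$, giving exactly the two alternatives claimed. Taking the union of the two cases (the first deterministic, the second holding with probability $\geq 1-\delta/10$) yields the lemma.

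\medskip\noindent
The main (and only) non-routine step is verifying that the hash-count $\sum_{p\in C\cap Q}h_i(p)$ is actually what \textsc{Storing} reports as $f_i(C)$; once one unwinds the definition of the sub-stream passed to \textsc{Storing} and invokes Lemma~\ref{lem:cell_point_storing}, everything else is arithmetic on the scale factor $T_i(o)/(4\cdot 10^4\lambda)$. No additional probabilistic argument beyond Lemma~\ref{lem:good_samples} is needed, and we do not need to take a further union bound beyond the one already built into that lemma.
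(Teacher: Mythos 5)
Your proof is correct and follows the same approach as the paper's: establish via Lemma~\ref{lem:cell_point_storing} that $f_i(C)=\sum_{p\in C\cap Q}h_i(p)$ conditioned on not failing, split on whether the sampling probability saturates at $1$, and in the nontrivial case invoke Lemma~\ref{lem:good_samples} and rescale. The only cosmetic wrinkle (shared with the paper's own proof) is that both readings implicitly treat the rescaling factor in line~10 of Algorithm~\ref{alg:streaming_est_cell} as the reciprocal of the sampling probability, i.e.\ $\max\{T_i(o)/(4\cdot 10^4\lambda),1\}$ rather than the $\min$ as literally written; with that understanding your two cases are exactly the paper's.
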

\begin{proof}
	Suppose \textsc{PointsEstimation}$(o,\varepsilon,\delta)$ does not output \textbf{FAIL}. According to Lemma~\ref{lem:cell_point_storing}, $\forall i\in\{0,1,\ldots,L\}$ and $\forall C\in G_i$, $f_i(C)=\sum_{p\in C\cap Q}h_i(p)$. If $T_i(o)\leq 4\cdot 10^4 \lambda$, then $\hat{f}(C)=f_i(C)=\sum_{p\in C\cap Q}h_i(p) = |C\cap Q|$.
	
	Due to Lemma~\ref{lem:good_samples}, with probability at least $1-\delta/10$, $\forall i\in\{0,1,\ldots,L\}$ with $T_i(o)>4\cdot 10^4 \lambda$ and $\forall C\in G_i$, either $f_i(C)\in |C\cap Q|\cdot \frac{4\cdot 10^4\lambda}{T_i(o)}\pm 4\cdot 10^3\lambda$ or $f_i(C)\in |C\cap Q|\cdot \frac{4\cdot 10^4\lambda}{T_i(o)}\cdot (1\pm 0.01)$. Thus, either $\hat{f}(C)\in |C\cap Q|\pm 0.1T_i(o)$ or $\hat{f}(C)\in (1\pm0.01)|C\cap Q|.$
\end{proof}

\begin{lemma}[Samples from each $Q_i$]\label{lem:good_qi} In \textsc{PointsEstimation}$(o,\varepsilon,\delta)$ (Algorithm~\ref{alg:streaming_est_cell}), with probability at least $1-\delta/10$, $\forall i \in \{0,1,\ldots, L\}$ with $T_i(o)\geq 4\cdot 10^4\varepsilon^{-2}\gamma^{-1}\lambda$, $\forall C\in G_i$, either
	\begin{align*}
	\sum_{p\in Q_i} h'_i(p) \in |Q_i|\cdot \frac{4\cdot 10^4\varepsilon^{-2}\gamma^{-1}\lambda}{T_i(o)} \pm 4\cdot 10^3 \varepsilon^{-1} \lambda
	\end{align*}
	or
	\begin{align*}
	\sum_{p\in Q_i} h'_i(p) \in |Q_i|\cdot \frac{4\cdot 10^4\varepsilon^{-2}\gamma^{-1}\lambda}{T_i(o)} \cdot (1\pm 0.01\varepsilon).
	\end{align*}
\end{lemma}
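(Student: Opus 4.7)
\begin{proofsketch}
The plan is to mirror the proof of Lemma~\ref{lem:good_samples} essentially verbatim, with $C \cap Q$ replaced by $Q_i$ and with the hash function $h_i$ replaced by $h'_i$, whose sampling probability is scaled up by an extra factor of $\varepsilon^{-2}\gamma^{-1}$. The underlying machinery is Theorem~\ref{thm:lambda_wise_concentration}, which applies because $h'_i$ is $\lambda$-wise independent, so $X := \sum_{p\in Q_i} h'_i(p)$ is a sum of $|Q_i|$ many $\lambda$-wise independent $\{0,1\}$-valued variables.

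First I would fix an index $i\in\{0,1,\ldots,L\}$ with $T_i(o)\ge 4\cdot 10^4 \varepsilon^{-2}\gamma^{-1}\lambda$ (so that $p := 4\cdot 10^4\varepsilon^{-2}\gamma^{-1}\lambda/T_i(o)\le 1$ is a genuine probability) and note that $\mu := \E[X] = |Q_i|\cdot p$. I would then split into two cases depending on the magnitude of $\mu$ relative to the additive error $a := 4\cdot 10^3\varepsilon^{-1}\lambda$:
\begin{itemize}
\item If $|Q_i| \le \varepsilon\gamma T_i(o)$, then $\mu \le 4\cdot 10^4 \varepsilon^{-1}\lambda$ and I would aim for the additive bound $|X-\mu|\le a$. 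Theorem~\ref{thm:lambda_wise_concentration} gives
\[
\Pr[|X-\mu|>a] \le 8\left(\frac{\lambda\mu+\lambda^2}{a^2}\right)^{\lambda/2}
\le 8\left(\frac{4\cdot 10^4\varepsilon^{-1}\lambda^2+\lambda^2}{(4\cdot 10^3\varepsilon^{-1}\lambda)^2}\right)^{\lambda/2}\le 8\cdot(1/2)^{\lambda/2}.
\]
\item Otherwise $\mu > 4\cdot 10^4\varepsilon^{-1}\lambda$, and I would aim for the multiplicative bound $|X-\mu|\le 0.01\varepsilon\mu$. The same theorem gives
\[
\Pr[|X-\mu|>0.01\varepsilon\mu] \le 8\left(\frac{\lambda\mu+\lambda^2}{(0.01\varepsilon\mu)^2}\right)^{\lambda/2}\le 8\cdot(1/2)^{\lambda/2},
\]
using that $\mu\ge 4\cdot 10^4\varepsilon^{-1}\lambda$ to absorb the $\lambda^2$ term and to make $\lambda\mu/(0.01\varepsilon\mu)^2 = \lambda/(10^{-4}\varepsilon^2\mu)$ at most $1/4$.
\end{itemize}

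By the choice $\lambda = 10\lceil dL + \log(1/\delta)+1\rceil$, each of these failure probabilities is at most $(\delta/\Delta^d)^{5}$, which is much smaller than $\delta/(10(L+1))$. A union bound over the at most $L+1$ relevant levels $i$ then yields the overall failure probability $\delta/10$. The main (minor) obstacle is just bookkeeping the constants so that both cases give the right bases in the geometric decay $(\cdot)^{\lambda/2}$; this is entirely analogous to the two computations already performed in the proof of Lemma~\ref{lem:good_samples}, and no new ideas are required. Note that the ``$\forall C\in G_i$'' in the statement is vacuous here (the quantity summed does not depend on $C$), so no additional union bound over cells is needed, unlike in Lemma~\ref{lem:good_samples}.
\end{proofsketch}
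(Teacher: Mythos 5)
Your high-level approach is exactly the paper's, and the first case and the final union bound are fine, but you have placed the case split at the wrong threshold and this causes the second case to fail. You split on $|Q_i| \lessgtr \varepsilon\gamma T_i(o)$, which only gives $\mu > 4\cdot 10^4\varepsilon^{-1}\lambda$ in the second case. Plugging that into $\lambda/(10^{-4}\varepsilon^2\mu)$ yields $\lambda/(10^{-4}\varepsilon^2\mu) \le 1/(4\varepsilon)$, \emph{not} $1/4$ as you claim; for small $\varepsilon$ this exceeds $1$ and the bound $8\cdot(1/2)^{\lambda/2}$ is unavailable. The paper instead splits on $|Q_i| \lessgtr \gamma T_i(o)$, which in the second case gives the stronger $\mu > 4\cdot 10^4\varepsilon^{-2}\lambda$, so $\lambda/(10^{-4}\varepsilon^2\mu) \le 1/4$ as needed; and in the first case $\mu \le 4\cdot 10^4\varepsilon^{-2}\lambda$ still makes the additive computation go through, since $(4\cdot 10^4\varepsilon^{-2}\lambda^2 + \lambda^2)/(4\cdot 10^3\varepsilon^{-1}\lambda)^2 \le 5\cdot 10^4/(16\cdot 10^6) \le 1/2$. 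So the fix is simply to move your threshold from $\varepsilon\gamma T_i(o)$ to $\gamma T_i(o)$; everything else (including your correct observation that the ``$\forall C\in G_i$'' quantifier is vacuous for this lemma) carries over.
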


\begin{proof}
	If $|Q_i|\leq \gamma T_i(o)$, then $\mu=\E\left[\sum_{p\in Q_i} h_i'(p)\right]=4\cdot 10^4\varepsilon^{-2}\gamma^{-1}\lambda| Q_i|/T_i(o)\leq 4\cdot 10^4 \varepsilon^{-2}\lambda.$
	According to Theorem~\ref{thm:lambda_wise_concentration}, we have
	\begin{align*}
	\Pr\left[\left|\sum_{p\in Q_i} h_i'(p) - \mu\right| > 4\cdot 10^3 \varepsilon^{-1}\lambda \right]\leq 8\cdot \left(\frac{4\cdot 10^4 \varepsilon^{-2} \lambda^2+\lambda^2}{(4\cdot 10^3\varepsilon^{-1}\lambda)^2}\right)^{\lambda/2}\leq 8\cdot (1/2)^{\lambda/2}\leq (\delta/\Delta^d)^{5}.
	\end{align*}
	If $|Q_i|> \gamma T_i(o)$, then $\mu=\E\left[\sum_{p\in Q_i} h_i'(p)\right]=4\cdot 10^4\varepsilon^{-2}\gamma^{-1}\lambda|Q_i|/T_i(o)> 4\cdot 10^4\varepsilon^{-2}\lambda.$ According to Theorem~\ref{thm:lambda_wise_concentration}, we have
	\begin{align*}
	\Pr\left[\left|\sum_{p\in Q_i} h_i'(p) - \mu\right| > 0.01\varepsilon\mu \right]\leq 8\cdot \left(\frac{\lambda \mu+\lambda^2}{(0.01\varepsilon\mu)^2}\right)^{\lambda/2}\leq 8\cdot (1/2)^{\lambda/2}\leq (\delta/\Delta^d)^{5}.
	\end{align*}
	By taking union bound over all $i\in\{0,1,\ldots,L\}$, the claim is proved.
\end{proof}

\begin{lemma}[$\hat{q}_i$ can estimate $|Q_i|$ well]\label{lem:qi_esimate_well}
	If \textsc{PointsEsitmation}$(o,\varepsilon,\delta)$ (Algorithm~\ref{alg:streaming_est_cell}) does not output \textbf{FAIL}, then with probability at least $1-\delta/10$, $\forall i\in\{0,1,\ldots,L\}$, either $\hat{q}_i\in |Q_i|\pm 0.1\varepsilon \gamma T_i(o)$ or $\hat{q}_i \in (1\pm 0.01\varepsilon)|Q_i|$.
\end{lemma}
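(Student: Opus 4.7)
The plan is to combine the concentration bound of Lemma~\ref{lem:good_qi} with the exact-recovery guarantee of Lemma~\ref{lem:cell_point_storing}, then convert the resulting bound on $\sum_{p\in Q_i} h_i'(p)$ into a bound on $\hat{q}_i$ by multiplying through by the inverse of the sampling probability used by $h_i'$, namely $\min(4\cdot 10^4\varepsilon^{-2}\gamma^{-1}\lambda/T_i(o),\, 1)$. I will split into the two regimes in which this probability equals $1$ versus is strictly less than $1$, and take a single union bound over the $L+1$ levels at the end.

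The key identity to establish first is that, conditioned on \textsc{PointsEstimation} not outputting \textbf{FAIL}, the call \textsc{Storing}$(G_i,\alpha',1,0.1\delta/L)$ in line~\ref{sta:call_storing_dominating} succeeds, so by Lemma~\ref{lem:cell_point_storing} the set $\mathcal{C}_i'$ consists of all cells of $G_i$ that contain at least one point $p\in Q$ with $h_i'(p)=1$, and $f_i'(C)=|C\cap Q\cap\{p:h_i'(p)=1\}|$ for each such cell. Since $Q_i$ is exactly the union of $C\cap Q$ taken over crucial cells $C\in G_i$ (Fact~\ref{fac:Q_is_partitioned} together with the definition of crucial cells in Algorithm~\ref{alg:offline_sens}), restricting the sum of $f_i'$ to only the crucial cells yields
\[
\sum_{C\in\mathcal{C}_i'\,:\,C\text{ is crucial}} f_i'(C) \;=\; \sum_{p\in Q_i} h_i'(p).
\]

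When $T_i(o) < 4\cdot 10^4\varepsilon^{-2}\gamma^{-1}\lambda$, the sampling probability is $1$, so $h_i'\equiv 1$ on $[\Delta]^d$, the scaling factor in the definition of $\hat{q}_i$ collapses to $1$, and the identity above reduces to $\hat{q}_i=|Q_i|$ deterministically, which trivially satisfies both alternatives. When $T_i(o)\geq 4\cdot 10^4\varepsilon^{-2}\gamma^{-1}\lambda$, Lemma~\ref{lem:good_qi} (whose success event is shared across all $i$ with total probability at least $1-\delta/10$) gives that $\sum_{p\in Q_i} h_i'(p)$ is either within additive $4\cdot 10^3\varepsilon^{-1}\lambda$ or within multiplicative $(1\pm 0.01\varepsilon)$ of its mean $|Q_i|\cdot 4\cdot 10^4\varepsilon^{-2}\gamma^{-1}\lambda/T_i(o)$. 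Multiplying through by the scale factor $\varepsilon^2\gamma T_i(o)/(4\cdot 10^4\lambda)$, which is the inverse sampling probability in this regime, converts the center into $|Q_i|$, the additive slack $4\cdot 10^3\varepsilon^{-1}\lambda$ into exactly $0.1\varepsilon\gamma T_i(o)$, and the multiplicative slack into $(1\pm 0.01\varepsilon)$, matching the two alternatives in the lemma. I do not anticipate a serious obstacle; the only care needed is to correctly read the clipped scaling factor in the definition of $\hat{q}_i$ as $\max\{\varepsilon^2\gamma T_i(o)/(4\cdot 10^4\lambda),\,1\}$, namely the inverse of the sampling probability, so that the two regimes above line up cleanly before the final union bound.
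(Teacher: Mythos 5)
Your proof is correct and takes the same route as the paper: Lemma~\ref{lem:cell_point_storing} gives the identity $\sum_{C\,\text{crucial}} f_i'(C)=\sum_{p\in Q_i}h_i'(p)$, the two regimes (sampling probability $1$ vs.\ $<1$) are handled exactly as the paper does, and the rescaling by the inverse sampling probability converts the bound of Lemma~\ref{lem:good_qi} into the two alternatives for $\hat q_i$. Your observation that the clipping in the algorithm's definition of $\hat q_i$ must be read as $\max\{\varepsilon^2\gamma T_i(o)/(4\cdot10^4\lambda),1\}$ (the inverse of the sampling probability) rather than the literal $\min$ is indeed the correct and necessary reading for the paper's own case analysis to go through.
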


\begin{proof}
	Suppose \textsc{PointsEstimation}$(o,\varepsilon,\delta)$ does not output \textbf{FAIL}. According to Lemma~\ref{lem:cell_point_storing}, $\forall i\in\{0,1,\ldots,L\}$, $f_i'(C)=\sum_{p\in Q_i}h_i'(p)$. If $T_i(o)\leq 4\cdot 10^4 \varepsilon^{-2}\gamma^{-1}\lambda$, then $\hat{q}_i=\sum_{C\in G_i:C\text{ is crucial}}f_i'(C)=\sum_{p\in Q_i}h_i'(p) = |Q_i|$.
	
	Due to Lemma~\ref{lem:good_qi}, with probability at least $1-\delta/10$, $\forall i\in\{0,1,\ldots,L\}$ with $T_i(o)>4\cdot 10^4 \varepsilon^{-2}\gamma^{-1} \lambda$, either $$\sum_{C\in G_i:C\text{ is crucial}} f_i'(C)\in |Q_i|\cdot \frac{4\cdot 10^4\varepsilon^{-2}\gamma^{-1}\lambda}{T_i(o)}\pm 4\cdot 10^3\varepsilon^{-1}\lambda$$ or $$\sum_{C\in G_i:C\text{ is crucial}} f_i'(C)\in |Q_i|\cdot \frac{4\cdot 10^4\varepsilon^{-2}\gamma^{-1}\lambda}{T_i(o)}\cdot (1\pm 0.01\varepsilon).$$ Thus, either $\hat{q}_i\in |Q_i|\pm 0.1\varepsilon \gamma T_i(o)$ or $\hat{q}_i\in (1\pm0.01\varepsilon)|Q_i|$.
\end{proof}

\begin{lemma}[Number of points sampled from non-center cells]\label{lem:samples_from_non_center_cell}
	In \textsc{PointsEsitmation}$(o,\varepsilon,\delta)$ (Algorithm~\ref{alg:streaming_est_cell}), with probability at least $1-\delta/10$, $\forall i\in \{0,1,\ldots,L\}$, we have
	\begin{align*}
	\sum_{\text{non-center cell }C\in G_i}\sum_{p\in C \cap Q} h_i(p)\leq 4\cdot 10^3 \lambda + 1.01\cdot \frac{4\cdot 10^4\lambda}{T_i(o)}\sum_{\text{non-center cell }C\in G_i}|C\cap Q|
	\end{align*}
	and
	\begin{align*}
	\sum_{\text{non-center cell }C\in G_i}\sum_{p\in C \cap Q} h_i'(p)\leq 4\cdot 10^3 \varepsilon^{-1}\lambda + 1.01\cdot \frac{4\cdot 10^4\varepsilon^{-2}\gamma^{-1}\lambda}{T_i(o)}\sum_{\text{non-center cell }C\in G_i}|C\cap Q|.
	\end{align*}
\end{lemma}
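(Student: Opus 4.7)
The plan is to apply the $\lambda$-wise concentration bound (Theorem~\ref{thm:lambda_wise_concentration}) level by level, mirroring the arguments used in Lemmas~\ref{lem:good_samples} and~\ref{lem:good_qi} but aggregating across all non-center cells of $G_i$ rather than working inside a single cell. Fix $i\in\{0,1,\ldots,L\}$. Let $N_i=\sum_{\text{non-center }C\in G_i}|C\cap Q|$ and
$X_i=\sum_{\text{non-center }C\in G_i}\sum_{p\in C\cap Q}h_i(p)$. Since $h_i$ is $\lambda$-wise independent on $[\Delta]^d$ and the points of $Q$ are distinct, $X_i$ is a sum of $N_i$ $\lambda$-wise independent $\{0,1\}$-valued random variables with mean $\mu_i=(4\cdot 10^4\lambda/T_i(o))\cdot N_i$.

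I would split into two cases by the size of $\mu_i$. If $\mu_i\le 4\cdot 10^4\lambda$, apply Theorem~\ref{thm:lambda_wise_concentration} with $a=4\cdot 10^3\lambda$: then $\lambda\mu_i+\lambda^2\le(4\cdot 10^4+1)\lambda^2$, so
\[
\Pr\bigl[|X_i-\mu_i|>4\cdot 10^3\lambda\bigr]\le 8\Bigl(\tfrac{4\cdot 10^4\lambda^2+\lambda^2}{(4\cdot 10^3\lambda)^2}\Bigr)^{\lambda/2}\le 8(1/2)^{\lambda/2}.
\]
Consequently $X_i\le\mu_i+4\cdot 10^3\lambda\le 4\cdot 10^3\lambda+1.01\mu_i$. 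If instead $\mu_i>4\cdot 10^4\lambda$, apply the theorem with $a=0.01\mu_i$; since $\lambda<\mu_i/(4\cdot 10^4)$ we have $\lambda\mu_i+\lambda^2\le 2\lambda\mu_i$, hence
\[
\Pr\bigl[|X_i-\mu_i|>0.01\mu_i\bigr]\le 8\Bigl(\tfrac{2\lambda\mu_i}{10^{-4}\mu_i^2}\Bigr)^{\lambda/2}=8\Bigl(\tfrac{2\cdot 10^4\lambda}{\mu_i}\Bigr)^{\lambda/2}\le 8(1/2)^{\lambda/2},
\]
which yields $X_i\le 1.01\mu_i\le 4\cdot 10^3\lambda+1.01\mu_i$. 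In both cases the failure probability is at most $8(1/2)^{\lambda/2}\le(\delta/\Delta^d)^5$ by the choice $\lambda=10\lceil dL+\log(1/\delta)+1\rceil$.

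The second inequality is proved by an identical two-case calculation with $h_i'$ in place of $h_i$: the success probability of each hash is now $4\cdot 10^4\varepsilon^{-2}\gamma^{-1}\lambda/T_i(o)$, so the threshold separating the cases becomes $\mu_i'\lessgtr 4\cdot 10^4\varepsilon^{-2}\gamma^{-1}\lambda$ and the deviation targets $a$ become $4\cdot 10^3\varepsilon^{-1}\lambda$ and $0.01\varepsilon\mu_i'$; the same algebra produces the bound $8(1/2)^{\lambda/2}$ on the failure probability. A union bound over the $L+1$ values of $i$ for each of the two statements, followed by a union bound over the two statements, gives total failure probability at most $2(L+1)\cdot 8(1/2)^{\lambda/2}\le\delta/10$ by the choice of $\lambda$.

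I do not anticipate a real obstacle: this lemma is essentially a direct rerun of the concentration arguments in Lemmas~\ref{lem:good_samples} and~\ref{lem:good_qi} applied to the aggregated count over non-center cells. The only point worth a moment's care is confirming that the $\lambda$-wise independence of $h_i$ on $[\Delta]^d$ genuinely gives $\lambda$-wise independence of the indicator random variables $\{h_i(p): p\in Q\cap\bigcup_{C\text{ non-center}}C\}$; this follows because the points of $Q$ are pairwise distinct (as noted in the dynamic streaming model), so the indicators are evaluations of $h_i$ at distinct arguments.
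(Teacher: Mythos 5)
Your overall approach is right and matches the paper's own (which simply refers back to Lemmas~\ref{lem:good_samples} and~\ref{lem:good_qi}); you reconstruct the implicit argument correctly for the $h_i$ part. However, there is a slip in the threshold you use for the $h_i'$ part. You set the case boundary at $\mu_i' \lessgtr 4\cdot 10^4\varepsilon^{-2}\gamma^{-1}\lambda$, but the spurious $\gamma^{-1}$ factor breaks Case~1. With $a = 4\cdot10^3\varepsilon^{-1}\lambda$, the ratio in Theorem~\ref{thm:lambda_wise_concentration} is bounded by $1/2$ only when $\mu_i' \lesssim 8\cdot 10^6\varepsilon^{-2}\lambda$; but your threshold $4\cdot10^4\varepsilon^{-2}\gamma^{-1}\lambda$ with $\gamma = \varepsilon/(40^2Ld^3)$ is larger than this by a factor of order $Ld^3/\varepsilon$, so for $\mu_i'$ in the gap between $\approx 8\cdot 10^6\varepsilon^{-2}\lambda$ and your threshold neither of your two cases yields the claimed $8(1/2)^{\lambda/2}$ bound. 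The fix is simply to use the same boundary as in Lemma~\ref{lem:good_qi}: $\mu_i' \lessgtr 4\cdot 10^4\varepsilon^{-2}\lambda$ (equivalently $N_i \lessgtr \gamma T_i(o)$), and the algebra then closes as you describe.

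A second, smaller point: you state the mean as $\mu_i = (4\cdot10^4\lambda/T_i(o)) N_i$ without addressing the truncation in the hash probability $\min(4\cdot10^4\lambda/T_i(o),1)$. When $T_i(o) < 4\cdot 10^4\lambda$ the probability is $1$, $X_i = N_i$ deterministically, and your formula for $\mu_i$ no longer equals the true mean. The stated inequality still holds trivially in that regime, because $4\cdot10^4\lambda/T_i(o) > 1$ implies $1.01\cdot\frac{4\cdot10^4\lambda}{T_i(o)}N_i > N_i$, but one should say so explicitly since, unlike Lemmas~\ref{lem:good_samples} and~\ref{lem:good_qi}, the present lemma carries no restriction on $T_i(o)$.
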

\begin{proof}
	The proof is exactly the same as the proofs of Lemmas~\ref{lem:good_samples} and \ref{lem:good_qi}.
\end{proof}

\begin{lemma}\label{lem:points_in_non_center_cell}
	$\forall i\in\{0,1,\ldots,L\}$, 
	\begin{align*}
	\sum_{\text{non-center cell }C\in G_i} \frac{|C\cap Q|}{T_i(o)} \leq 400k\cdot (\OPT/o).
	\end{align*}
\end{lemma}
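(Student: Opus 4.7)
The plan is to use the geometric definition of a non-center cell to extract a uniform lower bound on the squared distance from any point in a non-center cell to the optimal center set $Z^*$, and then sum this lower bound against $\OPT$.

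First, I would unpack the definition. A cell $C\in G_i$ is a center cell when $\dist(C,Z^*)\le g_i/(2d)$, so for a non-center cell we have $\dist(C,Z^*)>g_i/(2d)$. Since $\dist(p,Z^*)\ge \dist(C,Z^*)$ for every $p\in C$, this immediately yields
\[
\dist^2(p,Z^*)>\frac{g_i^2}{4d^2}\qquad\text{for every $p$ lying in a non-center cell of $G_i$.}
\]

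Next, I would sum this inequality over all points in non-center cells of $G_i$ and compare to $\OPT=\cost(Q,Z^*)$. Because the points in non-center cells are a subset of $Q$,
\[
\frac{g_i^2}{4d^2}\sum_{\text{non-center }C\in G_i}|C\cap Q|\;\le\;\sum_{\text{non-center }C\in G_i}\sum_{p\in C\cap Q}\dist^2(p,Z^*)\;\le\;\OPT,
\]
so $\sum_{\text{non-center }C\in G_i}|C\cap Q|\le 4d^2\,\OPT/g_i^2$.

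Finally, I would substitute the definition $T_i(o)=(d/g_i)^2\cdot o/k\cdot 1/100=d^2 o/(100k g_i^2)$ to get
\[
\sum_{\text{non-center }C\in G_i}\frac{|C\cap Q|}{T_i(o)}\;\le\;\frac{4d^2\OPT/g_i^2}{d^2 o/(100k g_i^2)}\;=\;400k\cdot\frac{\OPT}{o},
\]
which is exactly the bound claimed. There is no real obstacle here: the lemma is essentially a one-line packing argument, and the only thing to be careful about is that the uniform lower bound $g_i/(2d)$ applies pointwise inside a non-center cell (which follows immediately from $\dist(p,Z^*)\ge\dist(C,Z^*)$), and that the constants from the definition of $T_i(o)$ line up to produce the $400k$ factor.
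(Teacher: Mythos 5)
Your proof is correct and follows essentially the same route as the paper's: both bound the number of points in non-center cells of $G_i$ by observing that each such point is at squared distance at least $g_i^2/(4d^2)$ from $Z^*$, giving $\sum |C\cap Q|\le 4d^2\OPT/g_i^2$, and then plug in the definition of $T_i(o)$ to recover the $400k\cdot\OPT/o$ bound. The paper's proof is just a terser version of exactly this calculation.
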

\begin{proof}
	\begin{align*}
	&\sum_{\text{non-center cell }C\in G_i} |C\cap Q|\\
	\leq & \frac{\OPT}{(g_i/(2d))^2} = 400k T_i(o) \cdot \frac{\OPT}{o}.
	\end{align*}
\end{proof}

\begin{lemma}[The success probability]\label{lem:point_estimation_suc_prob}
	Condition on the number of center cells of all the grids is at most $100kL$.
	If $o\in (\OPT/16,\OPT]$, then \textsc{PointsEstimation}$(o,\varepsilon,\delta)$ (Algorithm~\ref{alg:streaming_est_cell}) does not output \textbf{FAIL} with probability at least $1-3\delta/10$.
\end{lemma}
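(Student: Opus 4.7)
The plan is to identify every way \textsc{PointsEstimation}$(o,\varepsilon,\delta)$ can return \textbf{FAIL} and to union-bound each. Looking at Algorithm~\ref{alg:streaming_est_cell}, \textbf{FAIL} is produced only by an invocation of \textsc{Storing} in either line~\ref{sta:run_storing} or line~\ref{sta:call_storing_dominating}, for some level $i\in\{0,1,\ldots,L\}$: in total $2(L+1)$ calls. By Lemma~\ref{lem:cell_point_storing}, each call returns \textbf{FAIL} only when either (a) the number of non-empty cells in its input sub-stream exceeds the prescribed threshold ($\alpha$ for line~\ref{sta:run_storing}, $\alpha'$ for line~\ref{sta:call_storing_dominating}), or (b) an internal randomness event of probability at most $0.1\delta/L$ occurs. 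So the bulk of the work is ruling out (a) on the assumed events.

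To control event (a) at a fixed level $i$, I would split the non-empty cells of $G_i$ into center cells and non-center cells. The conditioning hypothesis bounds the total number of center cells across all grids by $100kL$, so in particular at most $100kL$ at any single level. For the non-center cells, the number that survive the $h_i$ (resp.\ $h_i'$) subsampling is at most the total number of surviving points, namely $\sum_{\text{non-center }C\in G_i}\sum_{p\in C\cap Q}h_i(p)$ (resp.\ with $h_i'$). Applying Lemma~\ref{lem:samples_from_non_center_cell}, these two sums are bounded simultaneously for every $i$ with probability at least $1-\delta/10$; using Lemma~\ref{lem:points_in_non_center_cell} together with $\OPT/o\le 16$ (from $o>\OPT/16$), the total point mass $\sum_{\text{non-center }C\in G_i}|C\cap Q|$ is at most $6400\,kT_i(o)$. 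Substituting this bound back into Lemma~\ref{lem:samples_from_non_center_cell} yields a surviving-cell count of $O(k\lambda)$ for the $h_i$ stream and $O(k\varepsilon^{-2}\gamma^{-1}\lambda)$ for the $h_i'$ stream. Plugging in $\gamma=\varepsilon/(40^2Ld^3)$ and $\lambda=\Theta(dL+\log(1/\delta))$, and adding the at most $100kL$ center cells, I would verify by a direct arithmetic check that the algorithm's choices $\alpha=10^{11}kLd\log(1/\delta)$ and $\alpha'=10^{16}\varepsilon^{-3}kL^2d^4$ comfortably dominate these bounds.

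Once the event of Lemma~\ref{lem:samples_from_non_center_cell} (probability $\ge 1-\delta/10$) and the assumed bound on center cells both hold, no \textsc{Storing} invocation can fail through case (a), so each fails only through case (b) with probability at most $0.1\delta/L$; a union bound over the $2(L+1)$ calls contributes at most $0.2\delta(L+1)/L$ additional failure probability, and combined with the $\delta/10$ slack from Lemma~\ref{lem:samples_from_non_center_cell} this totals to at most $3\delta/10$, as claimed.

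The main obstacle is the middle step: the bookkeeping must simultaneously track the numerical constants, the $\log(1/\delta)$ hidden inside $\lambda$, and the $\varepsilon^{-2}\gamma^{-1}$ blow-up peculiar to the $h_i'$ stream, and confirm that the very loose constants $10^{11}$ and $10^{16}$ built into the parameter choices really do absorb every factor appearing in Lemmas~\ref{lem:samples_from_non_center_cell}~and~\ref{lem:points_in_non_center_cell}. The remaining steps are all routine union bounds, and the role of the hypothesis $o>\OPT/16$ enters only to turn the $\OPT/o$ factor from Lemma~\ref{lem:points_in_non_center_cell} into an absolute constant.
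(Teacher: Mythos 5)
Your proposal is correct and matches the paper's own argument essentially step for step: decompose \textbf{FAIL} events into the two kinds of \textsc{Storing} invocations, bound the surviving cell count per level via the center/non-center split using Lemma~\ref{lem:samples_from_non_center_cell} and Lemma~\ref{lem:points_in_non_center_cell} (with $\OPT/o\le 16$), and union-bound the remaining internal \textsc{Storing} failures. Your final tally $0.2\delta(L+1)/L + \delta/10$ slightly exceeds $3\delta/10$ for small $L$, but the paper's own accounting has the same constant-level looseness, so there is no substantive gap.
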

\begin{proof}
	Let $o\geq \OPT/16$, according to Lemma~\ref{lem:points_in_non_center_cell} and Lemma~\ref{lem:samples_from_non_center_cell}, with probability at least $1-\delta/10,$ $\forall i\in \{0,1,\ldots, L\}$, we have 
	\begin{align*}
	\sum_{\text{non-center cell }C\in G_i}\sum_{p\in C \cap Q} h_i(p)\leq 4\cdot 10^3 \lambda + 1.01\cdot 4\cdot 10^4\lambda\cdot 6400k\leq 10^{11} kdL\log(1/\delta)
	\end{align*} 
	and
	\begin{align*}
	\sum_{\text{non-center cell }C\in G_i}\sum_{p\in C \cap Q} h_i'(p)\leq 4\cdot 10^3 \varepsilon^{-1}\lambda + 1.01\cdot 4\cdot 10^4\varepsilon^{-2}\gamma^{-1}\lambda\cdot 6400 k\leq 10^{15}\cdot \varepsilon^{-3}kL^2d^4.
	\end{align*}
	Since the number of center cells is at most $100kL$, the total number of cells which contains some $p$ with $h_i(p)=1$ is at most $10^{12}kdL\log(1/\delta)\leq \alpha$, and the total number of cells which contains some $p$ with $h_i'(p)=1$ is at most $10^{16}\varepsilon^{-3}kL^2d^4\leq \alpha'$.
	According to Lemma~\ref{lem:cell_point_storing}, with probability at least $1-2\delta/10$, none the call of \textsc{Storing} will return \textbf{FAIL}. Thus the overall probability that Algorithm~\ref{alg:streaming_est_cell} does not output \textbf{FAIL} is at last $1-3\delta/10$.
\end{proof}

\begin{lemma}[Space of Algorithm~\ref{alg:streaming_est_cell}]\label{lem:space_points_estimation}
	\textsc{PointsEstimation}$(o,\varepsilon,\delta)$ (Algorithm~\ref{alg:streaming_est_cell}) uses space at most $O(k\varepsilon^{-3}L^4 d^5\cdot \log^2(kLd/(\varepsilon\delta)))$ bits.
\end{lemma}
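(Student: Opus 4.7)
\begin{proofof}{Lemma~\ref{lem:space_points_estimation}}
The plan is to decompose the space usage of Algorithm~\ref{alg:streaming_est_cell} into its three main contributions and then invoke Lemma~\ref{lem:cell_point_storing} on each of the $2(L+1)$ calls to \textsc{Storing}. For each level $i\in\{0,1,\ldots,L\}$, the algorithm stores (i) the $\lambda$-wise independent hash functions $h_i$ and $h_i'$, (ii) one copy of \textsc{Storing}$(G_i,\alpha,1,0.1\delta/L)$ with $\alpha=10^{11}kLd\log(1/\delta)$, and (iii) one copy of \textsc{Storing}$(G_i,\alpha',1,0.1\delta/L)$ with $\alpha'=10^{16}\varepsilon^{-3}kL^2d^4$. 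The final bookkeeping required to produce $\hat{f}$ and $\hat{q}_i$ is dominated by the storage needed for these \textsc{Storing} instances.

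First I would handle the hash functions. Each of $h_i,h_i'\colon [\Delta]^d\to\{0,1\}$ is $\lambda$-wise independent with $\lambda=O(dL+\log(1/\delta))$, so by standard constructions each can be stored in $O(\lambda\cdot dL)=O((dL+\log(1/\delta))\cdot dL)$ bits. Summed over $L+1$ levels this contributes at most $O(L\cdot(dL+\log(1/\delta))\cdot dL)$ bits, which is dominated by the \textsc{Storing} cost below.

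Next I would apply Lemma~\ref{lem:cell_point_storing} with $\beta=1$ and failure parameter $0.1\delta/L$. For a single level, the first call uses $O(\alpha\cdot dL\cdot\log^2(\alpha L/\delta))=O(kL^2d^2\log(1/\delta)\cdot\log^2(kLd/\delta))$ bits, and the second call uses $O(\alpha'\cdot dL\cdot\log^2(\alpha' L/\delta))=O(\varepsilon^{-3}kL^3d^5\cdot\log^2(kLd/(\varepsilon\delta)))$ bits. Summing over all $L+1$ levels multiplies each bound by $L+1$, yielding $O(kL^3d^2\log(1/\delta)\cdot\log^2(kLd/\delta))$ bits for the first \textsc{Storing} instances combined, and $O(k\varepsilon^{-3}L^4d^5\cdot\log^2(kLd/(\varepsilon\delta)))$ bits for the second ones combined. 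The second expression dominates, matching the claimed bound; the hash-function cost and first \textsc{Storing} cost are absorbed.

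The proof is essentially a bookkeeping exercise, so I do not expect a substantive obstacle: the only mild care needed is to verify that the $\log^2(\alpha'/\delta)$ factor from Lemma~\ref{lem:cell_point_storing} is indeed $O(\log^2(kLd/(\varepsilon\delta)))$ after expanding $\alpha'=10^{16}\varepsilon^{-3}kL^2d^4$, which follows since $\log(\alpha' L/\delta)=O(\log(kLd/(\varepsilon\delta)))$.
\end{proofof}
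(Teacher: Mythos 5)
Your proof is correct and follows the same route as the paper: apply Lemma~\ref{lem:cell_point_storing} to the $L+1$ copies of \textsc{Storing}$(G_i,\alpha',1,0.1\delta/L)$ with $\alpha'=10^{16}\varepsilon^{-3}kL^2d^4$ and $\beta=1$, getting $O(L\cdot\alpha' dL\log^2(\alpha' L/\delta))=O(k\varepsilon^{-3}L^4d^5\log^2(kLd/(\varepsilon\delta)))$. The paper simply asserts that this line dominates; you additionally verify that the hash functions and the first batch of \textsc{Storing} calls (with $\alpha=10^{11}kLd\log(1/\delta)$) are indeed lower order, which is a harmless extra check.
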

\begin{proof}
	The total space used is dominated by the space needed to run $L+1$ copies of \textsc{Storing}$(G_i,\alpha',1,0.1\delta/L)$ in line~\ref{sta:call_storing_dominating} of Algorithm~\ref{alg:streaming_est_cell}. According to Lemma~\ref{lem:cell_point_storing}, the total space needed is $O(L\cdot \alpha' dL\log^2(L\alpha'/\delta))=O(k\varepsilon^{-3}L^4 d^5\cdot \log^2(kLd/(\varepsilon\delta)))$ bits.
\end{proof}

\subsection{Sensitivity Sampling over a Dynamic Stream}

Now we analyze Algorithm~\ref{alg:sensitivity_sampler}.

\begin{fact}\label{fac:sampling_implementation}
	If \textsc{Sampling}$(o,\varepsilon,\delta)$ (Algorithm~\ref{alg:sensitivity_sampler}) does not output \textbf{FAIL}, then line~\ref{sta:uniform_sampling_doable} can be implemented, and $p$ is a uniform sample drawn from $Q_i$.
\end{fact}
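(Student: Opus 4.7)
My plan is to decompose Fact~\ref{fac:sampling_implementation} into two sub-claims: (a) line~\ref{sta:uniform_sampling_doable} is well-defined, i.e., the set $\{q\in Q_i\mid (q,j)\in S_i\}$ can be enumerated by the algorithm; and (b) the resulting $p$ is uniformly distributed on $Q_i$. The first sub-claim will be a bookkeeping exercise that chains together the \textbf{FAIL} checks inside \textsc{Sampling}, while the second will be a short symmetry argument on the hash family.

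For (a), I would start from the assumption that \textsc{Sampling} did not return \textbf{FAIL}. Then the inner \textsc{PointsEstimation} call in line~\ref{sta:run_dynamic_partition} also did not fail, so by Algorithm~\ref{alg:streaming_est_cell} together with the heavy/crucial marking rule of Algorithm~\ref{alg:offline_sens} the algorithm can decide for every grid cell whether it is crucial, and therefore identify the set $Q_i = \{q\in Q : c_i(q)\text{ is crucial}\}$. Similarly the \textsc{Storing} call at line~\ref{sta:store_samples} succeeded, so Lemma~\ref{lem:cell_point_storing} yields $\mathcal{C}_i,f_i,S_i$ with $S_i$ containing every $(p,j')$ pair whose cell satisfies $f_i\leq\beta$. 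Finally, line~\ref{sta:guarantee_storing_correctly} would have triggered \textbf{FAIL} had any crucial $C\in\mathcal{C}_i$ violated $f_i(C)\leq\beta$, so every crucial cell in $\mathcal{C}_i$ is fully represented in $S_i$ (crucial cells with $f_i=0$ trivially contribute nothing). Hence $\{q\in Q_i\mid(q,j)\in S_i\}$ equals exactly $T_{i,j}:=\{q\in Q_i\mid h_{i,j}(q)=1\}$, and line~\ref{sta:find_a_sample} guarantees $T_{i,j}\neq\emptyset$, so the uniform draw in line~\ref{sta:uniform_sampling_doable} is well-defined.

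For (b), the procedure reduces to: pick $j$ as the smallest index of $A_i$ with $T_{i,j}\neq\emptyset$, then sample $p$ uniformly from $T_{i,j}$. Since $h_{i,1},\ldots,h_{i,\hat{m}}$ are drawn independently, and each is a $\lambda$-wise independent hash function with $\Pr[h_{i,j}(q)=1]$ equal to the same value for every $q\in [\Delta]^d$, the joint distribution $(h_{i,j}(q))_{q\in Q_i}$ is exchangeable under permutations of $Q_i$, so the distribution of $T_{i,j}$ is symmetric on $Q_i$. Independence across $j$ implies that conditioning on $J=j$ (equivalently, on $T_{i,j'}=\emptyset$ for $j'<j$ and on $T_{i,j}\neq\emptyset$) introduces only symmetric information about $Q_i$ and preserves exchangeability of $T_{i,j}$. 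I would then apply the bijection $S\leftrightarrow (S\setminus\{p\})\cup\{p'\}$ on subsets of $Q_i$ (noting $|S|$ is preserved) to obtain $\Pr[\mathrm{output}=p\mid J=j]=\Pr[\mathrm{output}=p'\mid J=j]$ for any $p,p'\in Q_i$, and average over $j$ to get uniformity on $Q_i$.

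The main subtlety is the exchangeability claim used in (b): bare $\lambda$-wise independence does not automatically force full permutation invariance of $(h_{i,j}(q))_{q\in Q_i}$, so the argument implicitly assumes the $\lambda$-wise independent family is drawn from a symmetric construction (as is standard). Once exchangeability is granted, the swap argument handles both the ``$T_{i,j'}=\emptyset$'' events and the conditional expectation $\E[1/|T_{i,j}|\mid p\in T_{i,j}]$ uniformly in $p\in Q_i$, and the remainder is purely a matter of unwinding definitions.
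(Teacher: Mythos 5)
Your approach matches the paper's: the paper also splits the argument into (i) the set $\{q\in Q_i\mid(q,j)\in S_i\}$ is computable and equals the hashed subset, via the condition at line~\ref{sta:guarantee_storing_correctly} together with Lemma~\ref{lem:cell_point_storing} and the ability to recover the crucial cells from $\hat f$, and (ii) uniformity. Your part (a) is the same argument, just with the bookkeeping spelled out in full, and it is correct.

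For part (b) you go considerably further than the paper. The paper's entire justification for uniformity is the single sentence that $\Pr[h_{i,j}(x)=1]=\Pr[h_{i,j}(y)=1]$ for all $x,y\in Q_i$, i.e.\ equal marginals. As you observe, equal marginals alone do \emph{not} force the output to be uniform: what one actually needs is that the quantity $\E[\mathbf{1}[p\in T_{i,j}]/|T_{i,j}|]$ be the same for every $p\in Q_i$, for which exchangeability of the random subset $T_{i,j}=\{q\in Q_i:h_{i,j}(q)=1\}$ suffices. Your handling of the index $J=j$ is also right: conditioning on $T_{i,j'}=\emptyset$ for $j'<j$ concerns only the independent hash functions $h_{i,j'}$, so it does not distort the distribution of $T_{i,j}$, and conditioning on $T_{i,j}\neq\emptyset$ is a permutation-invariant event. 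The remaining ``subtlety'' you flag is in fact a genuine gap, but it is the paper's gap, not yours: $\lambda$-wise independence (with $\lambda$ as set in the algorithm, which can be smaller than $|Q_i|$) does not by itself imply permutation-invariance of $(h_{i,j}(q))_{q\in Q_i}$, and standard low-space $\lambda$-wise independent constructions (e.g.\ degree-$(\lambda-1)$ polynomial hashing) are not exchangeable once one looks at more than $\lambda$ coordinates. The paper tacitly assumes more than it states. So your proof is not only aligned with the paper's but is the more honest of the two, identifying precisely where the abbreviated argument in the paper is insufficient.
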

\begin{proof}
	Although $Q_i$ cannot be stored explicitly, $\forall p\in Q,$ we are able to determine whether $p\in Q_i$ since we can use $\hat{f}$ to find all the crucial cells and check whether $p$ is in a crucial cell of $G_i$.
	
	Suppose \textsc{Sampling}$(o,\varepsilon,\delta)$ does not output \textbf{FAIL}.
	According to Lemma~\ref{lem:cell_point_storing} and the condition in line~\ref{sta:guarantee_storing_correctly}, $\forall j\in[\hat{m}],$ we have $\{(p,j)\mid p\in Q_i,h_{i,j}(p)=1\}\subseteq S_i$. Since $\forall x,y\in Q_i$, $\Pr[h_{i,j}(x)=1]=\Pr[h_{i,j}(y)=1]$, then the sample $p$ in line~\ref{sta:uniform_sampling_doable} is drawn uniformly from $Q_i$.
\end{proof}

\begin{lemma}[Correctness of Algorithm~\ref{alg:sensitivity_sampler}]\label{lem:correctness_of_sampler}
	Suppose $o\in(0,\OPT]$.
	If \textsc{Sampling}$(o,\varepsilon,\delta)$ (Algorithm~\ref{alg:sensitivity_sampler}) does not output \textbf{FAIL}, then with probability at least $1-\delta/5$, the output $S$ by \textsc{Sampling}$(o,\varepsilon,\delta)$ is an $\varepsilon$-coreset for $Q$. Furthermore, the size $|S|$ is at most $O(k\varepsilon^{-2}L^2d^4\log(kLd)\cdot (\OPT/o + 1))$.
\end{lemma}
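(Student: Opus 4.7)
\begin{proofsketchof}{Lemma~\ref{lem:correctness_of_sampler}}
The plan is to reduce the behavior of \textsc{Sampling}$(o,\varepsilon,\delta)$ to that of the offline procedure analyzed in Theorem~\ref{thm:offline_sampling_coreset}, so that its coreset guarantee and sensitivity bound can be invoked directly. First, I condition on the joint good event that (i) the number of center cells is at most $100kL$ (Lemma~\ref{lem:num_center_cell}), (ii) for every $i$ and every $C\in G_i$, the estimator $\hat{f}(C)$ produced inside \textsc{PointsEstimation} satisfies either $\hat{f}(C)\in|C\cap Q|\pm 0.1T_i(o)$ or $\hat{f}(C)\in(1\pm 0.01)|C\cap Q|$ (Lemma~\ref{lem:good_estimation_for_each_cell}), and (iii) for every $i$, $\hat{q}_i\in|Q_i|\pm 0.1\varepsilon\gamma T_i(o)$ or $\hat{q}_i\in(1\pm 0.01\varepsilon)|Q_i|$ (Lemma~\ref{lem:qi_esimate_well}). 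A union bound absorbs these bad events into a $O(\delta)$ failure probability. Under these events the identity of heavy/crucial cells, the partition $Q_0,\ldots,Q_L$, the set $I$, and the quantities $T_i(o),s'(p),t'$ computed inside \textsc{Sampling} agree exactly with those that the offline Algorithm~\ref{alg:offline_coreset} would compute on the same input, with the same estimates $\hat{f}(\cdot),\hat{q}_i$ feeding Theorem~\ref{thm:offline_sampling_coreset}'s hypothesis.

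Next I argue that the $m$ samples drawn in the loop starting at line~\ref{sta:uniform_sample_stage} are distributed exactly as the $m$ i.i.d. samples in the offline procedure. The level $i$ is chosen with probability $(\hat{q}_i\cdot 10d^3/T_i(o))/t'$ in both algorithms. Conditioned on $i$, the streaming algorithm picks the smallest free hash index $j\in A_i$ such that some $(p,j)\in S_i$ with $p\in Q_i$ exists, and then uniformly chooses among those $p$. By Fact~\ref{fac:sampling_implementation}, assuming the non-\textbf{FAIL} condition in line~\ref{sta:guarantee_storing_correctly} holds, $\{p\in Q_i:(p,j)\in S_i\}$ is exactly $\{p\in Q_i:h_{i,j}(p)=1\}$, which by the pairwise-uniformity of $h_{i,j}$ is a uniformly random subset of $Q_i$ (of non-empty size, else we advance $j$). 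So conditional on $j$, the drawn $p$ is uniform in $Q_i$. Since each $j\in[\hat{m}]$ is used for at most one final sample and the $h_{i,j}$ are independent across $j$, the successive samples are mutually independent; moreover advancing $j$ only discards $j$'s that yielded empty sets, which is independent of the identities of points in $Q_i$ and therefore does not bias the distribution. Thus every sample is uniform on $Q_i$, so unconditionally each sample picks $p$ with probability $(\hat{q}_i\cdot 10d^3/T_i(o))/(t'\cdot|Q_i|)$, exactly matching line 12--13 of Algorithm~\ref{alg:offline_coreset}. The weight attached, $t'/(ms'(p))$, also matches.

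Given this distributional equivalence, the coreset guarantee of Theorem~\ref{thm:offline_sampling_coreset} applies verbatim: with probability at least $1-\delta/5$ (absorbing all the conditioning events above together with the $0.99$-probability event in the proof of Theorem~\ref{thm:offline_sampling_coreset}, after adjusting constants inside \textsc{PointsEstimation} by $\delta$), the output $S$ is an $\varepsilon$-coreset for $Q$. The size bound follows immediately: by Lemma~\ref{lem:total_sensitivity_ub} and $\hat{q}_i\leq 2|Q_i|$, we have $t'\leq O(kLd^3(\OPT/o+1))$, hence $m=\Theta(t'\varepsilon^{-2}Ld\log(t'/\delta))=O(k\varepsilon^{-2}L^2d^4\log(kLd)\cdot(\OPT/o+1))$ as claimed.

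The main obstacle is the second paragraph: establishing that the ``draw the smallest good $j$, then pick uniformly'' procedure really yields $m$ independent uniform samples from $Q_i$ with the correct per-sample distribution. The subtlety is that the set $A_i$ shrinks over time and we must argue that conditioning on which $j$ is used does not leak information about the identity of the sampled point. This is where the independence across $j$ of the hash families $h_{i,j}$, combined with uniformity over $Q_i$ within a fixed $j$, is essential; one also needs that $\hat{m}$ is large enough that line~\ref{sta:find_a_sample} does not \textbf{FAIL} with high probability, which follows from the choice $\hat{m}=\Theta(k\varepsilon^{-3}L^4d^7\log(dLk/\delta)/\delta)$ dominating the expected number of empty $j$'s by a Markov/Chernoff argument over the $\lambda$-wise independent hashes.
\end{proofsketchof}
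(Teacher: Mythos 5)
Your overall approach — reduce the streaming sampler to the offline procedure, invoke Fact~\ref{fac:sampling_implementation} to argue that each drawn point is uniform in its $Q_i$ and independent across iterations, and then apply Theorem~\ref{thm:offline_sampling_coreset} — matches the paper's proof. The second paragraph fleshing out why the ``advance $j$, then pick uniformly from $\{p\in Q_i:(p,j)\in S_i\}$'' scheme yields the correct i.i.d.\ distribution is more detailed than, but consistent with, Fact~\ref{fac:sampling_implementation}.

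The genuine gap is the probability accounting. You put the center-cells event (Lemma~\ref{lem:num_center_cell}) into the union bound and claim all three bad events are ``absorbed into a $O(\delta)$ failure probability.'' That cannot work: the center-cells event depends only on the single random grid shift $v$ chosen once in Algorithm~\ref{alg:final}, and its failure probability is a fixed constant $\le 0.06$. No amount of re-tuning $\hat m$, $m$, or constants inside \textsc{PointsEstimation} by $\delta$ will drive it below $\delta/5$, so for small $\delta$ your claimed $1-\delta/5$ bound is false as stated. The paper avoids this in two ways. For the $\varepsilon$-coreset correctness, the center-cells event is not needed at all: Lemma~\ref{lem:sensitivity_ub} and Lemma~\ref{lem:reducing_to_important_points} (which feed Theorem~\ref{thm:fl11}) hold regardless of how many center cells there are. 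So the union bound contains only the $\hat f$-good and $\hat q$-good events from Lemmas~\ref{lem:good_estimation_for_each_cell} and~\ref{lem:qi_esimate_well} (each $1-\delta/20$ because \textsc{PointsEstimation} runs with parameter $\delta/2$), plus a $1-\delta/10$ version of the Theorem~\ref{thm:fl11} event — boosted from $0.99$ not by adjusting \textsc{PointsEstimation} but by the $m = \Theta(t'\varepsilon^{-2}Ld\log(t'/\delta))$ setting in Algorithm~\ref{alg:sensitivity_sampler} (note the $\log(t'/\delta)$ in place of the offline $\log t'$). That gives exactly $1-\delta/5$. The center-cells conditioning, which you rightly note is required for the size bound $t'\le O(kLd^3(\OPT/o+1))$ via Lemma~\ref{lem:total_sensitivity_ub}, is instead performed once at the outermost level, in the proof of Lemma~\ref{lem:final_correctness}, contributing a one-time additive $0.06$ to the final $0.9$ success guarantee rather than to each per-guess invocation of this lemma.
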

\begin{proof}
	Due to Lemma~\ref{lem:good_estimation_for_each_cell} and Lemma~\ref{lem:qi_esimate_well}, with probability at least $1-\delta/10$, $\forall C\in\bigcup_{i=0}^L G_i$, either $\hat{f}(C)\in |C\cap Q|\pm 0.1 T_i(o)$ or $\hat{f}(C)\in (1\pm 0.01)|C\cap Q|$ and $\forall i\in\{0,1,\ldots,L\}$, either $\hat{q}_i\in |Q_i|\pm 0.1\varepsilon\gamma T_i(o)$ or $\hat{q}_i \in (1\pm 0.01\varepsilon)|Q_i|$, where $Q_0,Q_1,\ldots,Q_L$ are defined by using the estimation $\hat{f}(\cdot)$ (Algorithm~\ref{alg:offline_sens} or Algorithm~\ref{alg:streaming_est_cell}). According to Fact~\ref{fac:sampling_implementation}, the sampling procedure can be implemented. Then by Theorem\footnote{The proof is slightly different since Theorem~\ref{thm:offline_sampling_coreset} only claims a constant success probability. See Section~\ref{sec:offline_analysis} for the detailed proof of Theorem~\ref{thm:offline_sampling_coreset}.}~\ref{thm:offline_sampling_coreset}, with probability at least $1-\delta/10$, the output set $S$ is an $\varepsilon$-coreset for $Q$. By taking union bound, with probability $1-\delta/5$, the set $S$ is an $\varepsilon$-coreset for $Q$. 
\end{proof}

Now let us consider the success probability of Algorithm~\ref{alg:sensitivity_sampler}. Since we know the success probability of \textsc{PointsEstimation}$(o,\varepsilon,\delta/2)$ in line~\ref{sta:run_dynamic_partition} of Algortihm~\ref{alg:sensitivity_sampler} and the success probability of \textsc{Storing}$(G_i,\alpha,\beta,0.1\delta/L)$ in line~\ref{sta:store_samples}, we only need to analyze the success probability in line~\ref{sta:find_a_sample} of Algorithm~\ref{alg:sensitivity_sampler}. To make line~\ref{sta:find_a_sample} succeed, we need to find enough samples from $Q_i$, i.e., we hope that $\sum_{j=1}^{\hat{m}} \mathbf{1}(|\{p\in Q_i\mid h_{i,j}(p)=1\}|>0)$ is large.
In the following analysis, we will show that $\sum_{j=1}^{\hat{m}} \mathbf{1}(|\{p\in Q_i\mid h_{i,j}(p)=1\}|>0)$ is large. First, we show that the number of samples drawn from level $i$ is bounded.

\begin{lemma}[Number of samples from each level]\label{lem:num_samples_requested_each_level}
	Let $I$ be the set computed in \textsc{Sampling}$(o,\varepsilon,\delta)$ (Algortihm~\ref{alg:sensitivity_sampler}).
	With probability at least $1-\delta/10$, $\forall i\in I$, the number of times that $i$ is chosen in line~\ref{sta:choose_i} of Algorithm~\ref{alg:sensitivity_sampler} is at most 
	\begin{align*}
	O\left(\varepsilon^{-2}Ld^4\log (t'/\delta)\cdot\frac{L}{\delta}\cdot\frac{\hat{q}_i}{T_i(o)}\right).
	\end{align*}
\end{lemma}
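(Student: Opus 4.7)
The plan is to bound the count for each level by Markov's inequality and then union bound over $i \in I$. Specifically, for each $i \in I$, let $X_i$ denote the number of iterations $j \in \{1, \ldots, m\}$ in the sampling loop (starting at line~\ref{sta:uniform_sample_stage}) in which level $i$ is chosen in line~\ref{sta:choose_i}. Writing $X_i = \sum_{j=1}^m Y_j^{(i)}$ where $Y_j^{(i)}$ is the indicator that the $j$-th iteration selects level $i$, the $Y_j^{(i)}$ are i.i.d.\ Bernoulli with parameter $p_i = (\hat{q}_i \cdot 10 d^3 / T_i(o))/t'$, so $X_i$ is binomial with mean
\[
\E[X_i] = m \cdot p_i = \Theta\bigl(t'\varepsilon^{-2} L d \log(t'/\delta)\bigr) \cdot \frac{\hat{q}_i \cdot 10 d^3/T_i(o)}{t'} = \Theta\!\left(\varepsilon^{-2} L d^4 \log(t'/\delta) \cdot \frac{\hat{q}_i}{T_i(o)}\right),
\]
using the choice $m = \Theta(t' \varepsilon^{-2} L d \log(t'/\delta))$ from Algorithm~\ref{alg:sensitivity_sampler}.

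Next, I would apply Markov's inequality to each $X_i$ with threshold $\tau_i := (10 L / \delta) \cdot \E[X_i]$, which gives
\[
\Pr\bigl[X_i \geq \tau_i\bigr] \leq \frac{\E[X_i]}{\tau_i} = \frac{\delta}{10 L}.
\]
Since $|I| \leq L + 1$, a union bound over all $i \in I$ shows that with probability at least $1 - (L+1) \cdot \delta/(10L) \geq 1 - \delta/10$, every $X_i$ simultaneously satisfies
\[
X_i \leq \frac{10 L}{\delta} \cdot \E[X_i] = O\!\left(\varepsilon^{-2} L d^4 \log(t'/\delta) \cdot \frac{L}{\delta} \cdot \frac{\hat{q}_i}{T_i(o)}\right),
\]
which is the desired bound.

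There is no genuine obstacle here: the argument is essentially a two-line moment calculation followed by Markov and a union bound. The only point that requires a bit of care is matching the constants in $m$ and $p_i$ so that the $1/t'$ factors cancel cleanly and the $d^4$ factor (coming from the $d$ in $m$ multiplied with the $d^3$ from the sensitivity upper bound) emerges correctly. Note also that a stronger (Chernoff-type) concentration bound would remove the $L/\delta$ factor, but we do not need it here; Markov's inequality is enough to obtain the stated bound, and one could anticipate that later parts of the analysis will multiply this per-level count by $|I|$ and absorb the $L/\delta$ blowup into the final space guarantee.
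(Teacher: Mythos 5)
Your proof is correct in substance and takes the same route as the paper: compute the per-level expectation, apply Markov's inequality with a threshold of order $(L/\delta)\cdot \E[X_i]$, and union bound over $i\in I$. There is one small arithmetic slip: with per-level failure probability $\delta/(10L)$ and $|I|\le L+1$, the union bound gives failure probability $(L+1)\delta/(10L)\le 2\delta/10 = \delta/5$, not $\delta/10$; the paper avoids this by taking the per-level failure probability to be $\delta/(20L)$. Since the lemma's bound is stated with $O(\cdot)$, you can simply set the Markov threshold to $(20L/\delta)\cdot\E[X_i]$ and nothing else changes.
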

\begin{proof}
	For $i\in I$, the expected number of times that $i$ is chosen is $O(m\cdot \hat{q}_i\cdot d^3/T_i(o)/t')$. By Markov's inequality, with probability at least $1-\delta/(20L)$, the number of times that $i$ is chosen in line~\ref{sta:choose_i} of Algorithm~\ref{alg:sensitivity_sampler} is at most $O\left(\varepsilon^{-2}Ld^4\log (t'/\delta)\cdot\frac{L}{\delta}\cdot\frac{\hat{q}_i}{T_i(o)}\right).$ By taking union bound over all $i\in I$, we complete the proof.
\end{proof}

\begin{lemma}[Bounding $t'$]\label{lem:bound_of_tprime}
	Consider $o\geq \OPT/16$. Condition on $\hat{f}:\bigcup_{i=0}^L G_i\rightarrow \mathbb{R}_+$ (in Algorithm~\ref{alg:sensitivity_sampler}) is good (Lemma~\ref{lem:good_samples}), $\hat{q}_0,\hat{q}_1,\ldots,\hat{q}_L$ (in Algorithm~\ref{alg:sensitivity_sampler}) are good (Lemma~\ref{lem:good_qi}), and the number of center cells is at most $100kL$, then we have $t'\leq 10^6 d^3Lk$.
\end{lemma}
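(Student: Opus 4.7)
\medskip

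\noindent\textbf{Proof proposal for Lemma~\ref{lem:bound_of_tprime}.}
The plan is to reduce to the offline total-sensitivity bound (Lemma~\ref{lem:total_sensitivity_ub}) by comparing $t'$, which is built from the streaming estimates $\hat q_i$, against the quantity $\sum_{p\in Q}s'(p)=\sum_i |Q_i|\cdot 10d^3/T_i(o)$ that the offline analysis already controls. In one line: bound $\hat q_i$ by $O(|Q_i|)$ for every $i\in I$, and then apply Lemma~\ref{lem:total_sensitivity_ub} with $o\ge\OPT/16$.

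First I would observe that conditioning on $\hat f$ being good (Lemmas~\ref{lem:good_samples} and~\ref{lem:good_estimation_for_each_cell}) means that every cell $C\in G_i$ satisfies either $\hat f(C)\in|C\cap Q|\pm 0.1T_i(o)$ or $\hat f(C)\in(1\pm 0.01)|C\cap Q|$, which is exactly the hypothesis on the estimated cell sizes in Theorem~\ref{thm:offline_sampling_coreset}. Consequently, the ``heavy''/``crucial'' labels used by Algorithm~\ref{alg:sensitivity_sampler} coincide with those that the offline Algorithm~\ref{alg:offline_sens} would produce, so the sets $Q_0,\ldots,Q_L$ arising in the streaming analysis are precisely those to which Lemma~\ref{lem:total_sensitivity_ub} applies. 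Since also the number of center cells is $\le 100kL$ by hypothesis, Lemma~\ref{lem:total_sensitivity_ub} gives
\[
\sum_{i=0}^{L}|Q_i|\cdot\frac{10d^3}{T_i(o)}\;\le\;4000\,kLd^3\Bigl(\frac{\OPT}{o}+1\Bigr)\;\le\;4000\cdot 17\cdot kLd^3,
\]
using $o\ge\OPT/16$ in the last step.

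Next I would show that $\hat q_i\le 2|Q_i|$ for every $i\in I$. This is where the definition of $I$ is used. In the multiplicative case $\hat q_i\in(1\pm 0.01\varepsilon)|Q_i|$ the bound is immediate since $\varepsilon<1/2$. In the additive case $\hat q_i\in|Q_i|\pm 0.1\varepsilon\gamma T_i(o)$, the membership $i\in I$ gives $\hat q_i\ge\gamma T_i(o)$, which forces $|Q_i|\ge(1-0.1\varepsilon)\gamma T_i(o)$, and hence the additive slack $0.1\varepsilon\gamma T_i(o)$ is at most $|Q_i|$. Plugging this into the definition of $t'$,
\[
t'\;=\;\sum_{i\in I}\hat q_i\cdot\frac{10d^3}{T_i(o)}\;\le\;2\sum_{i\in I}|Q_i|\cdot\frac{10d^3}{T_i(o)}\;\le\;2\cdot 4000\cdot 17\cdot kLd^3\;\le\;10^{6}\,d^3 Lk,
\]
which is the claimed inequality.

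The main conceptual obstacle is the first step: carefully matching the streaming classification of cells (computed from $\hat f$) with the offline classification, so that one is entitled to invoke Lemma~\ref{lem:total_sensitivity_ub} on the streaming $Q_i$'s. This is essentially bookkeeping rather than a new estimate, but it must be articulated since Lemma~\ref{lem:total_sensitivity_ub} was stated for the offline algorithm. Once this identification is in place the remaining work, bounding $\hat q_i$ in terms of $|Q_i|$ on $I$ and plugging in $o\ge\OPT/16$, is straightforward.
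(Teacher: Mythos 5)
Your proposal is correct and follows essentially the same route as the paper: bound $\hat q_i\le 2|Q_i|$ for $i\in I$ using the two possible forms of the estimate together with the condition $\hat q_i\ge\gamma T_i(o)$, extend the sum from $I$ to all levels, and invoke Lemma~\ref{lem:total_sensitivity_ub} with $o\ge\OPT/16$. The preliminary remark about matching the streaming classification of cells to the offline one is implicit in the paper but is a sensible thing to spell out.
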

\begin{proof}
	\begin{align*}
	t' & = \sum_{i\in I} \hat{q}_i\cdot 10 d^3/T_i(o)\\
	& \leq \sum_{i\in I} |Q_i|\cdot 20 d^3/T_i(o)\\
	& \leq \sum_{i=0}^L |Q_i|\cdot 20 d^3/T_i(o)\\
	& \leq 2\cdot 4000d^3Lk\cdot 20\\
	& \leq 10^6 d^3 Lk,
	\end{align*}
	where the first inequality follows by $\forall i\in I,\hat{q}_i\geq \gamma T_i(o)$ and either $\hat{q}_i\in |Q_i|\pm 0.1\varepsilon\gamma T_i(o)$ or $\hat{q}_i\in (1\pm 0.01\varepsilon)|Q_i|$, the second inequality follows by $I\subseteq \{0,1,\ldots,L\}$, the third inequality follows by Lemma~\ref{lem:total_sensitivity_ub} and $o\geq \OPT/16$. 
\end{proof}

\begin{lemma}[Number of crucial points in each level]\label{lem:crucial_points_in_each_level}
	Consider $o\geq \OPT/16$. Conditioning on $\hat{f}:\bigcup_{i=0}^L G_i\rightarrow \mathbb{R}_+$ (in Algorithm~\ref{alg:sensitivity_sampler}) is good (Lemma~\ref{lem:good_estimation_for_each_cell}) and the number of center cells is at most $100kL$, we have:
	\begin{align*}
	\frac{|Q_i|}{T_i(o)}\leq 7000kL.
	\end{align*}
\end{lemma}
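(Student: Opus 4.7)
\begin{proofof}{Lemma~\ref{lem:crucial_points_in_each_level} (Plan)}
The plan is to split $Q_i$ into the points lying in center cells and the points lying in non-center cells of $G_i$, and to bound each contribution separately using the conditioned events. Throughout, recall from Fact~\ref{fact:trivial-1} that every crucial cell $C\in G_i$ satisfies $|C\cap Q|\leq 1.1\, T_i(o)$; since the criterion for marking cells as heavy/crucial in Algorithm~\ref{alg:streaming_est_cell} is driven by the estimator $\hat{f}$, the conditioning on $\hat{f}$ being good exactly supplies the estimation precision needed so that Fact~\ref{fact:trivial-1} continues to hold in the streaming execution.

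For the center-cell contribution, I will use the hypothesis that the total number of center cells across all $L+1$ grids is at most $100kL$, so in particular at most $100kL$ center cells appear in $G_i$. Each such cell contributes at most $1.1\, T_i(o)$ points to $Q_i$ (only crucial cells contribute at all, and for them the bound from Fact~\ref{fact:trivial-1} applies). This gives a contribution of at most $110\,kL\, T_i(o)$ to $|Q_i|$.

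For the non-center-cell contribution, I will invoke Lemma~\ref{lem:points_in_non_center_cell} directly: summing $|C\cap Q|$ over non-center cells $C\in G_i$ yields at most $400k\,T_i(o)\cdot (\OPT/o)$. Since we are in the regime $o\geq \OPT/16$, this is at most $6400 k\, T_i(o)$. Adding the two contributions yields
\[
\frac{|Q_i|}{T_i(o)} \;\leq\; 110\,kL + 6400\,k \;\leq\; 7000\,kL,
\]
using $L\geq 1$ in the last step. The main (mild) subtlety to check is that the estimator $\hat{f}$ being good is enough to preserve the inequalities in Fact~\ref{fact:trivial-1} and hence the inclusion of points into crucial versus heavy cells matches the offline analysis; once this is verified, everything else is an immediate combination of Lemma~\ref{lem:points_in_non_center_cell} with the center-cell counting hypothesis.
\end{proofof}
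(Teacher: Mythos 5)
Your proof is correct and takes essentially the same approach as the paper: split $Q_i$ into contributions from center and non-center cells of $G_i$, bound the former by the center-cell count times $1.1\,T_i(o)$ (via Fact~\ref{fact:trivial-1}), bound the latter via Lemma~\ref{lem:points_in_non_center_cell} together with $o\geq \OPT/16$, and add. Your explicit remark that the conditioning on $\hat{f}$ being good is what justifies carrying Fact~\ref{fact:trivial-1} into the streaming execution is a useful clarification that the paper leaves implicit, but the argument is otherwise identical.
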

\begin{proof}
	\begin{align*}
	\frac{|Q_i|}{T_i(o)}&= \sum_{\text{center crucial cell }C\in G_i} \frac{|C\cap Q|}{T_i(o)} + \sum_{\text{non-center crucial cell }C\in G_i} \frac{|C\cap Q|}{T_i(o)}\\
	&\leq 110kL + 400k(\OPT/o)\\
	&\leq 7000kL,
	\end{align*}
	where the first inequality follows by that the number of center cells is at most $100kL$, the number points in a crucial cell is at most $1.1T_i(o)$ and Lemma~\ref{lem:points_in_non_center_cell}.
\end{proof}

\begin{lemma}[The number of samples is large]\label{lem:num_samples_in_each_level_is_enough}
	Consider $o\geq \OPT/16$. Conditioning on $\hat{f}:\bigcup_{i=0}^L G_i\rightarrow \mathbb{R}_+$ (in Algorithm~\ref{alg:sensitivity_sampler}) is good (Lemma~\ref{lem:good_samples}), $\hat{q}_0,\hat{q}_1,\ldots,\hat{q}_L$ (in Algorithm~\ref{alg:sensitivity_sampler}) are good (Lemma~\ref{lem:good_qi}), and the number of center cells is at most $100kL$, with probability at least $1-\delta/10$, $\forall i\in I$, we have:
	\begin{align*}
	\sum_{j=1}^{\hat{m}}\mathbf{1}(|\{p\in Q_i\mid h_{i,j}(p)=1\}|>0) \geq \Omega\left(\varepsilon^{-2}L d^4 \log\left(\frac{dLk}{\delta}\right)\cdot \frac{L}{\delta}\cdot \frac{\hat{q}_i}{T_i(o)}\right).
	\end{align*}
\end{lemma}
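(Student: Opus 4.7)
The plan is to estimate, for each $i \in I$, the probability that a single hash function $h_{i,j}$ selects at least one point of $Q_i$, then use independence across $j$ to concentrate the empirical count.

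First I would fix $i \in I$ and let $q = \min\{1/(10^4 kL T_i(o)),1\}$ denote the sampling probability, and $X_{i,j} = \sum_{p\in Q_i} h_{i,j}(p)$. I want a lower bound on $\Pr[X_{i,j} \geq 1]$. By Lemma~\ref{lem:crucial_points_in_each_level}, $|Q_i|/T_i(o) \leq 7000kL$, so $|Q_i|q \leq 0.7 < 1$. (If $q=1$, then every point is picked and the indicator is $1$ deterministically; so assume the nondegenerate case.) Because $h_{i,j}$ is $\lambda$-wise independent with $\lambda \geq 2$, the Bonferroni inequality yields
\[
\Pr[X_{i,j} \geq 1] = \Pr\Bigl[\bigcup_{p \in Q_i} \{h_{i,j}(p)=1\}\Bigr] \geq |Q_i| q - \binom{|Q_i|}{2} q^2 \geq \tfrac{1}{2}|Q_i| q.
\]
Furthermore, for $i \in I$ the conditioning on Lemma~\ref{lem:good_qi} gives $|Q_i| \geq 0.9\,\hat{q}_i$ (since $\hat{q}_i \geq \gamma T_i(o)$ absorbs the additive slack), so
\[
\Pr[X_{i,j} \geq 1] \;\geq\; \frac{0.45\,\hat{q}_i}{10^4 kL\, T_i(o)}.
\]

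Next I would compute the expectation of $N_i := \sum_{j=1}^{\hat m} \mathbf{1}(X_{i,j}\geq 1)$. Plugging in $\hat m = \Theta(k\varepsilon^{-3}L^4 d^7 \log(dLk/\delta)/\delta)$,
\[
\mu_i := \E[N_i] \;\geq\; \Omega\!\left(\varepsilon^{-3} L^3 d^7 \log(dLk/\delta)\cdot \tfrac{1}{\delta} \cdot \tfrac{\hat q_i}{T_i(o)}\right).
\]
Since $i \in I$ implies $\hat q_i/T_i(o) \geq \gamma = \varepsilon/(40^2 L d^3)$, we get in particular $\mu_i \geq \Omega(\varepsilon^{-2} L^2 d^4 \log(dLk/\delta)/\delta)$, which is comfortably larger than $\log(L/\delta)$.

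Because the hash functions $h_{i,1},\ldots,h_{i,\hat m}$ are chosen independently (they are distinct freshly drawn $\lambda$-wise hash functions), the indicators $\mathbf{1}(X_{i,j}\geq 1)$ are mutually independent Bernoulli variables. A standard multiplicative Chernoff bound then gives $\Pr[N_i < \mu_i/2] \leq \exp(-\mu_i/8) \leq \delta/(10(L+1))$. The claimed bound follows since $\mu_i/2$ dominates the target $\Omega(\varepsilon^{-2}L d^4 \log(dLk/\delta)\cdot (L/\delta)\cdot \hat q_i/T_i(o))$ by a factor of $\varepsilon^{-1} L d^3$, so in particular $N_i$ exceeds it. A union bound over the at most $L+1$ indices in $I$ turns the per-level failure probability into the claimed $\delta/10$.

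The one step that deserves care is the Bonferroni lower bound on $\Pr[X_{i,j} \geq 1]$: we must rely crucially on $|Q_i|q < 1$ from Lemma~\ref{lem:crucial_points_in_each_level} so that the quadratic correction $\binom{|Q_i|}{2} q^2$ is absorbed into half of the linear term. The degenerate case $q = 1$ (i.e.\ $T_i(o) \leq 1/(10^4 kL)$) must be handled separately since then every sample trivially hits $Q_i$ whenever $Q_i$ is nonempty (which it is for $i \in I$, as $\hat q_i > 0$), so $N_i = \hat m$, easily exceeding the target.
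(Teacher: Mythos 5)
Your proof is correct, and it takes a genuinely simpler route than the paper's. The paper groups the $\hat m$ hash functions into batches of size $b \approx 10^5 kL T_i(o)/|Q_i|$, applies Chebyshev's inequality within each batch (exploiting the pairwise independence of the $h_{i,j}(p)$ across $p$ and $j$) to show that each batch hits $Q_i$ with probability at least $3/4$, and then applies Chernoff across batches. You instead bound the hit probability of a \emph{single} hash function directly via the second-order Bonferroni inequality, then apply Chernoff across the i.i.d. hash functions. Both arguments hinge on the same crucial fact — that the per-point sampling rate $q$ and $|Q_i|$ satisfy $|Q_i|\,q \leq 0.7 < 1$ (a consequence of Lemma~\ref{lem:crucial_points_in_each_level}) — but you use it to control the quadratic correction in inclusion-exclusion, whereas the paper uses it to ensure the batch size $b$ is at least $1$ and the batch mean lands in a constant window $[10,20]$. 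Your route avoids the batching bookkeeping ($b$, $r$, the ``$-1$'' from the floor, and the separate justification that $r \geq 2$) entirely; its only extra obligation is the explicit nondegeneracy check $q < 1$, which you handle. Arithmetic checks out throughout: the expectation $\mu_i = \Omega(\varepsilon^{-3}L^3d^7\log(dLk/\delta)\,\delta^{-1}\,\hat q_i/T_i(o))$ exceeds the target by a factor $\varepsilon^{-1}Ld^3$, and using $\hat q_i/T_i(o) \geq \gamma$ gives $\mu_i \gg \log(L/\delta)$, so the Chernoff failure probability $\exp(-\mu_i/8)$ comfortably survives the union bound over $|I| \leq L+1$ levels.
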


\begin{proof}
	Consider a fixed $i\in I$. $\forall j\in[\hat{m}],$ by union bound, we have
	\begin{align*}
	\Pr_{h_{i,j}}\left[\exists p\in Q_i,h_{i,j}(p)=1\right]\leq  \frac{|Q_i|}{10^4kLT_i(o)} < 1,
	\end{align*}
	where the inequality follows by Lemma~\ref{lem:crucial_points_in_each_level}.
	Thus $\lceil 10^4 kLT_i(o)/|Q_i|\rceil\leq 2\cdot 10^4 kLT_i(o) / |Q_i|$. Let $b = 10 \cdot \lceil 10^4 kLT_i(o)/|Q_i|\rceil.$ Let $r = \lfloor \hat{m} / b\rfloor \geq \hat{m}\cdot |Q_i|/(2\cdot 10^5 kLT_i(o))-1$. Since $i\in I,$ we have $|Q_i|/T_i(o)\geq \frac{1}{2}\gamma$. Since $\hat{m}\geq 10^9 k L/\gamma$, we have $r\geq \hat{m}/(4\cdot 10^5 kL)\cdot |Q_i|/T_i(o).$
	For $s\in[r]$, we can define a random variable $Y_s$,
	\begin{align*}
	Y_s = \sum_{j=(s-1)\cdot b + 1}^{s\cdot b} \sum_{p\in Q_i} h_{i,j}(p).
	\end{align*}
	We have $\E[Y_s] = b\cdot |Q_i|/(10^4 k L T_i(o)).$ Thus, $E[Y_s]\in[10,20].$ Since $Y_s$ is a sum of several (at least) pairwise independent unit random variables, $\Var[Y_s]\leq 20$.
	Thus, by Chebyshev's inequality, we have 
	\begin{align*}
	\Pr\left[|Y_s-\E[Y_s]|\geq 9\right]\leq 20/81\leq 0.25.
	\end{align*}
	Define $X_s$ be the random variable such that $X_s=\mathbf{1}(Y_s\geq 1).$ Then $\E\left[\sum_{s\in [r]} X_s\right]\geq 0.75 r$.
	By Chernoff bound, we know that 
	\begin{align*}
	\Pr\left[\sum_{s\in[r]}X_s\leq 0.5 r\right]\leq 2^{-r/20}\leq 0.01\delta/L,
	\end{align*}
	where the last inequality follows by $r\geq \hat{m}/(4\cdot 10^5 kL)\cdot \gamma/2\geq 20\log(100L/\delta)$.
	
	Since $\hat{m}$ is sufficiently large, i.e.,
	\begin{align*}
	\hat{m}\geq \Omega\left(k\varepsilon^{-2} L^3d^4\log\left(\frac{dLk}{\delta}\right)\cdot\frac{1}{\delta\gamma}\right)\geq \Omega\left(k\varepsilon^{-3} L^4d^7\log\left(\frac{dLk}{\delta}\right)\cdot\frac{1}{\delta}\right),
	\end{align*}
	we have
	\begin{align*}
	r\geq \hat{m}/(4\cdot 10^5 kL)\cdot |Q_i|/T_i(o) \geq  \Omega\left(\varepsilon^{-2}L d^4 \log\left(\frac{dLk}{\delta}\right)\cdot \frac{L}{\delta}\cdot \frac{\hat{q}_i}{T_i(o)}\right),
	\end{align*}
	where the last inequality follows by $\hat{q}_i=\Theta(|Q_i|)$.
	Notice that 
	\begin{align*}
	\Pr\left[\sum_{j=1}^{\hat{m}}\mathbf{1}(|\{p\in Q_i\mid h_{i,j}(p)=1\}|>0) \geq 0.5r \right] \geq \Pr\left[\sum_{s\in[r]} X_s\geq 0.5 r\right]\geq 1-0.01\delta/L.
	\end{align*}
	Thus, with probability at least $1-0.01\delta/L$,
	\begin{align*}
	\sum_{j=1}^{\hat{m}}\mathbf{1}(|\{p\in Q_i\mid h_{i,j}(p)=1\}|>0) \geq \Omega\left(\varepsilon^{-2}L d^4 \log\left(\frac{dLk}{\delta}\right)\cdot \frac{L}{\delta}\cdot \frac{\hat{q}_i}{T_i(o)}\right).
	\end{align*}
	By taking the union bound over $i\in I$, we complete the proof.
\end{proof}

\begin{lemma}[Sampling stage succeeds]\label{lem:sampling_stage_suc}
	Consider $o\geq \OPT/16$. Conditioning on $\hat{f}:\bigcup_{i=0}^L G_i\rightarrow \mathbb{R}_+$ (in Algorithm~\ref{alg:sensitivity_sampler}) is good (Lemma~\ref{lem:good_samples}), $\hat{q}_0,\hat{q}_1,\ldots,\hat{q}_L$ (in Algorithm~\ref{alg:sensitivity_sampler}) are good (Lemma~\ref{lem:good_qi}), and the number of center cells is at most $100kL$, if \textsc{Sampling}$(o,\varepsilon,\delta)$ does not output \textbf{FAIL} before line~\ref{sta:uniform_sample_stage} of Algorithm~\ref{alg:sensitivity_sampler}, then with probability at least $1-\delta/5$, it will not output \textbf{FAIL}.
\end{lemma}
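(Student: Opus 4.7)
The plan is to show that the only way \textsc{Sampling}$(o,\varepsilon,\delta)$ can output \textbf{FAIL} after reaching line~\ref{sta:uniform_sample_stage} is if line~\ref{sta:find_a_sample} finds no admissible $j\in A_i$ for some iteration. I will rule this out for every iteration with the required probability, by matching an upper bound on the number of sample requests per level with a lower bound on the supply of usable hash indices.

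First, I would invoke Lemma~\ref{lem:num_samples_requested_each_level} to obtain, with probability at least $1-\delta/10$ and for every $i\in I$, an upper bound of $O\big(\varepsilon^{-2}Ld^4\log(t'/\delta)\cdot (L/\delta)\cdot (\hat{q}_i/T_i(o))\big)$ on the number of times level $i$ is selected in line~\ref{sta:choose_i}. Then Lemma~\ref{lem:num_samples_in_each_level_is_enough} yields, with probability at least $1-\delta/10$ and for every $i\in I$, a lower bound of $\Omega\big(\varepsilon^{-2}Ld^4\log(dLk/\delta)\cdot (L/\delta)\cdot (\hat{q}_i/T_i(o))\big)$ on the number of indices $j\in[\hat{m}]$ for which the set $\{p\in Q_i\mid h_{i,j}(p)=1\}$ is non-empty; I will call such indices \emph{hash-good}. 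Since $t'\le 10^6 d^3 L k$ by Lemma~\ref{lem:bound_of_tprime}, the two polylogarithmic factors satisfy $\log(t'/\delta)=O(\log(dLk/\delta))$, so choosing the hidden constant in the definition of $\hat{m}$ large enough makes the lower bound strictly dominate the upper bound for every $i\in I$.

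Next I would argue that every hash-good index is actually admissible. Because \textsc{Sampling} has not output \textbf{FAIL} before line~\ref{sta:uniform_sample_stage}, none of the $L+1$ instances of \textsc{Storing} in line~\ref{sta:store_samples} has returned \textbf{FAIL}, and by line~\ref{sta:guarantee_storing_correctly} every crucial cell $C\in\mathcal{C}_i$ satisfies $f_i(C)\le\beta$. Inspecting Algorithm~\ref{alg:storing} together with Lemma~\ref{lem:cell_point_storing}, a non-\textbf{FAIL} \textsc{Storing} is always \emph{correct}: its output $(\mathcal{C}_i,f_i,S_i)$ records exactly the non-empty cells, their frequencies, and all items $(p,j)$ in cells of weight at most $\beta$. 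In particular, for any crucial cell $C$ and any $j$ with a point $p\in C\cap Q$ such that $h_{i,j}(p)=1$, the pair $(p,j)$ lies in $S_i$. Combined with Fact~\ref{fac:sampling_implementation}, this implies that a hash-good $j$ is admissible in line~\ref{sta:find_a_sample} as long as it still sits in $A_i$.

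Finally, the update rule $A_i\leftarrow\{j+1,\ldots,\hat{m}\}$ consumes admissible indices in strictly increasing order, so the number of successful iterations at level $i$ equals the number of hash-good indices in $[\hat{m}]$. Putting the upper bound on requests and the lower bound on hash-good indices together, every iteration of line~\ref{sta:find_a_sample} succeeds for every $i\in I$, and \textsc{Sampling} does not \textbf{FAIL}. A union bound over the two Markov/concentration failure events gives total failure probability at most $\delta/5$, matching the claim. The main technical point to be careful about is the polylogarithmic matching: it is precisely the a priori bound $t'=O(d^3Lk)$ from Lemma~\ref{lem:bound_of_tprime} that allows $\log(t'/\delta)$ to be absorbed into $\log(dLk/\delta)$, so the constants in $\hat{m}$ can be chosen once and for all independently of the data.
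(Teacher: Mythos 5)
Your proof is correct and follows essentially the same route as the paper: invoke Lemma~\ref{lem:num_samples_requested_each_level} for the per-level demand bound, Lemma~\ref{lem:num_samples_in_each_level_is_enough} for the per-level supply bound, Lemma~\ref{lem:bound_of_tprime} to reconcile the $\log(t'/\delta)$ versus $\log(dLk/\delta)$ factors, and a union bound over the two $\delta/10$ events. You actually spell out two points the paper's one-line proof leaves implicit (the polylog matching via the $t'$ bound, and the fact that a non-\textbf{FAIL} \textsc{Storing} combined with line~\ref{sta:guarantee_storing_correctly} makes every hash-good index admissible), which is a useful elaboration; the only small imprecision is the phrase that the number of successful iterations at level $i$ \emph{equals} the number of hash-good indices---it should read ``is at most,'' since a request might never arrive for every available index, but the inequality is what the argument needs.
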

\begin{proof}
	According to Lemma~\ref{lem:num_samples_requested_each_level} and Lemma~\ref{lem:bound_of_tprime}, with probability at least $1-\delta/10$, $\forall i\in I$, the sampling procedure will not request too many samples from level $i$.
	According to Lemma~\ref{lem:num_samples_in_each_level_is_enough}, with probability at least $1-\delta/10$, the number of samples needed for each level $i\in I$ is enough. Thus, with probability at least $1-\delta/5$, the algorithm will not output \textbf{FAIL}.
\end{proof}

\begin{lemma}[Samples can fit into the space]\label{lem:samples_fit_into_the_space}
	Suppose $o\geq \OPT/16$. Conditioning on $\hat{f}:\bigcup_{i=0}^L G_i\rightarrow \mathbb{R}_+$ (in Algorithm~\ref{alg:sensitivity_sampler}) is good (Lemma~\ref{lem:good_samples}), if the total number of center cells is at most $100kL$,
	with probability at least $1-\delta/5$, \textsc{Sampling}$(o,\varepsilon,\delta)$ (Algorithm~\ref{alg:sensitivity_sampler}) will not output \textbf{FAIL} in line~\ref{sta:store_samples} nor line~\ref{sta:guarantee_storing_correctly}.
\end{lemma}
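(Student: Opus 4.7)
The plan is to verify, for every level $i \in \{0,1,\ldots,L\}$, two separate conditions whose joint satisfaction guarantees that neither line~\ref{sta:store_samples} nor line~\ref{sta:guarantee_storing_correctly} returns \textbf{FAIL}. First, the number of non-empty cells of $G_i$ in the sub-stream $\{((p_t,j),\pm) \mid j \in [\hat{m}],\ h_{i,j}(p_t)=1\}$ fed into \textsc{Storing}$(G_i,\alpha,\beta,0.1\delta/L)$ must be at most $\alpha$; by Lemma~\ref{lem:cell_point_storing} this (plus the built-in $0.1\delta/L$ failure probability) ensures line~\ref{sta:store_samples} does not FAIL. Second, every crucial cell $C \in G_i$ must satisfy $f_i(C) \le \beta$, where $f_i(C) = \sum_{p \in C\cap Q}\sum_{j\in[\hat{m}]}h_{i,j}(p)$, to avoid FAIL in line~\ref{sta:guarantee_storing_correctly}.

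For the first condition, I would split cells of $G_i$ into center and non-center cells. Center cells contribute at most $100kL$ across all grids, hence at most $100kL$ non-empty cells at each level trivially. For non-center cells, Lemma~\ref{lem:points_in_non_center_cell} combined with $o \ge \OPT/16$ bounds the total number of non-center points in $G_i$ by $6400\,k\,T_i(o)$. Each such point appears in the sub-stream paired with $j$ iff $h_{i,j}(p)=1$, an event of probability $1/(10^4 kL\,T_i(o))$. Since the number of occupied cells is at most the total number of sampled pairs, its expectation is at most
\[
6400\,k\,T_i(o) \cdot \frac{\hat{m}}{10^4\,k\,L\,T_i(o)} \;=\; O(\hat{m}/L),
\]
which is $o(\alpha)$ given the chosen parameters. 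Applying Theorem~\ref{thm:lambda_wise_concentration} to this $\lambda$-wise independent sum (across $p$ and $j$) with the large independence parameter $\lambda = \Theta(dL + \log(1/\delta))$ yields a deviation bound sufficient for a union bound over all $L+1$ levels, so with probability at least $1-\delta/10$ no level exceeds the $\alpha$ budget.

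For the second condition, fix a crucial cell $C \in G_i$. Since $C$ is crucial it is not heavy, so $|C\cap Q|\le 1.1\,T_i(o)$, giving
\[
\E[f_i(C)] \;=\; |C\cap Q|\cdot \frac{\hat{m}}{10^4\,k\,L\,T_i(o)} \;\le\; \frac{1.1\,\hat{m}}{10^4\,k\,L},
\]
which is much less than $\beta$ for the chosen $\hat{m}$ and $\beta$. Theorem~\ref{thm:lambda_wise_concentration} applied to the $\lambda$-wise independent sum $f_i(C)$ drives the deviation probability to $O(2^{-\lambda/2}) \le \delta/(\Delta^d L)$, and a union bound over all (at most $\Delta^d$) cells of $G_i$ and the $L+1$ grids gives $f_i(C)\le\beta$ for every crucial cell with probability at least $1-\delta/10$.

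Combining the two high-probability events with the $O(\delta/L)$ intrinsic failure probabilities of the $L+1$ \textsc{Storing} instances (Lemma~\ref{lem:cell_point_storing}) and taking a final union bound yields the $1-\delta/5$ success bound claimed. The main obstacle is choosing $\lambda$ large enough and verifying that $\hat{m}, \alpha, \beta$ are simultaneously calibrated so that (i) the non-empty-cell count is well below $\alpha$ in expectation with concentration strong enough for a union bound over levels, and (ii) the per-cell mean of $f_i(C)$ is well below $\beta$, with the tail bound strong enough to union bound over potentially many cells; this all rests on the $\Theta(dL+\log(1/\delta))$-wise independence of the hashes together with the factor-$1/(kL)$ sampling rate that makes $\E[f_i(C)]$ small for crucial cells while keeping $\E[\#\text{non-empty cells}]$ small via the non-center-points bound.
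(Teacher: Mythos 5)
Your proposal takes essentially the same route as the paper: the same two-part decomposition (bounding the number of non-empty cells per level by $\alpha$, and bounding the sample count per crucial cell by $\beta$), the same center/non-center split driven by Lemma~\ref{lem:points_in_non_center_cell} and the $100kL$ center-cell bound, the same expectation calculations using the $1/(10^4 kL\,T_i(o))$ sampling rate, the same use of Theorem~\ref{thm:lambda_wise_concentration} for concentration of the $\lambda$-wise independent sums, and the same final union bound with the intrinsic $0.1\delta/L$ failure probability of each \textsc{Storing} instance from Lemma~\ref{lem:cell_point_storing}. One small nit: as written, summing $\delta/10 + \delta/10 + (L{+}1)\cdot 0.1\delta/L$ gives roughly $3\delta/10$, slightly exceeding $\delta/5$; the paper tightens the first two contributions to $\delta/20$ each. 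This is a constant-factor matter easily fixed by sharpening the concentration constants, and does not affect the argument's correctness.
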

\begin{proof}
	Consider $i\in\{0,1,\ldots,L\}$ 
	and a cell $C\in G_i$ which is marked as crucial by \textsc{PointsEstimation}$(o,\varepsilon,\delta/2)$.
	\begin{align*}
	\E\left[\sum_{j=1}^{\hat{m}}\sum_{p\in C\cap Q} h_{i,j}(p)\right] &= \hat{m}\cdot |C\cap Q|/ (10^4 kL T_i(o)) \\
	&\leq O(\varepsilon^{-3}L^3d^7\log(dLk/\delta)\delta^{-1}).
	\end{align*}
	By Theorem~\ref{thm:lambda_wise_concentration},
	\begin{align*}
	\Pr\left[\sum_{j=1}^{\hat{m}}\sum_{p\in C\cap Q} h_{i,j}(p)>\E\left[\sum_{j=1}^{\hat{m}}\sum_{p\in C\cap Q} h_{i,j}(p)\right] + \varepsilon^{-3}L^3d^7\log(dLk/\delta)\delta^{-1} \right]\leq (\delta/\Delta^d)^5.
	\end{align*}
	Thus, by taking union bound, with probability at last $1-\delta/10$, $\forall i\in\{0,1,\ldots,L\},$ any cell $C\in G_i$ which is marked as crucial, the total number of points sampled in $C$ is at most $\beta$. Thus, with probability at least $1-\delta/20$, \textsc{Sampling}$(o,\varepsilon,\delta)$ (Algorithm~\ref{alg:sensitivity_sampler}) will not output \textbf{FAIL} in line~\ref{sta:guarantee_storing_correctly}.
	
	Consider $i\in\{0,1\ldots,L\}$. Let us analyze the number of cells in $G_i$ which contain at least $1$ sample points.
	The number of points which are not in the center cell is at most
	\begin{align*}
	\frac{\OPT}{(g_i/(2d))^2}\leq 400 k T_i(o)\cdot \OPT/o\leq 6400 k T_i(o).
	\end{align*} 
	Thus, the expected number of sampled points in non-center cell is at most $O(k\varepsilon^{-3}L^3d^7\log(dLk/\delta)\cdot 1/\delta)$. Since the number of center cells is at most $100kL$, by Theorem~\ref{thm:lambda_wise_concentration}, with probability at least $1-\delta/(100L)$, the number of cells in $G_i$ which contain at least $1$ sample points is $O(k\varepsilon^{-3}L^3d^7\log(dLk/\delta)\cdot 1/\delta)$.
	By taking union bound over all $i\in\{0,1,\ldots,L\},$ with probability at least $1-\delta/20$, $\forall i\in\{0,1,\ldots,L\}$, the number of sampled cell in $G_i$ is at most $\alpha$.
	
	Due to Lemma~\ref{lem:cell_point_storing}, with probability at least $1-\delta/10$, none of the \textsc{Storing}$(G_i,\alpha,\beta,0.1\delta/L)$ in line~\ref{sta:store_samples} of Algorithm~\ref{alg:sensitivity_sampler} will output \textbf{FAIL}.
	By union bound over all the failure probabilities, we complete the proof.
\end{proof}

\begin{lemma}[The overall success probability]\label{lem:the_overall_success_prob}
	Suppose $o\geq \OPT/16$ and the total number of center cells is at most $100kL$. With probability at least $1-4\delta/5$, \textsc{Sampling}$(o,\varepsilon,\delta)$ (Algorithm~\ref{alg:sensitivity_sampler}) will not output \textbf{FAIL}.
\end{lemma}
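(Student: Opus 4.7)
The plan is to identify every point at which Algorithm~\ref{alg:sensitivity_sampler} can emit \textbf{FAIL} and bound the failure probability at each point by one of the earlier lemmas, then combine them by a union bound. The failure modes are: (i) the call to \textsc{PointsEstimation}$(o,\varepsilon,\delta/2)$ in line~\ref{sta:run_dynamic_partition} returns \textbf{FAIL}; (ii) the quantities $\hat{f}$ and $\hat{q}_0,\ldots,\hat{q}_L$ returned by \textsc{PointsEstimation} fail to be ``good'' in the sense of Lemmas~\ref{lem:good_estimation_for_each_cell}~and~\ref{lem:qi_esimate_well} (if they are bad then the downstream conditioning in Lemmas~\ref{lem:samples_fit_into_the_space}~and~\ref{lem:sampling_stage_suc} is not guaranteed); (iii) one of the \textsc{Storing} instances in line~\ref{sta:store_samples} returns \textbf{FAIL}; (iv) the check in line~\ref{sta:guarantee_storing_correctly} triggers \textbf{FAIL}; and (v) the loop at line~\ref{sta:uniform_sample_stage} reaches line~\ref{sta:find_a_sample} without finding a suitable $j$.

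The first step is to apply Lemma~\ref{lem:point_estimation_suc_prob} with parameter $\delta/2$ to bound failure mode (i) by $3\delta/20$, using the hypotheses $o\geq \OPT/16$ and that the number of center cells is at most $100kL$. On the same event (i.e., \textsc{PointsEstimation} not failing), I would then apply Lemmas~\ref{lem:good_estimation_for_each_cell}~and~\ref{lem:qi_esimate_well} each with parameter $\delta/2$ to control failure mode (ii) by $\delta/20+\delta/20=\delta/10$. Conditioned on $\hat{f}$ being good and on the center-cell bound, Lemma~\ref{lem:samples_fit_into_the_space} bounds failure modes (iii) and (iv) by $\delta/5$, and conditioned additionally on $\hat{q}_0,\ldots,\hat{q}_L$ being good, Lemma~\ref{lem:sampling_stage_suc} bounds failure mode (v) by $\delta/5$ provided no earlier \textbf{FAIL} has occurred.

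A single union bound over the five bad events then yields a total failure probability at most
\[
\tfrac{3\delta}{20}+\tfrac{\delta}{20}+\tfrac{\delta}{20}+\tfrac{\delta}{5}+\tfrac{\delta}{5} \;=\; \tfrac{13\delta}{20} \;\le\; \tfrac{4\delta}{5},
\]
which is the desired conclusion. The only subtlety is bookkeeping: every conditional lemma is invoked on the intersection of the events already established in the previous steps, so one must be careful that the conditioning hypotheses of Lemmas~\ref{lem:samples_fit_into_the_space}~and~\ref{lem:sampling_stage_suc} (namely, a good $\hat{f}$ and good $\hat{q}_i$ together with the center-cell bound) are exactly the events already secured. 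Since none of these steps involves any genuinely new probabilistic argument and the budget $4\delta/5$ leaves slack, no step is a real obstacle; the only care required is to invoke \textsc{PointsEstimation} with the halved parameter $\delta/2$ so that Lemmas~\ref{lem:good_estimation_for_each_cell}~and~\ref{lem:qi_esimate_well} contribute $\delta/20$ rather than $\delta/10$ each to the union bound.
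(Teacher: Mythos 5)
Your proposal is correct and follows essentially the same route as the paper's proof: enumerate the points where \textsc{Sampling} can emit \textbf{FAIL}, bound each using Lemmas~\ref{lem:point_estimation_suc_prob}, \ref{lem:good_estimation_for_each_cell}, \ref{lem:qi_esimate_well}, \ref{lem:samples_fit_into_the_space}, and \ref{lem:sampling_stage_suc}, and combine by a union bound. Your accounting is in fact slightly tighter than the paper's (you correctly track the $\delta/2$ parameter into Lemmas~\ref{lem:good_estimation_for_each_cell}--\ref{lem:qi_esimate_well} to get $\delta/10$ where the paper loosely uses $\delta/5$), but both land comfortably under the $4\delta/5$ budget.
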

\begin{proof}
	Due to Lemma~\ref{lem:point_estimation_suc_prob}, \textsc{PointsEstimation}$(o,\varepsilon,\delta/2)$ in line~\ref{sta:run_dynamic_partition} will not output \textbf{FAIL} with probability at least $1-3\delta/20$. By Lemma~\ref{lem:good_estimation_for_each_cell} and Lemma~\ref{lem:good_qi}, with probability at least $1-\delta/5$, $\forall i\in\{0,1,\ldots,L\},$ either $\hat{q}_i\in |Q_i|\pm 0.1\varepsilon\gamma T_i(o)$ or $\hat{q}_i\in (1\pm 0.01\varepsilon)|Q_i|,$ and $\forall C\in G_i$, either $\hat{f}(C)\in|C\cap Q|\pm 0.1 T_i(o)$ or $\hat{f}(C)\in (1\pm 0.01) |C\cap Q|.$ Then by Lemma~\ref{lem:samples_fit_into_the_space}, with probability at least $1-\delta/5$, \textsc{Sampling}$(o,\varepsilon,\delta)$ will not output \textbf{FAIL} in line~\ref{sta:store_samples} nor line~\ref{sta:guarantee_storing_correctly}.
	Finally, according to Lemma~\ref{lem:sampling_stage_suc}, with probability at least $1-\delta/5$, the algorithm does not output \textbf{FAIL}. By taking union bound over all the bad events, with probability at least $1-3\delta/20-\delta/5-\delta/5-\delta/5\geq 1-4\delta/5$, \textsc{Sampling}$(o,\varepsilon,\delta)$ (Algorithm~\ref{alg:sensitivity_sampler}) will not output \textbf{FAIL}.
\end{proof}

\begin{lemma}[Total space of Algorithm~\ref{alg:sensitivity_sampler}]\label{lem:total_space_sampling_stream}
	\textsc{Sampling}$(o,\varepsilon,\delta)$ uses space at most $O(k\varepsilon^{-6}L^8d^{15}\delta^{-2}\cdot\log^4(kLd/(\varepsilon\delta)))$ bits.
\end{lemma}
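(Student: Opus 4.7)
The plan is to add up the space used by each component of \textsc{Sampling}$(o,\varepsilon,\delta)$ (Algorithm~\ref{alg:sensitivity_sampler}) and show that the $L+1$ parallel \textsc{Storing} instances in line~\ref{sta:store_samples} are the dominating term. Since no probabilistic argument is required here, this is essentially a careful bookkeeping exercise: for each line of the algorithm that allocates storage, I plug in the declared parameters and track the polynomial and logarithmic factors.

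First I would bound the nested call \textsc{PointsEstimation}$(o,\varepsilon,\delta/2)$ in line~\ref{sta:run_dynamic_partition} by $O(k\varepsilon^{-3}L^4 d^5 \log^2(kLd/(\varepsilon\delta)))$ bits using Lemma~\ref{lem:space_points_estimation}; this is strictly dominated by the target bound. Second, I would account for the hash functions $h_{i,j}$, which are $\lambda$-wise independent on $[\Delta]^d$ with $\lambda = O(dL+\log(1/\delta))$. The seed of each such hash function takes $O(\lambda \cdot dL)$ bits, and there are $(L+1)\hat{m}$ of them with $\hat{m} = O(k\varepsilon^{-3}L^4d^7\delta^{-1}\log(kLd/\delta))$, so the total hash-function storage is $O(k\varepsilon^{-3}L^6 d^9 \delta^{-1}\log(kLd/\delta))$ bits, again well within the target.

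The main contribution, the only one matching the exponents in the statement, comes from the $L+1$ parallel \textsc{Storing}$(G_i,\alpha,\beta,0.1\delta/L)$ instances in line~\ref{sta:store_samples}. By Lemma~\ref{lem:cell_point_storing}, each one uses $O(\alpha\beta dL \log^2(\alpha\beta L/\delta))$ bits. With $\alpha = \Theta(k\varepsilon^{-3}L^3d^7\delta^{-1}\log(kLd/\delta))$ and $\beta = \Theta(\varepsilon^{-3}L^3d^7\delta^{-1}\log(kLd/\delta))$, the product $\alpha\beta$ yields a factor $k\varepsilon^{-6}L^6 d^{14}\delta^{-2}\log^2(kLd/\delta)$; multiplying by the additional $dL$ per instance, then by the $L+1$ parallel copies, and finally absorbing the $\log^2(\alpha\beta L/\delta) = O(\log^2(kLd/(\varepsilon\delta)))$ factor from Lemma~\ref{lem:cell_point_storing}, gives exactly $O(k\varepsilon^{-6}L^8 d^{15}\delta^{-2}\log^4(kLd/(\varepsilon\delta)))$ bits.

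Finally I would observe that the remaining bookkeeping --- the index set $I$, the threshold sets $A_i$ (each represented by an $O(\log\hat{m})$-bit pointer), the scalars $\hat{q}_i$ and $t'$, and the output multiset $S$ of size $m = O(t'\varepsilon^{-2}Ld\log(t'/\delta)) = O(k\varepsilon^{-2}L^2 d^4 \log(kLd))$ where each entry is stored in $O(dL)$ bits --- contributes only lower-order terms. Summing all contributions produces the claimed bound. The only delicate point, and the place where I would be careful, is cleanly combining the $\log$-factors contributed by $\alpha$, $\beta$, and Lemma~\ref{lem:cell_point_storing} so that everything collapses into the single $\log^4(kLd/(\varepsilon\delta))$ factor in the statement; no new technical obstacle beyond this arithmetic arises.
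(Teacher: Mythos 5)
Your proposal is correct and matches the paper's approach: bound \textsc{PointsEstimation} via Lemma~\ref{lem:space_points_estimation}, bound the $L+1$ parallel \textsc{Storing} instances via Lemma~\ref{lem:cell_point_storing} with $\alpha,\beta$ plugged in, and observe everything else is lower order. The paper's own proof is terser (it omits the explicit accounting of hash-function seeds and the output set $S$, which you correctly identify as dominated), and your hash-function tally drops a factor of $L$ (it should be roughly $L^7$ rather than $L^6$ from the extra $(L+1)$ grid levels), but this does not affect the dominance argument or the final bound.
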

\begin{proof}
	According to Lemma~\ref{lem:space_points_estimation}, \textsc{PointsEstimation}$(o,\varepsilon,\delta/2)$ in line~\ref{sta:run_dynamic_partition} of Algorithm~\ref{alg:sensitivity_sampler} takes the space $O(k\varepsilon^{-3}L^4d^5\cdot \log(kLd/(\varepsilon\delta)))$ bits. 
	According to Lemma~\ref{lem:cell_point_storing}, line~\ref{sta:store_samples} takes the total space:
	\begin{align*}
	O(L\cdot \alpha\beta dL\cdot\log^2(\alpha\beta/\delta)) = 
	O(k\varepsilon^{-6}L^8d^{15}\delta^{-2}\cdot\log^4(kLd/(\varepsilon\delta))).
	\end{align*}
\end{proof}

\subsection{The Final Algorithm}\label{sec:final}

Finally, we prove the guarantees of Algorithm~\ref{alg:final}, the final algorithm.

\begin{lemma}[Correctness and success probability]\label{lem:final_correctness}
	With probability at least $0.9$, \textsc{DynamicCoreset}$(\varepsilon)$ (Algorithm~\ref{alg:final}) outputs an $\varepsilon$-coreset for $Q$ and the size of the coreset is at most $O(k\varepsilon^{-2}L^2d^4\log(kLd))$.
\end{lemma}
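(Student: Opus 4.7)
The plan is to split the analysis into the case where $|Q|$ is small enough that the \textsc{Distinct} subroutine in line~\ref{sta:recover_all} can recover $Q$ exactly, and the case where it cannot; within the latter, I must identify a single ``good'' guess $o_{u_0}$ that lies in the narrow window $[\OPT/16,\OPT]$. Throughout I will condition on the random-shift event of Lemma~\ref{lem:num_center_cell}, namely that at most $100kL$ cells across all levels are center cells, which holds with probability at least $0.94$.

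If the call \textsc{Distinct}$(10000k,0.001)$ in line~\ref{sta:recover_all} does not return \textbf{FAIL}, then by Lemma~\ref{lem:k_set} we have $|Q|\le 10000k$ and the algorithm outputs $Q$ itself with probability at least $0.999$; $Q$ is a trivial $0$-coreset whose size is within $h=\Theta(k\varepsilon^{-2}L^2d^4\log(kLd))$ for a suitably chosen hidden constant. Otherwise $|Q|>10000k$, and I first need a preliminary lower bound $\OPT\gtrsim k$. For $n$ distinct lattice points the variance satisfies $\sum_{p\in C}\|p-\bar p\|^2=\frac{1}{2|C|}\sum_{p,q\in C}\|p-q\|^2\ge (n-1)/2$, since any two distinct lattice points have squared distance at least $1$; summing over $k$ clusters of an optimal solution yields $\OPT\ge (|Q|-k)/2>2000k$. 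Combined with the trivial upper bound $\OPT\le |Q|\cdot d\Delta^2\le d\Delta^{d+2}$, the index $u_0:=\max\{u\in[2dL]:o_u\le\OPT\}$ is well defined (after checking the boundary $o_{2dL}\ge\OPT$) and satisfies $o_{u_0}>\OPT/2\ge\OPT/16$.

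For this distinguished $u_0$, Lemma~\ref{lem:the_overall_success_prob} shows that \textsc{Sampling}$(o_{u_0},\varepsilon,0.001/(dL))$ does not return \textbf{FAIL} except with probability $\tfrac{4}{5}\cdot 0.001/(dL)$, and Lemma~\ref{lem:correctness_of_sampler}, applied with $o=o_{u_0}\le\OPT$, then guarantees that conditional on non-failure the output is an $\varepsilon$-coreset of size $O(k\varepsilon^{-2}L^2d^4\log(kLd)\cdot(\OPT/o_{u_0}+1))=O(k\varepsilon^{-2}L^2d^4\log(kLd))\le h$ except with probability $\tfrac{1}{5}\cdot 0.001/(dL)$. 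Hence $u_0$ passes both acceptance tests of the algorithm, which forces $u^*\le u_0$ and therefore $o_{u^*}\le\OPT$; a second invocation of Lemma~\ref{lem:correctness_of_sampler} on the winning $u^*$ certifies $S^*$ as an $\varepsilon$-coreset, and $|S^*|\le h$ by the acceptance test. A union bound over the at most $2dL$ candidate values of $u$ (correctness event), plus the shift, \textsc{Distinct}, and $u_0$-success events, keeps the total failure probability below $0.1$.

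The step I expect to require the most care is the asymmetry between the two sampling lemmas: Lemma~\ref{lem:correctness_of_sampler} needs $o\le\OPT$ for the coreset property while Lemma~\ref{lem:the_overall_success_prob} needs $o\ge\OPT/16$ for non-failure, and both guarantees must be available at the same guess $o_{u_0}$. This tension is what drives the preliminary $\OPT\gtrsim k$ bound (so that the geometric grid $\{50k\cdot 2^u\}_u$ actually intersects $[\OPT/16,\OPT]$) and forces the separate treatment of the small-$|Q|$ regime via \textsc{Distinct}. Once $o_{u_0}\ge\OPT/2$ is secured, the size bound $|S^*|=O(k\varepsilon^{-2}L^2d^4\log(kLd))$ follows routinely from Lemma~\ref{lem:correctness_of_sampler}.
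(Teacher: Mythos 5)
Your proposal is correct and follows essentially the same structure as the paper's proof: split on whether the \textsc{Distinct} call succeeds, condition on the random-shift event of Lemma~\ref{lem:num_center_cell}, exhibit one good guess $o_{u_0}\in[\OPT/16,\OPT]$, apply Lemma~\ref{lem:the_overall_success_prob} for non-failure and Lemma~\ref{lem:correctness_of_sampler} for correctness, and then union-bound over the $O(dL)$ smaller guesses $u'\le u_0$ so that whichever $u^*$ wins, its output is certified. The only real difference is that you make explicit two calculations the paper states without justification: the lower bound $\OPT\ge(|Q|-k)/2$ via the identity $\sum_{p\in C}\|p-\bar p\|^2=\tfrac{1}{2|C|}\sum_{p,q\in C}\|p-q\|^2$ and the minimum lattice distance, and the choice of $u_0=\max\{u:o_u\le\OPT\}$ with $o_{u_0}>\OPT/2$ together with the boundary check $o_{2dL}\ge\OPT$. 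These are useful refinements of the same argument rather than a different route, and they correctly resolve the ``tension'' you flag between the two sampling lemmas by verifying that the geometric grid $\{50k\cdot 2^u\}$ actually meets the window $[\OPT/16,\OPT]$ once $|Q|>10000k$.
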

\begin{proof}
	If $|Q|\leq 10000k$, then according to Lemma~\ref{lem:k_set}, with probability at least $0.999$, the entire data set $Q$ will be returned by line~\ref{sta:recover_all} in Algorithm~\ref{alg:final}.
	
	Consider the case when $|Q|\geq 10000k$. 
	According to Lemma~\ref{lem:num_center_cell}, with probability at least $0.94$, the total number of center cells is upper bounded by $100kL$. Now, we condition on this happens.
	Since $|Q|\geq 10000k$, we know that $\OPT\geq 1000k$. There exists $u\in[2dL]$ such that $o_u\in[\OPT/16,\OPT]$. According to Lemma~\ref{lem:the_overall_success_prob}, with probability at least $0.999$, \textsc{Sampling}$(o_u,\varepsilon,0.001/(dL))$ will not output \textbf{FAIL}. By Lemma~\ref{lem:correctness_of_sampler}, with probability at least $0.999$, the set $S$ returned by \textsc{Sampling}$(o_u,\varepsilon,0.001/(dL))$ is an $\varepsilon$-coreset and $|S|\leq O(k\varepsilon^{-2}L^2d^4)$.
	Thus, with probability at least $0.998$, \textsc{DynamicCoreset}$(\varepsilon)$ (Algorithm~\ref{alg:final}) will not output \textbf{FAIL}.
	Consider another $u'<u$. If \textsc{Sampling}$(o_{u'},\varepsilon,0.001/(dL))$ does not output \textbf{FAIL}, and the set $S'$ returned has size at most $h$, then according to Lemma~\ref{lem:correctness_of_sampler}, with probability at least $1-0.001/(dL),$ $S'$ is an $\varepsilon$-coreset for $Q$. By taking union bound over all the such $u'$, then with probability at least $0.999$, $S^*$ returned by \textsc{Sampling}$(o_{u^*},\varepsilon,0.001/(dL))$ is an $\varepsilon$-coreset for $Q$. By taking union bound over all the bad events, we complete the proof.
\end{proof}

\begin{lemma}[Total space needed for Algortihm~\ref{alg:final}]\label{lem:final_space}
	\textsc{DynamicCoreset}$(\varepsilon)$ (Algorithm~\ref{alg:final}) uses space at most $O(k\varepsilon^{-6}L^{11}d^{18}\log^4(kLd/\varepsilon))$ bits.
\end{lemma}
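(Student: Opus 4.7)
The plan is a straightforward accounting of the space used by each sub-component of Algorithm~\ref{alg:final} that runs in parallel over the single pass of the stream. Inspecting the pseudocode, the algorithm stores only (i) one copy of \textsc{Distinct}$(10000k, 0.001)$ from line~\ref{sta:recover_all}; (ii) $2dL$ parallel copies of \textsc{Sampling}$(o_u,\varepsilon,0.001/(dL))$ from line~\ref{sta:parallel_sampling}, one per candidate guess $o_u=2^u\cdot 50k$; and (iii) $O(dL)$ scalar words to hold the threshold $h$, the guesses $o_u$, and the bookkeeping needed to find the minimum $u^*$. The grid shift vector $v$ from Definition~\ref{def:grid_structure} and the random seeds for all hash families can be stored or regenerated on the fly within these three budgets, so no additional term arises.

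First I would dispatch (i) using Lemma~\ref{lem:k_set} with $s=10000k$, $\delta=0.001$, and $M=N=\Delta^{2d}$, yielding $O(k(dL)\log k)$ bits, which is absorbed in the final bound. For (iii), storing $O(dL)$ numbers of magnitude at most $\poly(k,\Delta,1/\varepsilon)$ costs $O(dL\cdot\log(k\Delta/\varepsilon))=O(dL\cdot\log(kLd/\varepsilon))$ bits, again negligible.

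The dominant contribution is (ii), and here I would invoke Lemma~\ref{lem:total_space_sampling_stream} directly: each \textsc{Sampling}$(o,\varepsilon,\delta')$ occupies $O(k\varepsilon^{-6}L^8 d^{15}(\delta')^{-2}\log^4(kLd/(\varepsilon\delta')))$ bits. Plugging in $\delta'=0.001/(dL)$ gives $(\delta')^{-2}=O(d^2L^2)$, and since $\delta'$ is polynomially bounded in $dL$ we have $\log(kLd/(\varepsilon\delta'))=O(\log(kLd/\varepsilon))$, hence $\log^4(kLd/(\varepsilon\delta'))=O(\log^4(kLd/\varepsilon))$. Multiplying through, one copy of \textsc{Sampling} uses
\[
O\!\left(k\varepsilon^{-6} L^{10} d^{17}\log^4(kLd/\varepsilon)\right)
\]
bits, and running $2dL$ such copies in parallel multiplies this by $2dL$, yielding the claimed
\[
O\!\left(k\varepsilon^{-6} L^{11} d^{18}\log^4(kLd/\varepsilon)\right)
\]
bits. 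Summing (i)+(ii)+(iii) leaves (ii) as the dominant term, which matches the statement.

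There is essentially no hard step here; the main thing to be careful about is making sure the parameter substitution $\delta'=0.001/(dL)$ is propagated consistently through both the $(\delta')^{-2}$ factor and the logarithmic factor, and that no auxiliary structure (e.g., the shift vector $v$, the hash seeds, or the per-guess scalars) is overlooked. Since all hash families used in \textsc{PointsEstimation} and \textsc{Sampling} are $\lambda$-wise independent with $\lambda=O(dL+\log(1/\delta))$, their seeds fit in $O(\lambda\log\Delta)=O(dL\cdot\log(kLd/\varepsilon))$ bits per family and are already subsumed by Lemma~\ref{lem:total_space_sampling_stream}, so no recount is needed.
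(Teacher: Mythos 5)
Your proposal is correct and follows essentially the same approach as the paper's proof: invoke Lemma~\ref{lem:total_space_sampling_stream} for each of the $\Theta(dL)$ parallel copies of \textsc{Sampling} with $\delta=0.001/(dL)$, so the $\delta^{-2}$ factor contributes $(dL)^2$ and the $dL$-fold parallelism contributes one more $dL$, yielding $O(k\varepsilon^{-6}L^{11}d^{18}\log^4(kLd/\varepsilon))$. Your extra bookkeeping of the \textsc{Distinct} call, the scalar storage, and the hash seeds is a welcome bit of rigor but does not change the argument.
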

\begin{proof}
	\textsc{DynamicCoreset}$(\varepsilon)$ (Algorithm~\ref{alg:final}) runs $\Theta(dL)$ copies of \textsc{Sampling}$(o_u,\varepsilon,0.001/(dL))$. 
	By Lemma~\ref{lem:total_space_sampling_stream}, the total space needed is 
	\begin{align*}
	O(dL\cdot k\varepsilon^{-6}L^8d^{15}\cdot (dL)^2\cdot\log^4(kLd/\varepsilon)) = O(k\varepsilon^{-6}L^{11}d^{18}\log^4(kLd/\varepsilon))
	\end{align*}
	bits.
\end{proof}

\begin{proofof}{Theorem~\ref{thm:main}}
	The algorithm is shown by Algorithm~\ref{alg:final}. Lemma~\ref{lem:final_correctness} shows the correctness and the success probability of the algorithm. Lemma~\ref{lem:final_space} shows the total space needed by the algorithm.
\end{proofof}

\section{Why Do Previous Techniques Fail?}
\label{sec:previous algorithm fail}


In this section, we describe some previous techniques in more detail and explain why they fail in our setting.

\paragraph{Uniform sampling method.}
\citep{fis05} is one of the early papers using sampling procedures to solve dynamic streaming geometric problems. They showed that it is possible to use point samples from a dynamic point set to solve several geometric problems, e.g., Euclidean Minimum Spanning Tree.
However, they only showed how to implement \emph{uniform} sampling by using counting distinct elements and subsampling procedure as subroutines. In our setting, we require different sampling probabilities for different points.
Although the bottom-level uniform sampling scheme of ours is similar to theirs, our overall sampling method is more complicated.


\paragraph{Estimating the cost.}
\cite{i04} used a critical observation to \emph{estimate} the cost of $k$-median, that is: let $Z$ be a set of centers and $P$ be the point set, then $\cost(Z, P) = \int_{0}^{\infty} |P - B(Z, r)| \mathrm{d} r$, where $B(Z, r)$ is the union of balls of radius $r$ centered at all points in $Z$.
Then this integration is approximated by a summation with logarithmic levels, i.e.,  $\int_{0}^{\infty} |P - B(Z, r)| \mathrm{d} r \approx \sum_{i=0}^{\infty}  \big|P - B(Z, r^{(i+1)}) \big| \cdot \left(r^{(i+1)}-r^{(i)}\right)$, where $r^{(i)}= O(\epsilon(1+\epsilon)^i)$. The critical part is to estimate $\big| P-B(Z, r^{(i)}) \big|$.
\cite{i04} constructed a counting data structure based on grids with side length $O\left(r^{(i)}\right)$.
Then every input point is snapped to a grid point.  To obtain sufficiently accurate estimates for all $\left|P-B(Z, r^{(i)}) \right|$, the data structure needs to query $|Z| / \epsilon^{O(d)}$ many grid points per $Z$. Such a data structure is implemented using pair-wise independent hash functions, and uses memory $|Z|/\epsilon^{O(d)}$.
Notice that this method only gives an \emph{estimate} of the cost, and does not construct a coreset. In order to obtain a $k$-median solution, an exhaustive search is needed.
Furthermore, this technique fails to extend to $k$-means which does not have an integration formula for the cost function.


\paragraph{Exponential size coreset.}
\cite{fs05} constructed an $\epsilon$-coreset of size $k\epsilon^{-O(d)}$ for $k$-means and $k$-median.
They also used the same grid structure as we use.
A cell is marked as ``heavy'' if the cell contains enough points such that moving all points in the cell to the center of this cell incurs too much error in the optimal cost of $k$-means/median. Since the side-lengths of cells decrease as level increases, the number of points required to have this effect becomes larger. Eventually, all cells are non-heavy after some level. As such we also have a tree of heavy cells.
The coreset is constructed by looking at each heavy cell and assigning each point in its non-heavy children cells to its center.
It turns out that if we want an $\epsilon$-coreset, the threshold of non-heavy cells is exponential in $d$, i.e.,  each non-heavy cell in level $i$ cannot contain more than $\wt{O} \left(\epsilon^{O(d)} \cdot \OPT / 2^i \right)$ points. This small threshold gives rise to $\wt{O} \left(1/ \epsilon^{O(d)}\right)$ many heavy cells.

\paragraph{Insertion-only streams.}
Many of the previous \emph{insertion-only} streaming coreset construction algorithms (e.g., \citep{fs12}) heavily depend on a ``merge-reduce'' technique, i.e. reading some points in the stream, constructing a coreset, reading another part, constructing a new coreset, and then merging the two coresets. This procedure is repeated until the stream ends. This technique works well in the insertion-only streaming model, but it fails immediately when deletions are allowed. Although~\cite{bfl16} gave a new framework other than merge-reduce, their algorithm relies on a non-deleting structure of data streams as well.

\begin{figure}[t]
	\centering
	\includegraphics[width=0.8\textwidth]{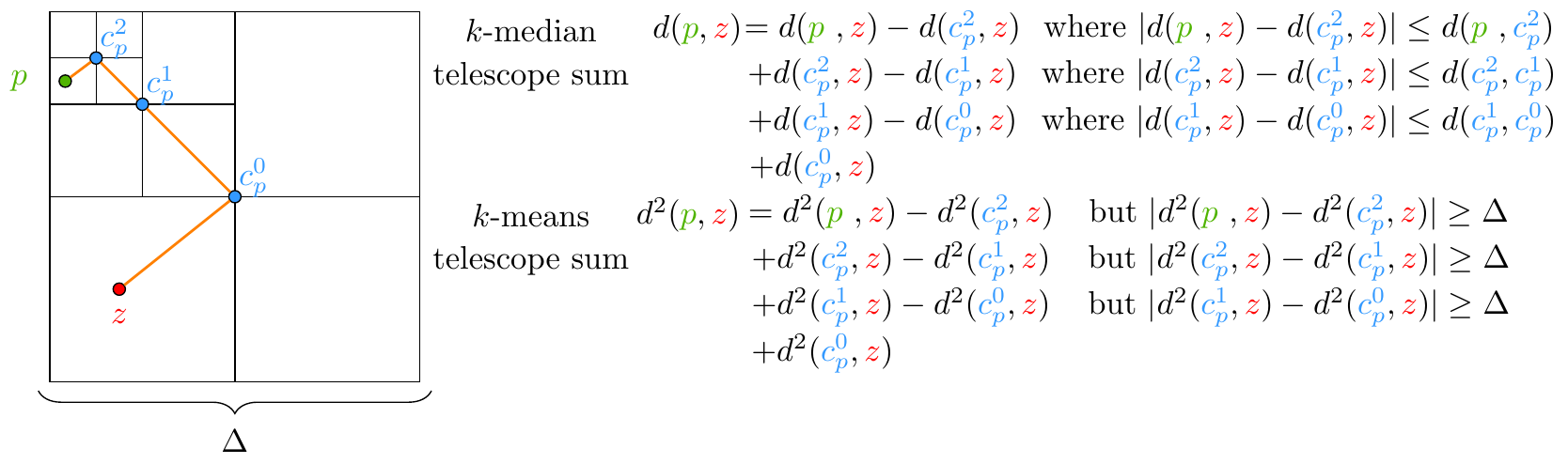}
	\caption{\small Telescope sum \citep{bflsy17} fails for $k$-means. In the $k$-median problem, for a fixed set of centers $Z$, the total cost can be written as a telescope sum $\sum_{p\in P} (\dist(c^i_p, Z) - \dist(c^{i-1}_p, Z))$. For each piece, $|\dist(c^i_p, Z) - \dist(c^{i-1}_p, Z)|$ is always upper bounded by $\dist(c^{i-1}_p, c^{i}_p)$ which is independent from the choice of $Z.$ However, in the $k$-means problem, the telescope sum of the total cost is $\sum_{p\in P} (\dist(c^i_p, Z)^2 - \dist(c^{i-1}_p, Z)^2)$. For each piece, the upper bound of $|\dist(c^i_p, Z)^2 - \dist(c^{i-1}_p, Z)^2|$ may depend on the location of $Z$, and it can be larger than $\Delta$ in the worst case.}
	\label{fig:telescope-sum}
\end{figure}

\paragraph{Algorithm for $k$-median only.}
Although some $k$-median coreset construction techniques can be easily extended to $k$-means (see e.g.~\citep{fs12,bfl16}), those constructions can only be implemented in the insertion-only streaming model.
\cite{bflsy17} gave a $k$-median coreset construction in the dynamic streaming model, but their construction cannot be extended to $k$-means directly, as we explain now. Their $k$-median algorithm heavily relies on writing the cost of each point as a telescope sum. 
For example, we consider the $1$-median problem. let $z$ be a candidate center point and $p \in P$ be a point, then
$\dist(p, z) = \dist(p, z) - \dist(c^{L-1}_p, z) + \dist(c^{L-1}_p, z) - \dist(c^{L-2}_p, z) + \cdots - \dist(c^0_p, z)$,
where each $c^i_p$ is the center of the cell in the $i$-th level containing $p$.
Therefore, the total $1$-median cost $\sum_{p\in P} \dist(p,z)$ of point set $P$ on $z$ can be split into $L$ pieces,
i.e., $\sum_{p\in P} (\dist(c^i_p, z) - \dist(c^{i-1}_p, z))$ for each $i \in [ L ]$. \cite{bflsy17} estimated each of the $L$ pieces by sampling points, i.e., let $S^i$ be the samples in the $i$-th level, then the estimator of $\sum_{p\in P} (\dist(c^i_p, z) - \dist(c^{i-1}_p, z))$ is $\sum_{p\in S^i} (\dist(c^i_p, z) - \dist(c^{i-1}_p, z))/\zeta_p^i$ where $\zeta_p^i$ is the probability that point $p$ is sampled. A crucial observation is that we have $|\dist(c^i, z) - \dist(c^{i-1}, z)| \le \Delta/2^i$ -- the cell size in level $i$ which is independent of the location of $z$. Using this nice upper bound on $|\dist(c^i, z) - \dist(c^{i-1}, z)|$, \cite{bflsy17} applied Bernstein inequality to get high concentration of the estimator $\sum_{p\in S^i} (\dist(c^i_p, z) - \dist(c^{i-1}_p, z))/\zeta_p^i$ with only $\tilde{O}(1/\epsilon^2)$ samples per level.
However, this framework does not work for $1$-means even though one can still write the telescope sum structure $\sum_p( \dist^2(c^i_p, z) - \dist^2(c^{i-1}_p, z))$ and can still setup an estimator $\sum_{p\in S^i}( \dist^2(c^i_p, z) - \dist^2(c^{i-1}_p, z))/\zeta_p^i$.
But $|\dist^2(c_p^i, z) - \dist^2(c_p^{i-1}, z)|$ is not upper bounded by the cell size anymore. Instead, it depends on the location of $z$. For example, it can be as large as $|\dist^2(c_p^i, z) - \dist^2(c_p^{i-1}, z)| \ge \Delta$. See Figure~\ref{fig:telescope-sum}.
If we apply Bernstein inequality here, 
we will need too many samples to save any space.


%





\end{document}